\documentclass{article}
\usepackage{graphicx,fancyhdr,amsmath,amssymb,amsthm,subfig,url,hyperref}
\usepackage[margin=1in]{geometry}
\usepackage[shortlabels]{enumitem}
\usepackage{bbm}
\usepackage{wrapfig}
\usepackage{xcolor}
\usepackage[ruled,vlined,linesnumbered]{algorithm2e}
\usepackage[full]{complexity}
\usepackage{todonotes}


\graphicspath{{/figures/}}

\theoremstyle{definition}
\newtheorem{defin}{Definition}
\newtheorem{lemma}{Lemma}
\newtheorem{theorem}{Theorem}

\newtheorem{corollary}[defin]{Corollary}
\newtheorem{claim}{Claim}
\newtheorem{proposition}{Proposition}
\newtheorem{property}{Property}

\newenvironment{customprop}[1]
  {\innercustomprop}
  {\endinnercustomprop}
  
\newenvironment{customlemma}[1]
  {\innercustomlemma}
  {\endinnercustomlemma}

\newcommand{\ep}{\epsilon}
\newcommand{\val}{\mathsf{Val}}
\newcommand{\appeal}{\mathsf{Appeal}}
\newcommand{\dmax}{\delta_{\text{max}}}
\newcommand{\emax}{\epsilon_{\text{max}}}
\DeclareMathOperator*{\argmax}{arg\,max}

\fancypagestyle{plain}{}
\pagestyle{fancy}
\fancyhf{}
\fancyhead[RO,LE]{}

\fancyfoot[RO,LE]{\sffamily\bfseries\thepage}

\newcommand{\diff}{\mathsf{diff}}

\graphicspath{{figures/}}


\title{The Smoothed Complexity of Policy Iteration for Markov Decision Processes}
\date{}
\author{
    Miranda Christ\\
    Columbia University\\
    \texttt{mchrist@cs.columbia.edu}
    \and 
    Mihalis Yannakakis\\
    Columbia University\\
    \texttt{mihalis@cs.columbia.edu}
}
\begin{document}
\maketitle

\begin{abstract}
We show subexponential lower bounds (i.e., $2^{\Omega (n^c)}$) on the smoothed complexity of the classical Howard's Policy Iteration algorithm for Markov Decision Processes. The bounds hold for the total reward and the average reward criteria. The constructions are robust in the sense that the subexponential bound holds not only on the average for independent random perturbations of the MDP parameters (transition probabilities and rewards), but for all arbitrary perturbations within an inverse polynomial range. We show also an exponential lower bound on the worst-case complexity for the simple reachability objective.
\end{abstract}

\newpage

\tableofcontents

\newpage

\section{Introduction}
{\em Markov Decision Processes (MDP)} are a fundamental model for dynamic optimization in a stochastic environment with applications in many areas, including operations research, artificial intelligence, game theory, robotics, control theory, and verification. They were originally introduced by Bellman \cite{Bellman} and have been studied extensively since then; see \cite{Howard,Derman,Puterman} for general expositions.  We will define formally MDPs in Section 2, but we give here an informal brief description. MDPs are an extension of Markov chains with an agent, who can affect the evolution of the chain. An MDP consists of a set of states and a set of possible actions that the agent can take at each state, where each action yields a reward to the agent, and results in a probabilistic transition to a new state. Execution of the MDP starts at some state and then moves (probabilistically) in discrete steps from state to state according to the action selected by the agent in each step. The problem is to find an optimal {\em policy} for the agent, i.e. choice of action in each step, that maximizes a desired objective, such as the expected total reward collected during the execution. Although the agent is allowed in each step to use randomization in their choice of action and to base their decision on the complete past history, it is known that there is always a so-called {\em positional} optimal policy that is deterministic and memoryless, i.e. it depends only on the state and selects a unique action for each state.

In some applications (for example in verification and control theory), the objective is not based on rewards, but rather the goal is to maximize the probability that the execution that is generated satisfies a desirable property (expressed for example in a temporal logic); see e.g. \cite{Vardi85,CY95,CY98,BaierAFK18}. It is known that for a broad range of properties this problem reduces to the case of a simple reachability objective, where the goal is to hit a certain target state in a larger MDP that combines the desired property and the original MDP. The reachability objective can be viewed as a special case of the total reward objective (see Section 2), thus the solution methods for reward-based MDPs can be used also for the class of applications that seek to optimize the probability of a desirable execution.

MDPs can be solved in polynomial time using Linear Programming.  From an MDP, one can construct a Linear Program (LP), whose basic feasible solutions (bfs) correspond to positional policies of the MDP, and the optimal bfs yields the optimal policy. The usual way however of solving MDPs in practice is using the {\em Policy Iteration} (PI) algorithm of Howard \cite{Howard}. This is essentially a local search algorithm, an iterative algorithm which starts with an initial positional policy, and keeps improving it until it arrives at an optimal (positional) policy. In each iteration, the algorithm
computes the value for each state according to the current policy and
 determines whether switching the selected action for a state would improve its value; if there are such switchable states, then their actions are switched to obtain the new policy, otherwise the policy is optimal, i.e., in this case local optimality guarantees global optimality. If at some point there are multiple switchable states,
and/or multiple choices of a new action that improves the value for a state, then there is flexibility on which states the algorithm chooses to switch and to which actions, resulting in different versions of Policy Iteration. The most commonly used version, called {\em Howard's PI} (or {\em Greedy PI}), switches simultaneously all switchable states to their most ``appealing" action (see Section 2 for the formal definition). At the other extreme one may choose to switch only one of the switchable states, where the choice of the state and the new action is based on some criterion. We refer to these choices as pivoting rules, in analogy with the Simplex algorithm. Indeed, there is a close correspondence between the variants of Policy Iteration where only one state is switched in each iteration and Simplex applied to the LP for the MDP. Howard's PI corresponds to performing simultaneously many Simplex pivots. 

The (worst-case) time complexity of Howard's PI was open for a long time, until it was finally resolved by Fearnley in \cite{fearnley2010exponential}, who showed an exponential lower bound under the total reward and the average reward objectives.
This was extended to the discounted reward objective in \cite{hollanders2012complexity} for discount factors
that are exponentially close to 1 (in discounted reward MDPs, future rewards are discounted by a given discount factor $\gamma <1$).
For constant discount factor $\gamma$ however, or even if $1-\gamma > 1/poly$, Howard's PI runs in strongly polynomial time \cite{Ye}; this holds more generally even in 2-player turned-based stochastic games for the analogous strategy improvement algorithm \cite{HansenMZ}.
The complexity of PI where only one state is switched in each iteration was studied earlier by Melekopoglou and Condon \cite{melekopoglou1994complexity}, who gave exponential lower bounds for several pivoting rules. More recently, the close connection between single-switch Policy Iteration for MDPs and Simplex for LPs has been exploited to show exponential or subexponential lower bounds for Simplex under various open pivoting rules, by first showing the results for MDPs and then translating them to Simplex: this was shown for the Random-Facet and the Random-Edge rules in \cite{friedmann2011subexponential}, for 
Cunningham's rule in \cite{Cunningham-rule}, and for Zadeh's rule in \cite{Zadeh-rule}. There is ongoing extensive literature on the complexity of Policy Iteration, studying various variants (e.g. randomized PI, geometric PI etc.), special cases (e.g. deterministic MDP) and/or  improving the bounds \cite{scherrer13,hollanders16,taraviya19,wu22}.

Thus, although PI runs fast in practice, its worst-case complexity is exponential for Howard's PI, as well as other variants. This is similar to the behavior of the Simplex algorithm, and more generally a host of other local search algorithms for various optimization problems. To provide a more realistic explanation for the observed performance of Simplex, Spielman and Teng introduced the {\em smoothed analysis} framework \cite{spielman2004smoothed}, a hybrid between worst-case and average-case analysis. 
On one hand, average-case complexity is an algorithm's expected runtime given a probability distribution over inputs. On the other hand, we can think of worst-case complexity as the maximum of an algorithm's expected runtime over all input distributions, including those with all probability mass on a single input.
The smoothed complexity of an algorithm is its maximum expected runtime over all input distributions \textit{with some smoothness constraint}. 
For example, an input is picked arbitrarily by an adversary and then
its parameters (for example the entries of the matrix in LP,
the rewards and transition probabilities in an MDP) are perturbed randomly according to a distribution with density function bounded by a parameter $\phi$ (for example, uniform in $[-\phi,\phi]$, Gaussian or some other distribution). The smoothed complexity of the algorithm then is the expected running time as a function of the input size $n$ and $\phi$. Ideally we would like to have polynomial time in $n$ and $\phi$. Note this is useful if $\phi$ is polynomially bounded in $n$ (or constant), because for exponentially large $\phi$ (i.e. perturbations that are sharply concentrated), polynomial time in $n$ and $\phi$ is simply exponential time, which is not useful.
Smoothed analysis may capture runtime in practice more effectively than worst-case analysis, especially when the numerical values in the input may have some natural variation, as problems formulated from the real world often do.  Spielman and Teng showed that the Simplex algorithm under a certain pivoting rule has polynomial smoothed complexity  \cite{spielman2004smoothed} (and there is a series of subsequent papers simplifying the proof and improving on the bounds, eg. \cite{deshpande05,dadush20}).  

Smoothed analysis has since been applied to a range of problems in areas such as mathematical programming, machine learning, numerical analysis, etc. \cite{spielman2009}. In the area of combinatorial optimization, it has been applied to local search algorithms for problems such as the Traveling Salesperson Problem (TSP), Max-Cut and others. It has been shown for example that the simple 2-Opt algorithm for TSP has polynomial smoothed complexity \cite{TSPsmoothed}, in contrast to its worst-case exponential complexity \cite{Lueker}. For Max-Cut, the simple Flip algorithm has smoothed complexity that is at most quasi-polynomial for general graphs \cite{etscheid2017smoothed,chen2020smoothed} and polynomial for the complete graph \cite{angel2017local,bibak2019improving}, again in contrast to its worst-case exponential complexity \cite{schaffer1991simple}.

Given the good empirical performance of PI and its relationship to the Simplex algorithm, it is natural to hypothesize that the smoothed complexity of PI may well be also polynomial.
Note that this does not follow from the result for Simplex, despite their strong connection, for various reasons.
First, in the smoothed model for Linear Programming all the numerical parameters are randomly perturbed independently. In the MDP, we want to perturb similarly the rewards and transition probabilities, however we want the perturbed model to be also an MDP, in particular the transition probabilities for each action must sum to 1.
Second, in the LP smoothed model, all entries of the constraint matrix are perturbed randomly, even those that are 0; if we apply such perturbation to
the LP of an MDP, it will have the effect of introducing arbitrary
new transitions that have no justification.
In defining the smoothed model for an MDP, it is more natural to preserve the structure of the MDP
(i.e. available actions at each state and possible transitions for each action), since there are usually constraints in the application that is modeled by the MDP that determine which transitions can or cannot occur from a state for each action.
On the other hand, the rewards and transition probabilities may well be estimates, and thus for them it is reasonable to allow perturbations.
Thus, in our smoothed model for MDP, 
we preserve the structure of the MDP, and allow perturbations of the (nonzero) transition probabilities and rewards.

In the literature on smoothed complexity, both models have been used, the {\em full perturbation} model, where all numerical parameters are perturbed, including those that are 0, and what we may call the {\em structured model}, where only the nonzero parameters are perturbed and the structure of the input is preserved. 
For example the analysis of Simplex uses the full perturbation model. Work on local search algorithms for combinatorial optimization have used both models.
For example, in the case of the FLIP algorithm for Max Cut, 
\cite{angel2017local,bibak2019improving} use the full perturbation model and show that the smoothed complexity is polynomial. On the other hand, \cite{etscheid2017smoothed,chen2020smoothed} use the structured model and show that smoothed complexity is quasi-polynomial for every graph; note that the full model coincides with the structured model in the special case when the input graph is complete.
Although the structured part of the input (the graph) is not perturbed, thus it allows for arbitrarily complex, "pathological" instances, the smoothening of the numerical parameters (the edge weights) brings the complexity down from exponential to quasi-polynomial; the conjecture is in fact that the true smoothed complexity is polynomial.

Depending on the application, one or the other model may be more reasonable. In the case of MDPs, we believe that the structured model is more natural for the reasons discussed above. The MDP typically models an application at hand (for example, a probabilistic program that is analyzed, a control design problem, a game etc.), and the transitions have some meaning in the application. The precise values of probabilities and rewards may be fungible, but their existence is important. Changing the structure of the instance changes the problem, or may even render it meaningless.

For example, consider an MDP with a reachability objective.
If perturbations are applied also to the zero-probability (i.e. nonexistent) transitions, then in the perturbed MDP every possible transition between any two states will be included with nonzero probability. This means that for any policy the graph of the MDP becomes strongly connected, every policy will reach the target with probability 1, and the problem has disappeared.

The issue of preserving the (zero-nonzero) structure is especially important when the model is used to formulate and solve other problems. For a simple example, consider the following: MDPs with
rewards can be used to solve the simple reachability optimization problem for MDPs (without rewards), since the latter can be viewed as a special case (can be reduced to) the former:
All transitions of the reachabilty MDP are given 0 reward, except for the transitions into the target state that are given reward 1; maximizing the expected total reward in the resulting MDP is equivalent to maximizing the probability of reaching the target state in the reachability MDP.
If in the reward MDP we are allowed to perturb the zero rewards then the problem has changed, and optimization in the MDP with rewards no longer correctly captures the MDP reachability problem.

\subsection{Our Results}

In this paper we study the smoothed complexity of Policy Iteration.
Given its similarity to the Simplex algorithm, one might hope to show polynomial smoothed upper bounds for PI.
We show the contrary: for several prominent policy iteration variants, such a result is impossible; the smoothed complexity is subexponential or even exponential. We concentrate here mainly on the total reward objective. 

Our main result concerns the classical Howard's (Greedy) PI which switches simultaneously all switchable states to their actions with greatest appeal.
We show that Howard's PI has at least subexponential smoothed complexity under the total reward objective; a similar result holds for the average reward objective. Furthermore, the lower bound holds not only for
the expected complexity under random independent perturbations of the parameters,  but it holds in fact
for {\em all (arbitrary)} perturbations within a certain inverse polynomial range. 
(The amount of perturbation corresponds to the $1/\phi$ parameter of the smoothed model, so to be meaningful, $\phi$ has to be at most polynomial.)
Specifically, we construct an MDP with $N$ states and bounded parameters (rewards and transition probabilities), such that in every MDP obtained by perturbing the parameters by any amount up to $1/N$, Howard's PI requires at least $2^{\Omega(N^{1/3})}$ iterations.

Our initial approach for this was to examine whether the 
construction of \cite{fearnley2010exponential} for the worst-case complexity can be modified to prove a smoothed lower bound. However, we were not able to do this. Unfortunately, the construction seems to be brittle and does not hold up under perturbations.
Thus, we started fresh and designed a new construction with robustness in mind.
The construction and the proof are quite involved. This is to be expected, considering that the construction of \cite{fearnley2010exponential} was also quite intricate. That construction involved positive and negative rewards, exponentially small probabilities, and exponentially large rewards. We show that the parameters do not need to be exponentially large or small, and furthermore they can tolerate arbitrary perturbations up to an inverse polynomial, without affecting the behavior of Howard's PI algorithm. 

Furthermore, we use the robustness of our construction for MDP with rewards, to show that the worst-case complexity of Howard's PI for MDPs with the simple reachability objective is exponential. Note that these MDPs have no rewards 
(or as mentioned above they are a special case of MDPs with rewards 0 and 1).
In some sense, this second construction is an approximate reduction from MDPs with rewards to the special case of reachability MDP.
The robustness of the original reward MDP is essential to establish the correctness of the result for the weaker reachability MDP.

We also analyze three simple variants of PI from \cite{melekopoglou1994complexity} that switch a single (switchable) state in each iteration, chosen according to some rule.
In {\em Simple PI} the state is chosen according to an arbitrary initial priority order; in {\em Topological PI} it is chosen according to a topological order; and in {\em Difference PI} it is chosen
according to the difference in value between the new and the old
action of the state; see Section 2 for a formal definition of the variants.
We make slight modifications to the constructions from \cite{melekopoglou1994complexity} and prove that they are robust to perturbations. These constructions are  reachability MDPs; thus the only numerical parameters are the transition probabilities, there are no rewards (or equivalently, all the rewards are 0 except for the transitions to the target state that have reward 1).
Simple PI and Topological PI take exponential time, for very large (constant) perturbations of the transition probabilities. Difference PI takes at least subexponential time for inverse polynomial perturbations. 

We finally discuss the relationship between our results for the Single switch PI variants and the Simplex algorithm, describing precisely how our perturbations of an MDP translate to the corresponding LP.
We state the lower bounds implied by our Policy Iteration results for Bland's and Dantzig's pivot rules in the Simplex algorithm for LPs arising from MDPs, though these bounds are not new for Simplex.

\subsection{Outline of proof techniques}
The construction and proof of the main result on Howard's (Greedy) PI are quite complex and involved. We first design a new construction of an MDP $M_1$ for the exponential worst-case complexity of Greedy PI, which is more amenable to modifications to achieve the desired robustness. As is usual in exponential lower bounds for many problems, the MDP is constructed so that the iterations of Greedy PI will simulate a binary counter counting from  0 up to $2^n$. There is a set of $n$ states $b_i$ of the MDP (among many others), each with two distinguished actions 0, 1, where the choices of states $b_i$ correspond to the bits of the counter. The MDP $M_1$ is constructed so that if in the initial policy all states $b_i$ choose action 0, then Greedy PI will go through $2^n$ rounds until it arrives at the optimal policy where all states $b_i$ choose action 1. Each round involves a number of steps. To manage the complexity of the construction, we build it in stages.
We first design a simpler MDP $M_0$, which exhibits this exponential (worst-case) behavior for a slight variant of Greedy PI, call it Hybrid PI, which has an additional rule that a switchable state $b_i$ can switch from action 0 to 1 only if all $b_j$ with $j<i$ have chosen action 1,
and no other non-$b_i$ states are switchable.
We then modify $M_0$ to an MDP $M_1$ by using a suitable gadget at the states $b_i$ which serves the purpose of delaying the switches at states $b_i$ when running Greedy PI on $M_1$ in such a way that it behaves like Hybrid PI on $M_0$.
As a result, Greedy PI on $M_1$ simulates a binary counter and takes exponential time.

The MDP $M_1$ has exponentially large and small rewards (both positive and negative), and exponentially small probabilities. The next stage in the proof transforms $M_1$ to another MDP $M_2$ that has bounded rewards and probabilities, and which is robust in the sense that Greedy PI has the same behavior for any perturbation of the rewards and probabilities up to an inverse polynomial amount.
This transformation is done using appropriate gadgets. We design gadgets to simulate exponentially large and exponentially small rewards and transition probabilities, and ensure that the gadgets are robust, i.e., they perform correctly (approximately) even under perturbation of their rewards and probabilities.  Finally, we ensure that the analysis for $M_1$ is robust enough, so that the behavior of Greedy PI on it is simulated by $M_2$ even under perturbation of its parameters.

The proof for the worst-case exponential complexity of Greedy PI under the reachability objective uses the constructed MDP $M_1$ for the total reward objective (with somewhat modified parameters). The MDP $M_1$ has positive and negative rewards, whereas there are no rewards in the reachability objective.
We design suitable gadgets to eliminate positive and negative rewards using random actions, and apply them to transform $M_1$ to a new MDP $M_3$, without rewards, for the reachability problem. An important requirement for the correct functioning of the gadgets is that we must know bounds on the minimum and maximum value of the nodes where the gadgets are plugged in, which we have from the analysis of $M_1$. The robustness of the MDP $M_1$ is critical for the correctness of the transformation,
i.e. so that the behavior of Greedy PI under the reachability objective in $M_3$ simulates the behavior of Greedy PI in $M_1$ under the total reward objective.

The proofs for the results on the variants of PI with a single switch use the constructions of \cite{melekopoglou1994complexity}, sometimes with some small modifications. The proofs are relatively simple and offer a gentle introduction to the issues, and the unfamiliar reader might like to read this section first. In the case of Simple PI and Topological PI, the analysis follows closely that of \cite{melekopoglou1994complexity}, except that it is carried out for general values of the transition probabilities, rather than specific values. In the case of Difference PI, we use a parameterized gadget with suitable choice of parameters to modify the construction in such a way that it can tolerate perturbations of the transition probabilities within an inverse polynomial range.

\medskip

\noindent{\bf Organization of the paper.}
The rest of the paper is organized as follows.
Section 2 gives basic definitions and notation.
Section 3, which is the heart of the paper, shows that Greedy (Howard's) PI has at least subexponential smoothed complexity under the total reward objective. 
Section 4 builds on our construction to show the worst-case exponential complexity of Greedy PI under the simple reachability objective.
Section 5 presents the results for three PI variants with single state switch: Simple PI, Topological PI and Difference PI, and Section 6 notes the connection to Simplex pivoting rules.

\section{Preliminaries}
A Markov Decision Process consists of a (finite) set of states $S$, and a (finite) set $A_s$ of available actions for each state $s \in S$. Let $A = \cup_{s \in S} A_s$ denote the set of all actions.
For each action $a \in A_s$ there is a probability distribution of the state(s) resulting when taking action $a$ at state $s$ that is described by a function $p : S \times S \times A \to \mathbb{R}^+$, where $p(s' | s, a)$ denotes the probability of ending up at state $s'$ when taking action $a$ from $s$.
The action $a$ is {\em deterministic} if 
$p(s' | s, a)=1$ for some $s'$  and 
$p(s" | s, a)=0$ for all other $s"$.
Each action yields some (possibly zero) reward, represented by a function $r : S \times A \to \mathbb{R}$ where $r(s, a)$ denotes the reward obtained by taking action $a \in A_s$ from $s$.
 A (positional) \emph{policy} is a function $\pi : S \to A$, where for each state $s \in S$, $\pi(s) \in A_s$ is the action selected at that state. A policy $\pi$ for an MDP $M$ induces a Markov chain $M_{\pi}$ on the same state set $S$, where the transition probabilities out of each state $s$ are given
 by $p(s'|s, \pi(s))$.
 
 A \emph{criterion} (or {\em objective}) is a function that, given a policy, associates a value with each state. 
We consider primarily the \emph{total reward} criterion, which yields the following notions of value and appeal. The value of a state $s$ captures the expectation of the sum of rewards accrued by starting at $s$ and taking the actions given in the policy as time goes to infinity; it is well-defined for MDPs where one must eventually reach a sink state, one that has no actions and no outgoing transitions.
 Under the total reward criterion, the \emph{value} of a state $s$ under a policy $\pi$ satisfies the equation
$$\val^\pi(s) = r(s, \pi(s)) + \sum_{s' \in S} p(s' | s, \pi(s)) \cdot \val^\pi(s')$$

Given policy $\pi$, the \emph{appeal} of an action $a$ at $s$ is
$$\appeal^\pi(s, a) = r(s, a) + \sum_{s' \in S} p(s'|s,a) \cdot \val^\pi(s')$$

 An {\em optimal policy} is a policy that maximizes the value of every state (there is always such a policy).
 
We later consider the \emph{reachability} criterion, where the goal is to maximize the probability of reaching a given target sink state $s^*$. In this case, the
value of a state $s$ under a policy $\pi$ is the probability of reaching the target $s^*$  following the actions selected in $\pi$.
The reachability criterion can be viewed as a special case of the total reward criterion, by assigning reward zero to all transitions except for those going from other states into the target state $s^*$, which are assigned reward 1. An {\em optimal policy} is a policy that maximizes the value of every state.  

{\em Policy Iteration} or {\em Policy Improvement} (PI) is a family of local search algorithms used to find an optimal policy of an MDP. We say a state $s$ is \emph{switchable} under a current policy $\pi$ if there is an action $a \in A_s$ such that $\appeal^\pi(s, a) > \val^\pi(s)$. 
We also say any such value-improving action $a$ is switchable.
In each iteration, PI switches some number of switchable states to their value-improving actions. An optimal policy is reached when no states are switchable.
There are several PI variants, which involve various switching rules for choosing the state(s) and actions(s) to switch in each iteration.

The most widely used variant, {\em Howard's PI}  (or \emph{Greedy PI}) involves switching \emph{all} switchable states in each iteration.
A switchable state with multiple switchable actions is switched to the action with greatest appeal.
More precisely, given that the current policy is $\pi$, Greedy PI switches each switchable state $s$ to an action in $\argmax_{a \in A_s} \appeal^\pi(s, a)$.

We discuss also several variants of PI that switch only one state in each iteration,  \emph{Simple PI}, \emph{Topological PI}, and \emph{Difference PI}.
Simple policy iteration fixes an ordering over the states and switches the highest-numbered switchable state. If there are multiple improving actions at a state, one of them is chosen according to some rule;
in the constructions we discuss, every state has only two actions, so there is no choice of improving action.
Topological policy iteration considers a topological ordering over the states, where if there is a path (sequence of actions with nonzero probability) from a state $i$ to a state $j$, the order of $i$ is at least the order of $j$;
that is, the graph is partitioned into strongly connected components and each state is assigned the index of its component in a topological order. Topological PI switches the highest-numbered switchable state of the component with lowest topological order that contains switchable states.
Difference policy iteration switches the switchable state with the greatest difference between the appeal of the action it switches to its current value.

We represent MDPs graphically, where states are vertices and actions are directed edges. We sometimes use this terminology in our discussion.
Each deterministic action is shown as a directed edge between two nodes.
Each probabilistic action is shown as starting as a single line at the origin state and branching into multiple lines to reach the various possible resulting states. 
Given a policy $\pi$, the set of directed edges corresponding to the selected actions form a subgraph of the MDP (this is the graph of the Markov chain $M_{\pi}$). 
That is, a node $s$ selecting action $a$ under $\pi$ has a directed edge to every node $s'$ that $a$ takes $s$ to with nonzero probability.
We say a node $s'$ is \emph{reachable} from $s$ if there exists a path from $s$ to $s'$ in this subgraph.

\paragraph{Smoothed model.} 
The smoothed analysis framework lies between average-case analysis and worst-case analysis. It considers input instances with each parameter drawn independently from some probability distribution (e.g., Gaussian, uniform or any other distribution) with an upper bound $\phi$ on its density function.
An algorithm $A$ has polynomial smoothed complexity if the maximum expected runtime of $A$ over all such distributions is polynomial in both $\phi$ and in the size of the input.
Alternatively, before $A$ is given an arbitrary (worst-case) input $x$,
the values of $x$ (the numerical parameters) are perturbed according to some distribution still of bounded density at most $\phi$.
The perturbed input $x'$ is then given to $A$.
The smoothed runtime of $A$ is its worst-case (over all inputs $x$) expected runtime (expected over the perturbation distribution).

As discussed in the Introduction, we consider the structured perturbation model for MDPs, where we perturb only the nonzero transition probabilities and rewards.
Our constructions are robust not only to random perturbations, but moreover to {\em all}  perturbations within a certain wide range.
When proving our lower bounds, we model the nonzero rewards and the probabilities associated with probabilistic actions as adversarially chosen within some perturbation radius $\sigma = 1/\phi$, where the adversary aims to minimize the expected runtime (i.e. to defeat the lower bound construction). 
That is, any reward $r \neq 0$ of $x$ can take on any value $r'$ in $[r - \sigma, r + \sigma]$ in $x'$. We do not perturb rewards of zero.
For any probabilistic action with nonzero transition probabilities $p_1, \ldots, p_k$, the corresponding perturbed probabilities $p'_i$ in $x'$ are any (non-negative) values in $[p_i - \sigma, p_i + \sigma]$ that sum to 1.
Note that we cannot perturb the probabilities independently, since they must sum to 1 for the resulting MDP to be valid.
(An alternative, equivalent model, is to perturb independently all the $p_i$ within the allowed range and normalize them so they sum to 1.)
We say such an $x'$ is within \emph{perturbation radius} $\sigma$ of $x$.

Each lower bound in this paper gives an MDP $x$ where \emph{every} MDP within perturbation radius $\sigma$ of $x$ yields superpolynomial runtime. 
Thus, our results are stronger than the usual smoothed analysis: the superpolynomial lower bounds hold not only for random perturbations according to a specific probability distribution of (inverse polynomial) bounded density, but they moreover hold for {\em all} perturbations within the specified ranges, i.e., even when an adversary who wants to defeat the construction and minimize the running time chooses any perturbations they want within the specified range.

\section{A smoothed lower bound for Greedy PI under the total reward and average reward criteria}

In this section, we prove a subexponential lower bound on the smoothed complexity of Greedy PI. More specifically, we construct an MDP with $N$ states and bounded parameters (rewards and transition probabilities), such that in every MDP obtained by perturbing the parameters by any amount up to $1/N$, Greedy PI requires at least $2^{\Omega(N^{1/3})}$ iterations. We prove this for the total reward criterion. The same result applies to the average reward criterion.

The construction is quite involved and is presented in several stages. We present first in Section 3.1 a simplified construction which forces exponential worst-case runtime for a variation of Greedy PI (we call it hybrid Greedy PI), in which in certain cases some switchable states are not switched until some conditions are satisfied.
In Section 3.2, we add suitable gadgets to this MDP so that Greedy PI in the new MDP simulates the hybrid variant in the simplified construction; thus, Greedy PI has exponential worst-case runtime in this full construction.
This MDP includes rewards that are exponentially large and small, and some probabilities that are exponentially small.
In Section 3.3 we transform this MDP to our final robust MDP by using gadgets that allow us to eliminate the exponentially large and small rewards and probabilities and simulate them by parameters that lie in a bounded range in a robust way; that is, the behavior of Greedy PI is not affected by perturbation of the parameters up to an inverse polynomial amount. 

\subsection{Simple construction}
In this section, we present the simplified construction shown in \autoref{fig:base_mdp_greedy}, on which a variant of greedy policy iteration takes exponentially many iterations for the total reward criterion. 
This variant, which we call \textit{hybrid policy iteration} and which we define for this construction only, is nearly greedy policy iteration, except at the nodes $b_i$. 
We define hybrid PI and the simple construction for ease of presentation, and later in \autoref{subsec:full} we present our full construction, where we use gadgets to ensure that greedy PI behaves similarly to hybrid PI.

\begin{defin}[hybrid policy iteration]
Given a policy $\pi$, hybrid policy iteration chooses the next policy $\pi'$, where:
\begin{itemize}
    \item Every switchable non-$b_i$ vertex is switched to its appeal-maximizing action, as in greedy PI.
    \item Every switchable $b_i$ vertex with $\pi(b_i) = 1$ is switched so that  $\pi'(b_i) = 0$.
    \item Every switchable vertex $b_i$ with $\pi(b_i) = 0$ switches if and only if no non-$b_i$ vertices are switchable and $\pi(b_j) = 1$ for all $j < i$.
\end{itemize}
\end{defin}

The simple construction is shown in \autoref{fig:base_mdp_greedy}, with parameters as follows. Let $r : [n] \to \mathbb{Z}$ be a function where $r(i)$ is the reward on the edge associated with taking action 1 from $b_i$. $r$ need only satisfy that for all $i$, $r(i) > \sum_{j < i} r(j)$. We can achieve this by letting $r(i) = 2^{2i}$. Let $-\ep$ denote a very small negative reward. Assume that $\ep \ll r(i)$ for all $i$. 

We describe the available actions at each state, for each $i \leq n$:
\begin{itemize}
    \item $b_i$: $b_i$ has a deterministic action 0 to $w_{i+1}$ with reward 0 and a deterministic action 1 to $d_{i+1}$ with reward $r(i)$.
    \item $c_i$: $c_i$ has a deterministic action 0 to $w_{i+1}$ with reward 0 and a deterministic action 1 to $b_{i}$ with reward $-\epsilon$.
    \item $d_i$: $d_i$ has a deterministic action 0 to $w_{i+1}$ with reward 0 and a deterministic action 1 to $c_i$ with reward $-\epsilon$.
    \item $w_i$: $w_i$ has a deterministic action 1 to $b_i$ with reward 0. It also has a deterministic action to $b_j$ for every $j > i$ with reward 0, and a deterministic action to the sink with reward 0.
\end{itemize}
$w_{n+1}$ has only a deterministic action to the sink node with reward 0. $d_{n+1}$ has a deterministic action to the sink with reward $-\ep$. $c_{n+1}$ has a deterministic action to $d_{n+1}$ with reward $-\ep$. 

At a high level, our construction simulates a binary counter. Each state $b_i$ represents a bit. When $b_i = 1$, a large reward is incurred when leaving $b_i$. When $b_i = 0$, no reward is incurred. The starting policy for our lower bound will have all bits $b_i$ set to 0, and the optimal policy is when all bits $b_i$ are set to 1. The following two properties ensure that the bits behave as a binary counter, iterating through all binary strings of length $n$ before reaching the optimal policy. We state them here and prove them later. We achieve Property 1 by construction, and we achieve Property 2 by definition of hybrid PI.

\begin{property}
When a bit $b_i$ is set to 1, all lower bits $b_j$ for $j < i$ are reset to 0 within two iterations.
\end{property}

\begin{property}
For each $i$, $b_i$ switches to 1 only after $b_j = 1$ for all $j < i$.
\end{property}

\begin{figure}
    \centering
    \includegraphics[width=200pt]{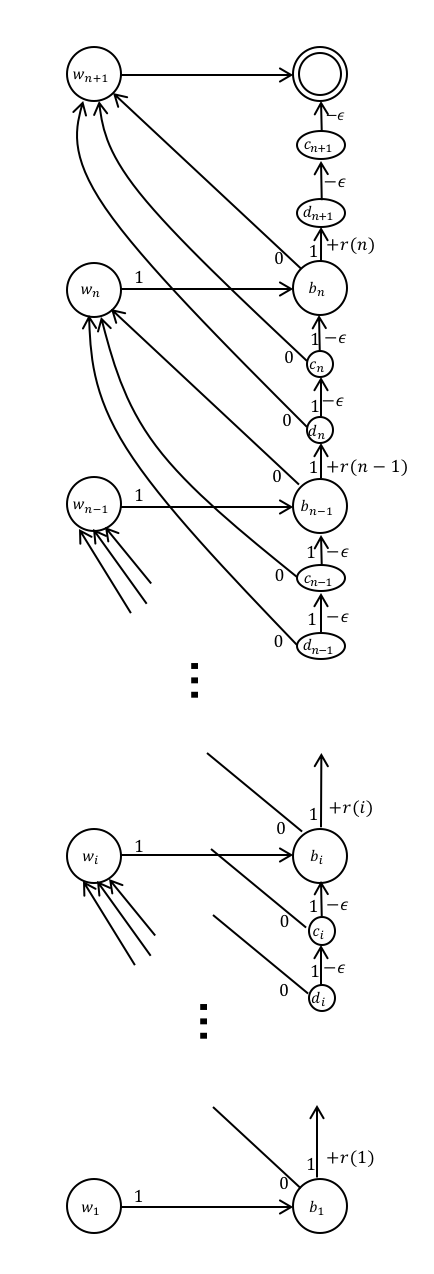}
    \caption{The base MDP for our greedy PI construction. The sink is shown with a double border. Each $w_i$ has a deterministic edge with reward 0 to every $b_j$ for $j \in [i, n]$, and a deterministic edge to the sink. These edges are omitted in the figure for clarity.}
    \label{fig:base_mdp_greedy}
\end{figure}

We will show using these properties that hybrid policy iteration proceeds in three phases. We will show that between every set of three phases (i.e., before the first phase and after the third phase), the following invariant always holds. Let $B = \{i | b_i = 1\}$. 

\paragraph{Invariant.} For all $i \in B$, $w_i = b_i = c_i = d_i = 1$ for all $i \in B$.
 For all $i \notin B$, we have $b_i = c_i = d_i = 0$. For all $i \notin B$ and $i > \max (B \cup \{0\})$, $w_i$ chooses the deterministic action to the sink.
 Otherwise, if $i \notin B$ and $i \leq \max B$, $w_i$ chooses the deterministic action to $b_\ell$ where $\ell$ is the smallest index such that $\ell \in B$ and $\ell \geq i$.

\paragraph{Phases.} We now describe the phases. Each set of 3 phases involves adding the minimum index $i = \min( [n] \setminus B)$ to $B$ and resetting all lower indices, so that $i$ is the minimum index in $B$.
\begin{enumerate}
    \item $b_i$ switches from 0 to 1.
    \item $w_j$ switches to $b_i$ for all $j \leq i$. $c_i$ switches to 1. 
    \item $b_j$ switches to 0 for all $j < i$. $d_i$ switches to 1. $c_j, d_j$ switch to 0 for all $j < i$. 
\end{enumerate}
At the end of the 3 phases, the invariant is again satisfied.

\paragraph{All-zero policy.} Let $\pi_0$ denote the policy with each $w_i$ choosing the action to the sink and all other nodes' actions equal to 0.
We show that our construction indeed follows this structure when we start with all actions equal to 0, and each $w_i$ choosing the action taking it to the sink. We first prove several useful facts.

\begin{proposition} \label{prop:b_i-rew1}
 When the invariant is satisfied, for every $i \notin B$ and $i < \max B$ we have $\val(b_i) = \val(c_i) = \val(d_i) = \val(b_\ell)$ where $\ell$ is the smallest index such that $\ell \in B$ and $\ell \geq i$.
\end{proposition}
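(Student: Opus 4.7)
The plan is to unfold the Bellman value equations directly from the invariant. Since $i\notin B$, the invariant fixes $\pi(b_i)=\pi(c_i)=\pi(d_i)=0$, and in each case action $0$ is the deterministic zero-reward transition to $w_{i+1}$. Hence
\[
\val(b_i)=\val(c_i)=\val(d_i)=\val(w_{i+1}),
\]
so it suffices to identify $\val(w_{i+1})$ with $\val(b_\ell)$, where $\ell$ is the smallest index in $B$ with $\ell\ge i$. Because $i<\max B$, we have $i+1\le \max B$, so the invariant already guarantees that $w_{i+1}$ is playing one of its deterministic zero-reward actions into some $b$-node (not the sink).

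The main step is then a short case split on whether $i+1\in B$. If $i+1\in B$, the invariant says $w_{i+1}$ chooses the action to $b_{i+1}$, so $\val(w_{i+1})=\val(b_{i+1})$; and since $i\notin B$ but $i+1\in B$, the smallest element of $B$ that is $\ge i$ is exactly $\ell=i+1$, giving $\val(w_{i+1})=\val(b_\ell)$. If instead $i+1\notin B$, the invariant says $w_{i+1}$ chooses the action to $b_{\ell'}$, where $\ell'$ is the least element of $B$ with $\ell'\ge i+1$; since $i\notin B$, the least element of $B$ that is $\ge i$ is the same as the least element that is $\ge i+1$, so $\ell'=\ell$, and again $\val(w_{i+1})=\val(b_\ell)$.

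Combining the two displayed equations yields the proposition. The only place to be slightly careful is the bookkeeping around which $\ell$ the invariant assigns to $w_{i+1}$ in each case; aside from that, everything is a direct one-step unfolding of the value recursion and no probabilistic reasoning is required, since every action touched in the argument is deterministic with reward $0$.
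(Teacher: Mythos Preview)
Your proof is correct and follows essentially the same approach as the paper: both argue that $b_i,c_i,d_i$ all take their zero-reward action to $w_{i+1}$, then case-split on whether $i+1\in B$ to identify $\val(w_{i+1})$ with $\val(b_\ell)$. Your version is slightly more explicit in noting that $i<\max B$ rules out the sink action at $w_{i+1}$, but otherwise the arguments are the same.
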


\begin{proof}
Let $i \notin B$. Thus, $b_i = c_i = d_i = 0$ and $\val(b_i) = \val(c_i) = \val(d_i) = \val(w_{i+1})$. First, suppose that $i+1 \in B$. Then $w_{i+1} = 1$, and $\val(b_i) = \val(w_{i+1}) = \val(b_{i+1})$. $i+1$ is the smallest index in $B$ that is at least $i$, so we are done.

Next, suppose that $i + 1 \notin B$. Then $w_{i+1}$ chooses the action to $b_j$, where $j$ smallest index in $B$ that is at least $i+1$. Since $i + 1 \notin B$, $j = \ell$ is also the smallest index in $B$ that is at least $i$. Thus, $\val(b_i) = \val(c_i) = \val(d_i) = \val(w_{i+1}) = \val(b_\ell)$.
\end{proof}

\begin{proposition} \label{prop:b_i-rew2}
When the invariant is satisfied, for every $i \in [n]$ we have $(\sum_{\substack{j \geq i \\ j \in B}} r(j)) - 2(n-i +1)\ep \leq \val(b_i) \leq \sum_{\substack{j \geq i \\ j \in B}} r(j)$.
\end{proposition}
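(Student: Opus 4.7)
The plan is to prove the claim by downward induction on $i$, from $i = n$ down to $i = 1$. For brevity let $R(i) := \sum_{j \geq i,\, j \in B} r(j)$. Under the invariant, every action in the current policy is deterministic, so the Markov chain induced by the policy traces a unique path from $b_i$ to the sink. The structural observation driving the proof is that this path visits exactly the nodes $b_{\ell_1}, \ldots, b_{\ell_k}$ where $\ell_1 < \cdots < \ell_k$ enumerate $B \cap [i,n]$; the transition $b_{\ell_j} \to d_{\ell_j + 1}$ contributes reward $r(\ell_j)$, and the only $-\ep$ contributions come from traversing $d_j, c_j$ edges with $j \in B$ (since by the invariant $d_j = c_j = 1$ iff $j \in B$).

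For the base case $i = n$, I would compute $\val(b_n)$ directly: it equals $r(n) - \ep$ when $n \in B$ (path $b_n \to d_{n+1} \to \mathrm{sink}$) and $0$ when $n \notin B$, both cases satisfying the claim with $2(n - i + 1)\ep = 2\ep$. For the inductive step I split on whether $i \in B$ and, if so, whether $i+1 \in B$. When $i, i+1 \in B$, the path gives $\val(b_i) = r(i) - 2\ep + \val(b_{i+1})$; together with $R(i) = r(i) + R(i+1)$, the induction hypothesis yields both bounds. When $i \in B$ but $i+1 \notin B$, I apply \autoref{prop:b_i-rew1} to $d_{i+1}$ (valid when $i + 1 < \max B$) to get $\val(d_{i+1}) = \val(b_\ell)$ for the smallest $\ell \in B$ with $\ell > i$, giving $\val(b_i) = r(i) + \val(b_\ell)$ which I match against $R(i) = r(i) + R(\ell)$; the edge case $i = \max B$ is handled by tracing directly to the sink through $d_{i+1} \to w_{i+2} \to \mathrm{sink}$, yielding $\val(b_i) = r(i) = R(i)$ with no $\ep$ incurred when $i < n$. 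Finally, when $i \notin B$, \autoref{prop:b_i-rew1} directly gives $\val(b_i) = \val(b_\ell)$ for the smallest $\ell \in B$ with $\ell \geq i$ (or $\val(b_i) = 0$ if $i > \max B$), with $R(i) = R(\ell)$ (or $R(i) = 0$).

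The main obstacle is purely bookkeeping the $\ep$ terms carefully enough to get the stated coefficient. The upper bound $\val(b_i) \leq R(i)$ is essentially trivial, since each $\ep$-contribution is nonpositive and the positive contributions are exactly $\{r(j) : j \in B \cap [i,n]\}$. For the lower bound I must verify that the cumulative penalty does not exceed $2(n - i + 1)\ep$; this follows immediately from the structural description, since the path from $b_i$ encounters at most two $\ep$-edges per element of $B \cap [i,n]$, hence at most $2|B \cap [i,n]| \leq 2(n - i + 1)$ in total. Propagating this bookkeeping through the inductive cases above (in particular checking that the ``skipping'' cases $i+1 \notin B$ and $i \notin B$ use strictly fewer $\ep$-terms, and thus a strictly smaller coefficient than $2(n - i + 1)$) completes the argument.
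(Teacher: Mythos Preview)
Your proposal is correct and follows essentially the same approach as the paper: downward induction on $i$, with the upper bound handled by the structural observation that no rewards below level $i$ are reachable, and the lower bound handled by case-splitting on whether $i \in B$ and whether $i+1 \in B$, invoking \autoref{prop:b_i-rew1} in the ``skipping'' cases. The only cosmetic difference is that the paper takes $i = \max B$ as its base case (noting $\val(b_i) = 0$ for $i > \max B$), whereas you start at $i = n$ and fold $i > \max B$ into the $i \notin B$ case; both are equivalent.
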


\begin{proof}
First, we observe that for any $b_i$, there is no path to any $b_j$ for $j < i$. Thus, the value of $b_i$ is at most the sum of its reward and the collected rewards from higher bits. 
We collect a reward from a bit only if it is in $B$. Thus, $b_i \leq \sum_{\substack{j \geq i \\ j \in B}} r(j)$.

We prove the lower bound by backwards induction on $i$.
For every $i > \max B$, $b_i$ goes to $w_{i+1}$ with reward 0, which goes to the sink with reward 0, so $\val(b_i) = 0$. Thus, we can use the base case $i = \max B$.
If $i = n$, $\val(b_n) = r(n) - 2\ep$.
Otherwise, $b_i$ takes action 1 to $d_{i+1}$, collecting reward $r(i)$. $d_{i+1}$ takes action 0 to $w_{i+1}$, which takes an action to some higher bit with value 0. Thus, $\val(b_i) = \val(d_{i+1}) + r(i) = r(i)$.
Assume that for fixed $k \geq 0$, the claim holds for all indices $c \geq n - k$.

Let $j = n - k-1$; first suppose that $j \notin B$. By the invariant, $b_j$ selects its 0 action to $w_{j+1}$. If $j+1 \in B$, $w_{j+1}$ takes its action to $b_{j+1}$ with reward 0. Thus, $\val(b_j) = \val(w_{j+1}) = \val(b_{j+1}) \geq (\sum_{\substack{j \geq i \\ j \in B}} r(j)) - 2(n-i +1)\ep$ by the inductive hypothesis. 
If $j+1 \notin B$, $w_{j+1}$ takes its deterministic action to $b_\ell$ where $\ell$ is the smallest index such that $\ell \in B$. Again by assumption, $\val(b_j) = \val(w_{j+1}) = \val(b_{j+1}) \geq (\sum_{\substack{j \geq i \\ j \in B}} r(j)) - 2(n-i +1)\ep$.

Now, let $j \in B$, so $\val(b_j) = \val(d_{j+1}) + r(j)$. First, suppose that $j+1 \notin B$. By \autoref{prop:b_i-rew1}, $\val(d_{j+1}) = \val(b_\ell)$, where $\ell$ is the smallest index in $B$ that is at least $j$. By the inductive hypothesis, $\val(b_\ell) \geq (\sum_{\substack{a \geq j \\ a \in B}} r(a)) - 2(n-\ell+1)\ep$.
Thus, $\val(b_j) = \val(b_\ell) \geq (\sum_{\substack{a \geq j \\ a \in B}} r(a)) - 2(n-j+1)\ep$ as desired.

Suppose now that $j + 1 \in B$. Thus, $b_{j+1} = c_{j+1} = d_{j+1} = 1$, so $\val(b_j) = \val(b_{j+1}) + r(j) - 2\ep$. By the inductive hypothesis, $\val(b_{j+1}) \geq (\sum_{\substack{a \geq j+1 \\ a \in B}} r(a)) - 2(n-j)\ep$.
Thus, $\val(b_j) \geq r(j) - 2\ep + (\sum_{\substack{a \geq j+1 \\ a \in B}} r(a)) - 2(n-j)\ep \geq (\sum_{\substack{a \geq j \\ a \in B}} r(a)) - 2(n-j+1)\ep$ as desired.

We have thus shown that assuming that the claim holds for all indices $c \geq n-k$, the claim holds for index $j = n - k - 1$. The proposition follows by induction.

\end{proof}

\begin{proposition} \label{prop:b_i-rew3}
Let $b_i = 1$. Then when the invariant is satisfied, $\val(b_i) \geq \val(b_\ell) + r(i) - 2\ep n$ for all $\ell > i$.
\end{proposition}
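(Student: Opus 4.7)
The proposition is a direct consequence of the two-sided bound on $\val(b_i)$ established in \autoref{prop:b_i-rew2}, combined with the assumption $i \in B$. The plan is to apply the lower bound of \autoref{prop:b_i-rew2} to $b_i$ and the upper bound to $b_\ell$, then subtract.

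Concretely, first I would write $\val(b_i) \geq \bigl(\sum_{j \geq i,\, j \in B} r(j)\bigr) - 2(n-i+1)\ep$ using the lower bound from \autoref{prop:b_i-rew2}, and $\val(b_\ell) \leq \sum_{j \geq \ell,\, j \in B} r(j)$ using the upper bound from the same proposition. Subtracting,
\[
\val(b_i) - \val(b_\ell) \;\geq\; \sum_{\substack{i \leq j < \ell \\ j \in B}} r(j) \;-\; 2(n-i+1)\ep.
\]
Since $i \in B$ by hypothesis and $\ell > i$, the index $i$ itself appears in this sum, contributing $r(i)$. All remaining terms correspond to indices $j$ with $i < j < \ell$, $j \in B$, and since $r(j) = 2^{2j} > 0$ for all such $j$, these terms are nonnegative. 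Therefore the sum is at least $r(i)$. Bounding $2(n-i+1)\ep \leq 2n\ep$ (valid since $i \geq 1$, so $n - i + 1 \leq n$) gives $\val(b_i) - \val(b_\ell) \geq r(i) - 2n\ep$, which is the claim.

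There is no real obstacle here; the proposition is essentially a packaging of \autoref{prop:b_i-rew2} in a form convenient for later use (presumably to argue about appeal comparisons at states $w_j$ or $c_j$ whose value depends on $b_i$ versus a higher bit $b_\ell$). The only minor subtlety is ensuring the error term $2(n-i+1)\ep$ is absorbed cleanly into $2\ep n$, and verifying that all omitted terms $r(j)$ for $i < j < \ell$ with $j \in B$ are indeed nonnegative, which is immediate from the choice $r(j) = 2^{2j}$.
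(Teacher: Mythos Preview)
Your proposal is correct and follows essentially the same approach as the paper's proof: apply the lower bound of \autoref{prop:b_i-rew2} to $b_i$, the upper bound to $b_\ell$, subtract, and use $i\in B$ to extract the $r(i)$ term while absorbing the $\ep$-term into $2\ep n$. The only cosmetic difference is that the paper bounds $\val(b_\ell)\le\sum_{j\ge i+1,\,j\in B}r(j)$ directly (since $\ell\ge i+1$), whereas you keep the tighter $\sum_{j\ge\ell}$ and then discard the nonnegative intermediate terms; the logic is the same.
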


\begin{proof}
By \autoref{prop:b_i-rew2}, $\val(b_i) \geq (\sum_{\substack{j \geq i \\ j \in B}} r(j)) - 2(n-i +1)\ep \geq (\sum_{\substack{j \geq i \\ j \in B}} r(j)) - 2 \ep n$, and $\val(b_\ell) \leq \sum_{\substack{j \geq i+1 \\ j \in B}} r(j) = (\sum_{\substack{j \geq i \\ j \in B}} r(j)) - r(i)$.
Thus, $\val(b_i) - \val(b_\ell) \geq r(i) - 2\ep n$.

\end{proof}

\begin{lemma} \label{lemma:phases}
When the invariant is satisfied, the set of switchable nodes is exactly the set of bits not in $B$.
\end{lemma}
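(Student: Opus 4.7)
The plan is to walk through each type of node (and each case of whether its index lies in $B$) and directly compare the current value to the appeals of alternative actions, using Propositions 1--3 as the main tools. There are two directions to handle: (a) every $b_i$ with $i \notin B$ is switchable, and (b) no other node is switchable.

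For (a), fix $i \notin B$. The current action 0 gives value $\val(w_{i+1})$, and the alternative action 1 has appeal $r(i) + \val(d_{i+1})$. The plan is to show $\val(d_{i+1}) \geq \val(w_{i+1}) - 2\epsilon$ via a small case split on whether $i+1 \in B$ (using the invariant to evaluate $w_{i+1}$ as either $\val(b_{i+1})$ or $\val(b_\ell)$ for $\ell$ the smallest index of $B$ exceeding $i$, and using \autoref{prop:b_i-rew1} to evaluate $\val(d_{i+1})$ analogously). Since $r(i) \gg 2\epsilon$, the difference is strictly positive and $b_i$ is switchable.

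For (b), I proceed node-type by node-type. For $b_i$ with $i \in B$, the current value is $r(i) + \val(d_{i+1})$ and the alternative appeal is $\val(w_{i+1})$; a symmetric case split (with $\ell$ the next index of $B$ after $i$, or $w_{i+1}$ going to the sink if $i = \max B$) gives $\val(d_{i+1}) \geq \val(w_{i+1}) - 2\epsilon$, hence the difference $r(i) - 2\epsilon > 0$ shows non-switchability. For $c_i$ and $d_i$ with $i \notin B$, the alternative action adds only a reward of $-\epsilon$ and routes to a node ($b_i$ or $c_i$) whose value equals $\val(w_{i+1})$ by the invariant, so the alternative is strictly worse. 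For $c_i$ and $d_i$ with $i \in B$, the current value is $-\epsilon + \val(b_i)$ or $-2\epsilon + \val(b_i)$, and the alternative appeal is $\val(w_{i+1})$; using the bound $\val(b_i) - \val(w_{i+1}) \geq r(i) - 2\epsilon$ established in the $b_i \in B$ case, the current action wins.

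The most delicate case is $w_i$, which has actions to every $b_j$ with $j \geq i$ and to the sink, so I must verify that the action dictated by the invariant dominates all alternatives. When $i > \max(B \cup \{0\})$, the invariant has $w_i$ go to the sink, and every reachable $b_j$ with $j \geq i$ has value $0$ by a straightforward chase through the zero-action subgraph, so no action is strictly better. When $i \leq \max B$, let $\ell$ be the smallest index of $B$ with $\ell \geq i$; the invariant has $w_i$ select $b_\ell$ (when $i \in B$, $\ell = i$). I compare $\val(b_\ell)$ against $\val(b_j)$ for every $j \geq i$: for $i \leq j < \ell$ we have $j \notin B$ and $\val(b_j) = \val(b_\ell)$ by \autoref{prop:b_i-rew1}; for $j > \ell$ we have $\val(b_j) \leq \val(b_\ell) - r(\ell) + 2\epsilon n < \val(b_\ell)$ by \autoref{prop:b_i-rew3}; and the sink action gives value $0 \leq \val(b_\ell)$ because $\val(b_\ell) \geq r(\ell) - 2\epsilon n > 0$ via \autoref{prop:b_i-rew2}. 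Hence $w_i$ is not switchable. The main obstacle throughout is the bookkeeping of the many subcases (especially for $w_i$), but each case reduces to the same small inequality $r(i) \gg \epsilon n$, which holds by our choice $r(i) = 2^{2i}$ and $\epsilon$ arbitrarily small.
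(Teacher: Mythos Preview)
Your proposal is correct and follows essentially the same node-by-node case analysis as the paper. One minor difference: for the case $b_i$ with $i \in B$, you establish the inequality $\val(d_{i+1}) \geq \val(w_{i+1}) - 2\epsilon$ directly by a case split and then conclude non-switchability, whereas the paper instead invokes \autoref{prop:b_i-rew2} and \autoref{prop:b_i-rew3} to bound $\val(b_i) - \val(w_{i+1})$; both routes arrive at the same conclusion with the same amount of work, and your reuse of the single inequality for both the $i \in B$ and $i \notin B$ cases of $b_i$ is a slight streamlining. Just be sure, when you invoke \autoref{prop:b_i-rew1} for $\val(d_{i+1})$, to handle the edge case $i+1 > \max B$ separately (where both $w_{i+1}$ and $d_{i+1}$ route to the sink with value $0$), since that proposition is stated only for indices below $\max B$.
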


\begin{proof}
We first show that all bits $b_i$ where $i \in B$ are not switchable.
Let $i \in B$. By \autoref{prop:b_i-rew2}, $\val(b_i) \geq (\sum_{\substack{j \geq i \\ j \in B}} r(j)) - 2(n-i +1)\ep$. 
Switching bit $b_i$ to action 0 would change its value to $\val(w_{i+1})$.
If $i = \max B$, $\val(w_{i+1}) = 0$.
Otherwise, $\val(w_{i+1}) = \val(b_\ell)$, where $\ell$ is the smallest index such that $\ell \in B$ and $\ell \geq i+1$. Since $\ell > i$, $\val(w_{i+1}) \leq \val(b_i) - r(i) + 2\ep n$ by \autoref{prop:b_i-rew3}. Since $r(i) >> \ep$, $b_i$ is not switchable.

We next show that all bits $b_i$ where $i \notin B$ are switchable.
If $n \notin B$, $b_n$ is switchable, since its current value is 0, and it can get reward $r(n) - 2\ep$ by switching to 1.
Fix $i \notin B, i \neq n$. First, suppose that $b_{i+1} = 1$. Then, by the invariant, $w_{i+1} = 1$ and $\val(b_i) = \val(w_{i+1}) = \val(b_{i+1})$. Again by the invariant, $d_{i+1} = c_{i+1} = 1$, so $\val(d_{i+1}) = \val(b_{i+1}) - 2\ep$. Thus, $\appeal(b_i, 1) = \val(b_{i+1}) + r(i) - 2\ep$. Since $r(i) >> \ep$, $\appeal(b_i, 1) > \val(b_{i+1}) = \val(b_i)$ and $b_i$ is switchable to 1.

Suppose in the other case that $b_{i+1} = 0$, so $\val(b_{i}) = \val(w_{i+1})$. If $i+1 > \max B$, both $w_{i+1}$ and $w_{i+2}$ have selected the action to the sink. Otherwise, $w_{i+1}$ and $w_{i+2}$ both have selected the action to the smallest index $\ell$ such that $\ell \in B$ and $\ell \geq i+2$.
Thus, $w_{i+1}$ and $w_{i+2}$ select actions to the same vertex, and $\val(w_{i+1}) = \val(w_{i+2})$. Since $d_{i+1} = 0$ by the invariant, $\val(d_{i+1}) = \val(w_{i+2})$. Thus, $\appeal(b_i, 1) = r(i) + \val(w_{i+2}) > \val(w_{i+1}) = \val(b_i)$. Thus, $b_i$ is switchable to 1.

Next, we argue that none of the bits $c_j$ or $d_j$ are switchable. If in case 1, $j \in B$, then $\val(b_j) > \val(w_{j+1})$ with a nontrivial gap, as argued when showing that bits in $B$ are not switchable. Thus $\val(c_j) = \val(b_j) - \ep > \val(w_{j+1})$. Similarly, $\val(d_j) = \val(b_j) - 2\ep > \val(w_{j+1})$.
$\val(w_j) = \val(b_j) > \val(b_i)$ for all $i > j$.

If in case 2, $j \notin B$, then $\val(b_j) = \val(c_j) = \val(d_j) = \val(w_{j+1})$. Switching $c_j$ or $d_j$ would lose them a reward of $\ep$.

Finally, none of the $w_j$ are switchable. 
First, suppose that $j > \max B$. Then by the invariant, all bits greater than $j$ point to the sink and have value 0. Thus, $w_j$ is not switchable. 

Next, suppose that $j \leq \max B$. Then $w_j$ points to some bit $b_\ell$, where $\ell$ is the smallest index such that $\ell \in B$ and $\ell \geq j$. 
By \autoref{prop:b_i-rew3}, $\val(b_\ell) > \val(b_{\ell'})$ for all $\ell' > \ell$. 
By \autoref{prop:b_i-rew1}, for all $\ell'$ such that $j < \ell' < \ell$, $\val(b_{\ell'}) = \val(b_\ell)$. Thus, for all
$\ell' > j$, $\ell' \neq \ell$, $\val(b_\ell') \leq \val(b_\ell)$, so $w_j$ is not switchable.

\end{proof}

\begin{proposition} \label{prop:b_i-rew4}
When the invariant is satisfied, $\val(b_{i}) - 2\ep \leq \val(d_i) \leq \val(b_{i})$ for every $i \leq n$.
\end{proposition}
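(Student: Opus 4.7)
The plan is to split the analysis into the two cases distinguished by the invariant, namely whether $i \in B$ or $i \notin B$, and in each case directly read off $\val(d_i)$ from the value equations using the policy described by the invariant. No inductive argument is needed here since the invariant already pins down the actions at $b_i$, $c_i$, and $d_i$ simultaneously, so $\val(d_i)$ can be expressed in terms of $\val(b_i)$ by tracing one or two deterministic transitions.

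First I would handle the case $i \in B$. The invariant tells us $d_i = c_i = b_i = 1$, so action $1$ at $d_i$ leads deterministically to $c_i$ with reward $-\epsilon$, and action $1$ at $c_i$ leads deterministically to $b_i$ with reward $-\epsilon$. Hence $\val(d_i) = \val(c_i) - \epsilon = \val(b_i) - 2\epsilon$, which matches the lower bound exactly and lies strictly below $\val(b_i)$ since $\epsilon > 0$, so both inequalities hold.

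Next I would handle the case $i \notin B$. Now the invariant gives $b_i = d_i = 0$, and both of these action-$0$ edges go deterministically to $w_{i+1}$ with reward $0$. Therefore $\val(d_i) = \val(w_{i+1}) = \val(b_i)$, so the desired inequalities hold with equality in both directions.

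No step here is really an obstacle; the potential pitfall is only making sure one reads off the correct deterministic successors of $d_i$ and $c_i$ under each of the two actions from the construction in \autoref{fig:base_mdp_greedy}, in particular that action $1$ at $d_i$ goes to $c_i$ (not directly to $b_i$) and action $1$ at $c_i$ goes to $b_i$, so that two reward-$(-\epsilon)$ hops separate $d_i$ from $b_i$ when $i \in B$. Given that, the two cases combine immediately to yield $\val(b_i) - 2\epsilon \leq \val(d_i) \leq \val(b_i)$ for all $i \leq n$.
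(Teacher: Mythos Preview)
Your proposal is correct and takes essentially the same approach as the paper: a direct case split on $i \in B$ versus $i \notin B$, reading off $\val(d_i)$ from the invariant's prescribed actions. The paper additionally singles out the boundary case $i = n$, but your uniform treatment already covers it since $b_n$ and $d_n$ both have action $0$ to $w_{n+1}$ and the $i \in B$ chain $d_n \to c_n \to b_n$ works identically.
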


\begin{proof}
First, suppose that $i = n$. If $n \in B$, $\val(d_n) = \val(b_n) - 2\ep$. If $n \notin B$, $\val(d_n) = 0$, and $\val(b_n) = 0$. 

Suppose that $i \in B, i \neq n$. Then $d_{i}$ and $c_{i}$ both take action 1, so $\val(d_{i}) = \val(b_{i}) - 2\ep$. 

Suppose that $i \notin B, i \neq n$. Then $d_{i}$ and $b_i$ both take action 0 to $w_{i+1}$ with reward 0. Thus, $\val(d_i) = \val(w_{i+1}) = \val(b_i)$.
\end{proof}

\begin{theorem} \label{thm:hybrid}
Given the simple construction and starting policy $\pi_0$, hybrid policy iteration requires at least $2^n$ iterations to reach the optimal policy under the total reward criterion.
\end{theorem}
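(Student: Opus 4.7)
The plan is to prove the theorem by induction on a binary counter that the set $B$ represents, showing that hybrid PI traces through all $2^n$ binary configurations of the bits $b_i$, with each increment of the counter taking exactly the three phases described in the paper. Since there are $2^n - 1$ increments from $B = \emptyset$ to $B = [n]$, this yields at least $3(2^n - 1) \geq 2^n$ iterations. The base case is immediate: the starting policy $\pi_0$ satisfies the invariant with $B = \emptyset$, because every $w_i$ points to the sink and every other node takes action $0$.

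For the inductive step, assume the invariant holds for some $B \subsetneq [n]$ and let $i = \min([n] \setminus B)$. In phase 1, \autoref{lemma:phases} says the switchable states are exactly the bits $b_j$ with $j \notin B$, and the hybrid rule allows $b_j$ to switch from $0$ to $1$ only if $b_k = 1$ for every $k < j$; for $j > i$ this fails (since $i \notin B$), so only $b_i$ switches, yielding $\pi(b_i) = 1$. In phase 2, the new value of $b_i$ has jumped by roughly $r(i)$ relative to before (via \autoref{prop:b_i-rew2} applied with $i \in B$ now), while the destinations of $w_j$ for $j \leq i$ were previously either the sink (value $0$) or some $b_\ell$ with $\ell > i$ (value at most $\sum_{k > i, k \in B} r(k) < r(i)$ by the growth condition $r(i) > \sum_{j<i}r(j)$ and the analogous upper bound on higher bits once we pick $r(i) = 2^{2i}$). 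Hence each such $w_j$ becomes switchable to $b_i$, and $c_i$ becomes switchable to $b_i$. One must verify that $d_i$ is not yet switchable in phase 2 (because $c_i$ has not yet moved), and that no other node becomes switchable. In phase 3, now that every $w_j$ with $j \leq i$ leads back to $b_i$, every lower bit $b_j$ with $j < i$ in $B$ sees a higher value by switching to $0$ (routing through $w_{j+1}$ back to $b_i$) than by continuing along the old chain, so it switches to $0$; the corresponding $c_j, d_j$ for $j < i$ switch back to $0$ to shed their $-\epsilon$ rewards; and $d_i$ finally switches to $1$ since $c_i$ is now pointing at $b_i$. The resulting policy satisfies the invariant with $B' = (B \setminus \{1,\dots,i-1\}) \cup \{i\}$, the binary successor of $B$.

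The main obstacle is the careful value-comparison bookkeeping in the two intermediate configurations (after phase 1 and after phase 2), where the invariant does not hold and \autoref{prop:b_i-rew1}--\autoref{prop:b_i-rew4} cannot be quoted directly. One has to redo the kind of estimates proved there for these off-invariant policies, and check three things in each phase: that every node claimed to switch is in fact switchable and that its appeal-maximizing action is the claimed one; that no other node is switchable; and, for phase 1 only, that the additional hybrid restriction singles out $b_i$. The separation between $r(i)$ and $\epsilon n$, together with the growth condition $r(i) > \sum_{j<i} r(j)$, is what drives every comparison: the $r(i)$-sized gain from routing to $b_i$ dominates all $O(\epsilon n)$ slack terms and all rewards associated with strictly smaller indices, so each appeal comparison resolves in the direction the construction needs.
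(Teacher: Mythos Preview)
Your overall approach matches the paper's: verify that the three phases preserve the invariant, with \autoref{lemma:phases} handling Phase~1 and direct value comparisons (driven by $r(i) > \sum_{j<i} r(j)$ and $\ep \ll r(1)$) handling Phases~2 and~3.

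However, your Phase~2 sketch contains two errors worth flagging. First, since $i = \min([n]\setminus B)$ we have $\{1,\ldots,i-1\}\subseteq B$, so under the invariant each $w_j$ with $j<i$ points to $b_j$, not to the sink or to some $b_\ell$ with $\ell>i$. To show that the appeal-maximizing action for $w_j$ is $b_i$, you must therefore also rule out all lower bits; the paper does this by invoking \autoref{prop:b_i-rew2} to bound $\val(b_j)\le \sum_{k\ge j,\,k\in B} r(k)$ for $j<i$ and then comparing against the new $\val(b_i)$. Second, the inequality you state, $\sum_{k>i,\,k\in B} r(k) < r(i)$, is false in general (take $B\supseteq\{i+1,\ldots,n\}$). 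What is actually true and sufficient is that the new value $\val(b_i)= r(i)+\val(d_{i+1}) \ge r(i) - 2\ep + \sum_{k>i,\,k\in B} r(k)$ exceeds both $\val(b_\ell)\le \sum_{k>i,\,k\in B} r(k)$ for $\ell>i$ and $\val(b_j)\le \sum_{k\ge j,\,k\in B,\,k\ne i} r(k)$ for $j<i$; the latter comparison is exactly where $r(i) > \sum_{k<i} r(k)$ is used. With these corrections your plan coincides with the paper's proof.
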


\begin{proof}
We show that if the invariant is satisfied, the algorithm proceeds in the three phases. For each phase, we argue that the switches made follow hybrid PI.

\paragraph{Phase 1.} By \autoref{lemma:phases}, the set of switchable nodes is exactly the bits not in $B$. By definition of hybrid PI, the only node that switches is $b_i$ where $i$ is the minimum index such that $b_i = 0$. $b_i$ thus switches to 1, and its value is now $\val(b_i) = \val(d_{i+1}) + r(i)$. 

\paragraph{Phase 2.} Switching $b_i$ in Phase 1 affected the value of only $b_i$, since no node selects an action to $b_i$ according to the invariant.  The only nodes that may become switchable in Phase 2 are those with actions to $b_i$; these nodes are exactly $c_i$ and $w_j$ for $j \leq i$. 
Now, $\val(b_i) \geq \val(b_{i+1}) + r(i) - 2\ep = r(i) - 2\ep + \sum_{\substack{j' \geq i+1 \\ j' \in B}} r(j')$. Thus, for any $j > i$, $\val(b_i) > \val(b_j)$. For any $j < i$, by \autoref{prop:b_i-rew2} and the fact that switching $b_i$ affected the value of only $b_i$, we have that 
$$\val(b_j) \leq \sum_{\substack{j' \geq j \\ j' \in B}} r(j') = \sum_{\substack{j' \geq i+1 \\ j' \in B}} r(j') + \sum_{\substack{j \leq j' < i \\ j' \in B}} r(j') < \left(\sum_{\substack{j' \geq i+1 \\ j' \in B}} r(j')\right) + r(i) - 2\ep$$ 
since $r(i) > \sum_{j' < i} r(j')$ and $\ep$ is sufficiently small.
Thus, $b_i$ has the highest value of any bit, and all $w_j$ for $j \leq i$ switch to $b_i$.
When $c_i = 0$, it takes on the value of some higher bit (or zero), so $c_i$ switches to 1 and takes on the greater value of $b_i$.

\paragraph{Phase 3.}
By choice of $b_i$, we have $b_j = c_j = d_j = 1$ for all $j < i$. After the switches in Phase 2, we have $w_j = b_i$.
Thus, for any fixed $b_j$, $b_j$ follows the right column up to $d_i$, which follows its action 0 to $w_{i+1}$ and does not collect the reward $r(i)$.
Since $r(i) > \sum_{\ell < i} r(\ell)$, we have $\val(b_j) \leq \val(w_{i+1}) + \sum_{\ell < i} r(\ell) < \val(b_i) = \val(w_{j+1})$.
Thus each $b_j$ for $j < i$ switches to action 0 going to $w_{j+1}$, and Property 1 holds.
Similarly, $c_j$ and $b_j$ switch to 0, since $\val(w_{j+1}) = \val(b_i) > \val(b_j)$.
$d_i$ switches to 1, since $\val(w_{i+1}) < \val(b_i) - 2\ep$.

Observe that the invariant is again satisfied. Thus, if we start with all actions equal to 0 and each $w_i$ choosing the action taking it to the sink, the invariant is satisfied after every set of three phases.

Furthermore, since the only bit that switches to 1 is $b_i$ where $i$ is the minimum index such that $b_i = 0$, Property 2 holds.
\end{proof}

We have thus shown that hybrid policy iteration on the simple construction simulates a binary counter on the $n$ bits, taking at least $2^n$ iterations to reach the optimal policy.

\subsection{Full construction} \label{subsec:full}
We now show that we can insert a gadget at each $b_i$ node for $i \neq 1$ in order to satisfy Property 2. This gadget is depicted in \autoref{fig:b_i}.
$b_1$ remains as in the simple construction, with no added gadget. For the other nodes $b_i$, we add $f(i) + 1$ actions $ a^i_0, a^i_1, \ldots, a^i_{f(i)}$ to $b_i$, where $f(i) = 3 + 6i$. $a^i_0$ is deterministic and goes to $w_1$ with zero reward. Each other action $a^i_j$ for $j \geq 1$ returns to $b_i$ with high probability, and goes to $b_1$ and incurs a small reward with the remaining probability. The 0 actions of $b_i$ from the simple constructions are deleted; they are replaced by these new actions $a^i_0$ and $a^i_{j \geq 1}$.
These actions $a^i_j$ for $j \geq 0$ have increasing rewards (relative to index $j$) but decreasing appeals. Thus, policy iteration starts by selecting the action $a^i_0$ with the largest appeal but smallest reward and cycles through $a^i_1, a^i_2,  \ldots$ in order.
We also amend the action 1 from the simple construction to loop back to $b_i$ with overwhelming probability. Without changing any values, this lowers the appeal of action 1 so that all of the actions $a^i_j$ are more appealing.
Consequently, these extra actions delay $b_i$ from switching to action 1.

The gadget is reset ($b_i$ chooses $a^i_0$) whenever any bit other than $b_1$ is set to 1. When this bit is switched to 1, $w_1$ will have greater value than $b_1$ in Phase 3 of that switch. Thus, in Phase 3, $b_i$ will switch to the action $a^i_0$ that goes to $w_1$. 

We achieve this using the following probabilities and rewards for each action $a^i_j$:

\begin{description}
    \item[Probabilities.] $a^i_j$ goes from $b_i$ to $b_1$ with probability $\frac{1}{2^{2j}}$ and from $b_i$ back to itself with the remaining probability $1 - \frac{1}{2^{2j}}$.
    
    \item[Reward.] $a^i_j$ gives a reward of $\delta_j = 3j \cdot \delta$, where $\delta = 2^{-100n}$.
\end{description}

We also slightly modify the rewards $r(i)$, introduce costs $c(i)$, and change the actions at the nodes $w_i$ to incur these costs.
When $b_i$ is set to 0 in the full construction, its value is a bit higher than that of $w_1$ or $b_1$, which is greater than the value $b_i$ would have when set to 0 in the simple construction. Since we do not want $w_i$ to set its action to 1 until $b_i$ switches to 1, we add the cost $c(i)$ from $b_i^-$ to $b_i$ to counteract this extra value, and direct action 1 from $w_i$ to $b_i^-$. We redefine $r(i) = 2^{2i + 1}$. We set $c(i) = \frac{r(i)}{2} = 2^{2i}$ for $i > 1$, and let $c(1) = 0$.\footnote{Setting $c(1) = 0$ is for ease of notation; as there is no gadget for $b_1$, there is no real cost attached to $b_1$.} Thus, the effective reward (as seen by $w_i$) associated with $b_i = 1$ is $r(i) - c(i) = 2^{2i}$, as in the simple construction. We show later that this preserves the property that $w_i$ does not switch to 1 until $b_i$ has switched to 1.

\begin{figure}
    \centering
    \includegraphics[width=400pt]{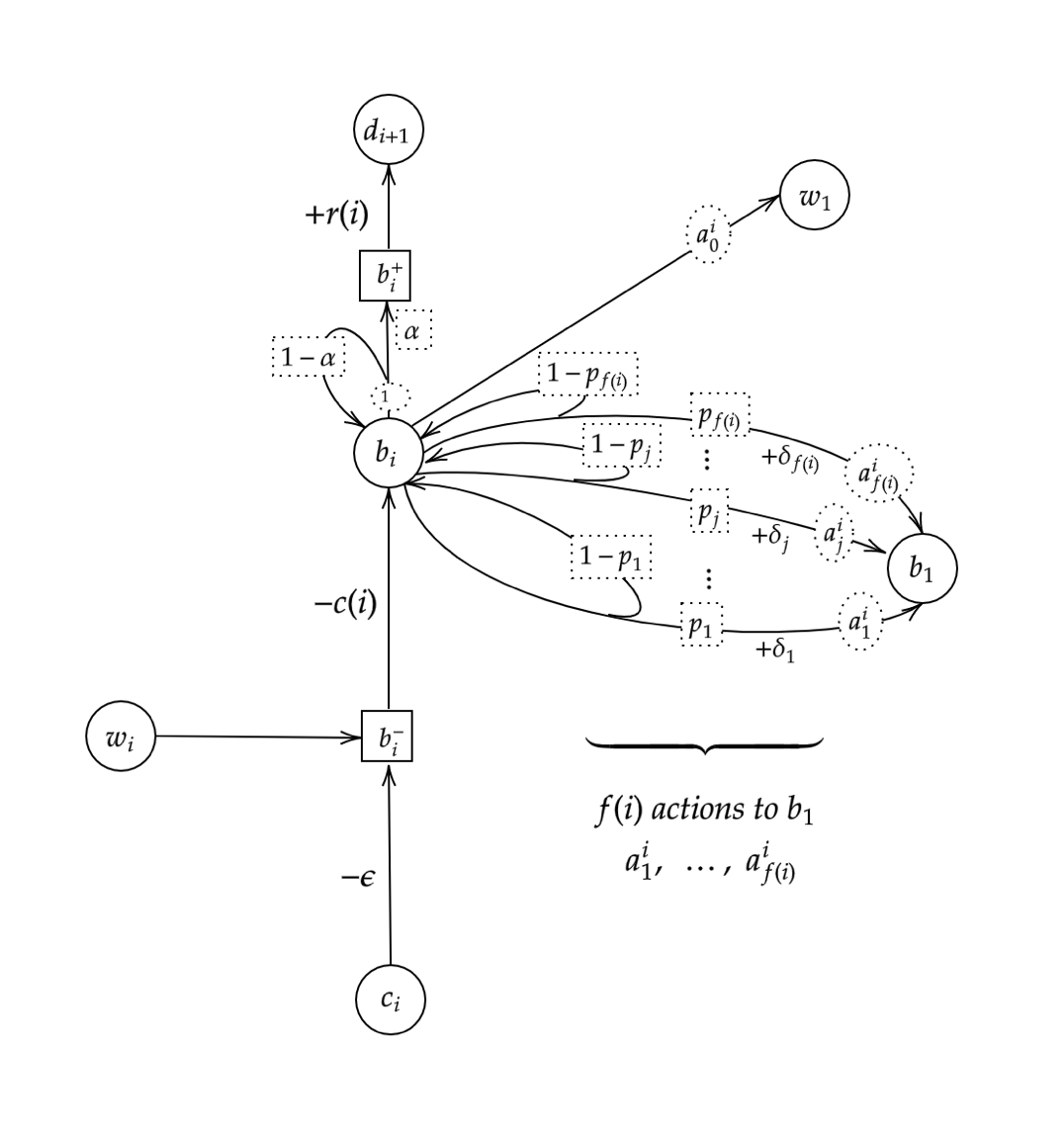}
    \caption{The structure of the full construction at node $b_i$. Probabilities are shown with dotted boxes around them; action names are shown with dotted ovals around them. Rewards are free-floating with plus or minus signs. The nodes $b_i^-$ and $b_i^+$ have only one action and are shown as squares.}
    \label{fig:b_i}
\end{figure}

We also add an action $a^i_0$ from $b_i$ to $w_1$ with no reward and with probability 1. We add a probabilistic action 1 from $b_i$ to $b_i^+$ that loops back to $b_i$ with extremely high probability $1 - \frac{1}{2^{1000n}}$.
This ensures that the appeal of action 1 is smaller than the appeal of any other action $a^i_j$.

We refer to the rewards, costs, and probabilities as \emph{parameter values}. When analyzing the behavior of policy iteration on our constructions, we point out which of our arguments depend on these exact parameter values and which do not.

\paragraph{Invariant.} At the beginning of the phases, we have $w_i = b_i = c_i = d_i = 1$ for all $i \in B$. For all $i \notin B$, we have $c_i = d_i = 0$. For all $i \notin B$ and $i > \max(B \cup \{0\})$, $w_i$ chooses its deterministic action to the sink. Otherwise, if $i \notin B$ and $i \leq \max B$, $w_i$ chooses the deterministic action to $b_\ell$ where $\ell$ is the smallest index such that $\ell \in B$ and $\ell \geq i$.
If the last bit added to $B$ was not $b_1$, $b_i = a^i_0$ for all $i \notin B$. If the last bit added to $B$ was $b_1$, we have $b_i = a^i_0$ or $b_i = a^i_3$ for all $i \notin B$. 

\paragraph{Phases.} 
Each set of 3 phases involves adding an index $i$ to $B$ and resetting all lower indices, so that $i$ is now the minimum index in $B$. Suppose first that $i \neq 1$. For each phase, we show the modified behavior compared to in the simple construction. We also introduce a new phase 0, where the bits not in $B$ cycle through their actions $a_j$.
\begin{enumerate}
    \setcounter{enumi}{-1}
    \item If $b_1$ is switchable to 1, proceed to Phase 1. Otherwise, for each $j \notin B$, $b_j$ increments its current action $a^j_m$ to the next action $a^j_{m+1}$. This repeats until $m+1 = f(i)$ for some $i$; when this happens, the next iteration begins Phase 1.
    We will show later that this $i$ is unique, and in fact $i = \min \bar{B}$.
    \item $b_i$ where $i = \min \bar{B}$ switches to 1. For all $i' \neq i$, $i' \notin B$, $b_{i'}$ switches from its current action $a^{i'}_\ell$ to $a^{i'}_{\ell+1}$.
    \item $w_j$ switches to $b_i$ for all $j \leq i$. $c_i$ switches to 1. For all $i' \neq i$, $i \notin B$, $b_{i'}$ switches from $a^{i'}_{\ell+1}$ to $a^{i'}_{\ell+2}$.
    \item $b_j$ switches to $a^j_0$ for all $j < i$ and $j > 1$. $d_i$ switches to 1. $c_j, d_j$ switch to 0 for all $j < i$. If $i > 1$, for all $i' > i$ where $i' \notin B$, $b_{i'}$ switches to $a^{i'}_0$, resetting the actions. Otherwise, if $i = 1$, for all $i' > i$, $b_{i'}$ switches from $a^{i'}_{\ell+2}$ to $a^{i'}_{\ell+3}$.
\end{enumerate}

\paragraph{Cycling, and defining $f(\cdot)$.} We let $f(i) := 3 + 6i$. We first note that this choice of $f$ is large enough that each bit $b_{i'}$ for $i' \notin B$ can indeed increment up to $a^{i'}_{\ell+3}$ when specified. More precisely, we want to show that when $b_i$ selects action $a^i_{m+1}$ in Phase 0, $b_{i'}$ is selecting an action $a^{i'}_\ell$ where $\ell \leq f(i') - 3$ (so there are enough actions for $b_{i'}$ to make its remaining increments). For each $i, i'$, $f(i)$ and $f(i')$ differ by at least 6. In the worst case, Phase 0 starts with $b_i = a^i_3$ and $b_{i'} = a^{i'}_0$ by the invariant. Thus, when $b_i = a^i_{m+1}$ in Phase 0, we have $b_{i'} = a^{i'}_\ell$ for $\ell \leq f(i') - 3$. 

We now show the properties that we claimed in Phase 0: if the invariant is satisfied at the start of Phase 0 and the bits increment their actions as described, then the unique first index $i$ to reach $m +1 = f(i)$ is $i = \min \bar{B}$.

\begin{proof}
Let the invariant be satisfied, and let $i = \min \bar{B}$, and let $i'$ be any other index not in $B$. Our proof proceeds in two cases, where (1) the previous bit added to $B$ was $b_1$, and (2) where the previous bit added to $B$ was not $b_1$.

For case (1), the invariant says that for all $i' \notin B$ (including $i' = i$), we have $b_{i'} = a^{i'}_0$ or $a^{i'}_3$.  In the worst case we start Phase 0 with $b_{i'} = a^{i'}_3$ and $b_i = a^i_0$. Thus, when $b_i$ selects an action $a^i_m$, $b_{i'}$ selects some action $a^{i'}_{m'}$ where $m' \geq m + 3$. Since $i$ and $i'$ have values of $f$ differing by at least 6, and $f$ is strictly increasing, $b_i$ reaches $m+1 = f(i)$ first.

For case (2), the invariant says that for all $i' \notin B$ (including $i' = i$), we have $b_{i'} = a^{i'}_0$. Since $f$ is strictly increasing, $i = \min \bar{B}$ also achieves the minimum value of $f(i')$ over all $i' \notin B$. 
Thus, with each $b_{i'}$ incrementing its action by one, $b_i$ will be the unique first index to achieve $m+1 = f(i)$.
\end{proof}

\subsubsection{Propositions} \label{sec:propositions}
For the sake of the following propositions, we introduce a \emph{weak invariant}. The weak invariant is the same condition as the strong invariant, except that any bit $b_i$ not in $B$ may select any action $a^i_j$.

These propositions establish relationships between the values and appeals of the vertices. They are sufficient for proving \autoref{thm:fullconstruction}. We prove them here in more generality than is necessary for \autoref{thm:fullconstruction}, since this generality will be useful when we reuse them later for the robust construction. We let $\emax$ denote the maximum value of any small cost $\ep$; here, $\emax = \ep$ since all small costs are the same.
We let $\dmax$ denote the maximum value of any small reward $\delta$; here, $\dmax = 3(6n + 3)\delta$.
We let $\ep(d_j)$ and $\ep(c_j)$ denote the values of the small $\ep$ costs on the 1 actions from $d_j$ and $c_j$ respectively. In the full construction, $\ep(d_j) = \ep(c_j) = \ep$.

In proving the propositions, we use a notion of reachability in the MDP. At any point $t$ in time, we consider the subgraph $G(t)$ induced by the actions selected at the nodes. A node $v$ is reachable from a node $u$ if there is a path from $u$ to $v$ in $G(t)$.

Propositions \ref{prop:rewcost} and \ref{prop:cost} relate the large rewards $r(i)$ and large costs $c(i)$. 
\autoref{prop:rewcost} shows that the reward $r(i)$ is substantially larger than the sum of all smaller rewards. As in the simple construction, we use this to show that when a bit $b_i$ is set to 1, all lower bits $b_j$ are enticed by its large reward and reset to 0.
\autoref{prop:cost} will be used to show that $w_i$ does not switch to any $b_j$ until $b_j$ has switched to 1.

Recall that $r(i) = 2^{2i + 1}$, and $c(i) = \frac{r(i)}{2} = 2^{2i}$. Thus, the effective reward associated with $b_i = 1$ is $r(i) - c(i) = 2^{2i}$, as in the simple construction. 

\begin{proposition} \label{prop:rewcost}
$r(i) - c(i) - 2n\emax - \dmax> \sum_{j < i} r(j)$ for all $i$.
\end{proposition}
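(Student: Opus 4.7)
The plan is a direct calculation using the explicit parameter values $r(i) = 2^{2i+1}$ and $c(i) = 2^{2i}$ (for $i>1$, with $c(1)=0$), together with the fact that $\emax = \ep$ and $\dmax = 3(6n+3)\delta$ with $\delta = 2^{-100n}$ are both polynomially or exponentially smaller than $1$.

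First, I would handle the base case $i=1$: the right side $\sum_{j<1} r(j)$ is an empty sum, and $r(1) - c(1) = 8 - 0 = 8$, so the inequality reduces to $8 - 2n\ep - \dmax > 0$, which holds by the standing assumption that $\ep$ is much smaller than any $r(i)$ (and $\dmax$ is exponentially small). Next, for $i \geq 2$, plug in $r(i) - c(i) = 2^{2i+1} - 2^{2i} = 2^{2i} = 4^i$ and evaluate the geometric sum
\[
\sum_{j=1}^{i-1} r(j) \;=\; \sum_{j=1}^{i-1} 2^{2j+1} \;=\; 2\sum_{j=1}^{i-1} 4^j \;=\; \frac{2(4^i - 4)}{3}.
\]
Therefore
\[
\bigl(r(i) - c(i)\bigr) - \sum_{j<i} r(j) \;=\; 4^i - \frac{2 \cdot 4^i - 8}{3} \;=\; \frac{4^i + 8}{3} \;\geq\; 4.
\]

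It then remains to show $4 > 2n\emax + \dmax$. Since $\emax = \ep$ and $\ep \ll r(1) = 8$, in particular we may take $\ep \leq 1/n$, so $2n\emax \leq 2$; and $\dmax = 3(6n+3)\cdot 2^{-100n} \ll 1$. Hence $2n\emax + \dmax < 4$, completing the inequality. The proof is essentially a one-line calculation; there is no real obstacle, since the geometric series is dominated by its last term by a constant factor and the $\emax, \dmax$ terms were designed to be negligible compared to the $r(i)$ gaps. The only mild subtlety is remembering to treat $i=1$ separately because of the convention $c(1)=0$.
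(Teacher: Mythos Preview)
Your proof is correct and follows essentially the same approach as the paper: both compute $r(i)-c(i)=2^{2i}$, bound (or in your case, compute exactly) the geometric sum $\sum_{j<i} r(j)$, and then observe that $2n\emax+\dmax$ is far smaller than the resulting constant gap. Your explicit handling of the $i=1$ case and the exact evaluation of the geometric series are slightly more careful than the paper's crude bound $\sum_{j<i} r(j) < 2^{2i}-2$, but the argument is otherwise identical.
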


\begin{proof}
$r(i) - c(i) = 2^{2i}$. Since $r(i) = 2^{2i+1}$, we have $\sum_{j < i} r(j) < \sum_{j = 2}^{2i - 1} 2^j < 2^{2i} - 2$.
Since $2n\emax + \dmax < 2$, $r(i) - c(i) - 2n\emax - \dmax > 2^{2i} > \sum_{j < i} r(j)$.
\end{proof}

\begin{proposition} \label{prop:cost} 
$c(i) > \delta_{\text{max}} + 2n\emax + \sum_{j < i} r(j)$ for all $i$.
\end{proposition}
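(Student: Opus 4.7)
The plan is to establish this inequality by direct substitution of the explicit parameter values defined in the construction. First I would plug in $c(i) = 2^{2i}$ (valid for $i > 1$) and $r(j) = 2^{2j+1}$, so the claim reduces to
\[
2^{2i} > \delta_{\max} + 2n\emax + \sum_{j=1}^{i-1} 2^{2j+1}.
\]
The sum on the right is a geometric series with ratio $4$; evaluating it in closed form gives $\frac{2(4^i-4)}{3}$, so the gap is
\[
c(i) - \sum_{j<i} r(j) \;=\; 4^i - \frac{2(4^i-4)}{3} \;=\; \frac{4^i + 8}{3},
\]
which is bounded below by a positive constant (indeed $\geq 8$ for $i \geq 2$).

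The second step is to verify that the small error terms $\delta_{\max} + 2n\emax$ are negligible compared to this constant gap. By the definitions, $\delta_{\max} = 3(6n+3)\delta$ with $\delta = 2^{-100n}$ is exponentially small in $n$, and $\emax = \epsilon$ was chosen in the simple construction so that $\epsilon \ll r(i)$ for every $i$; both quantities are therefore far smaller than $1$. Combining these two steps yields $\delta_{\max} + 2n\emax \ll \frac{4^i+8}{3}$, which is exactly the stated inequality.

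I do not anticipate any real obstacle here; this is a routine algebraic check entirely analogous to \autoref{prop:rewcost}. The only minor subtlety worth flagging is the boundary case $i = 1$: since the construction explicitly sets $c(1) = 0$ (because $b_1$ has no gadget) and the right-hand sum is empty for $i=1$, the literal inequality would require $0 > \delta_{\max} + 2n\emax > 0$, which is false. I read the statement as implicitly restricted to indices $i > 1$, where a cost gadget is actually present; this is also exactly the range in which the proposition is used later, namely to show that $w_i$ does not want to switch to $b_i^-$ before $b_i$ itself has flipped to action $1$.
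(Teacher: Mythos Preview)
Your proposal is correct and follows essentially the same approach as the paper: both substitute the explicit values $c(i)=2^{2i}$, $r(j)=2^{2j+1}$, $\delta=\epsilon=2^{-100n}$ and verify the inequality by elementary arithmetic (the paper uses the cruder bound $\sum_{j<i}r(j)<2^{2i}-2$ and $\delta_{\max}+2n\emax<2$, while you evaluate the geometric series exactly, but the idea is identical). Your observation about the boundary case $i=1$ is a good catch that the paper's proof silently glosses over; your reading that the statement is implicitly for $i>1$ is the right one.
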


\begin{proof}
In our construction, $\delta_{\text{max}} \leq 3f(n)\delta$.
Since $\delta = 2^{-100n}$, $f(n) = 3 + 6n$, and $\ep = \emax = 2^{-100n}$, $3f(n)\delta + 2n\emax < 2$. $\sum_{j < i} r(j) = \sum_{j=1}^{i-1} 2^{2j+1} < \sum_{j=2}^{2i - 1} 2^j < 2^{2i} - 2$. Thus, $\delta_{\text{max}} + 2n\emax + \sum_{j < i} r(j) < 2^{2i} = c(i)$.
\end{proof}

\autoref{prop:maxval} establishes an upper bound on the value of $b_1$, helping us later upper bound the appeal of switching any bit to 1.
\begin{proposition} \label{prop:maxval}
When the weak invariant is satisfied, $\val(b_1) \leq r(1) + \sum_{\substack{i \in B \\ i > 1}} r(i) - c(i)$. 
\end{proposition}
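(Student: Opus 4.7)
My plan is to analyze the expected walk induced by $\pi$ starting at $b_1$, show that under the weak invariant this walk has a very restricted (essentially linear) shape, and then account for the rewards and costs edge by edge. The two main ingredients will be a reachability claim (the walk never leaves a certain set of ``good'' states) and a per-gadget accounting (each visit to a $b_\ell$ gadget contributes a net of at most $r(\ell)-c(\ell)$).

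The reachability step, which I expect to be the main obstacle, shows that the walk from $b_1$ never visits any $b_i$ with $i \notin B$. The only in-edges of $b_i$ in the policy subgraph come from $b_i^-$, from $c_i$, and from the self-loop at $b_i^+$. The weak invariant forces $c_i = 0$ for $i \notin B$, so $c_i$ points to $w_{i+1}$ rather than to $b_i$; every $w_j$ with $j \leq \max B$ points to $b_\ell^-$ for some $\ell \in B$, and every $w_j$ with $j > \max B$ points to the sink; and $b_i^+$ is entered only from $b_i$ itself. Hence no $b_i$ with $i \notin B$ is reachable from $b_1$. This is important because the only mechanism by which the walk could ever return to $b_1$ is through the $a^i_j$ probabilistic loops at some $b_i$ with $i \notin B$ (either directly for $j \geq 1$, or via $a^i_0 \to w_1 \to b_1$). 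Consequently, $b_1$ is visited exactly once, and the walk is a strictly upward chain through the indices, entering each $b_\ell$ gadget with $\ell \in B$ at most once before continuing toward the sink.

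Given this structure, the second step is a linearity-of-expectation accounting along the chain. For each $\ell \in B$ with $\ell > 1$, whether the walk enters the $b_\ell$ gadget from $c_\ell$ or from some $w_j$ action, it passes through $b_\ell^-$ once (contributing $-c(\ell)$), and then loops in $b_\ell \leftrightarrow b_\ell^+$ until the rare escape to $d_{\ell+1}$, collecting $r(\ell)$ in expectation (exactly the calculation that yields $\val(b_\ell) = r(\ell) + \val(d_{\ell+1})$ in the full construction, since by design the loop-back at $b_\ell^+$ does not change values). Each visit therefore contributes a net of $r(\ell) - c(\ell)$. The node $b_1$ has no gadget and contributes at most $r(1)$, collected only if $b_1$ selects action $1$ via the single traversal of the edge $b_1 \to d_2$. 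The remaining rewards along the walk, namely the $-\ep$ costs at $c_j$ and $d_j$ for $j \in B$, are non-positive and only decrease $\val(b_1)$. Summing these contributions yields the desired bound $\val(b_1) \leq r(1) + \sum_{i \in B,\, i > 1}(r(i) - c(i))$.
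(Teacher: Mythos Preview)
Your proposal is correct and follows essentially the same approach as the paper: both proofs hinge on the reachability claim that no $b_i$ with $i\notin B$ is reachable from $b_1$ under the weak invariant (because $c_i=0$, every $w_j$ points to a bit in $B$ or to the sink, and the only other in-edge to $b_i$ is via $b_i^-$), and then bound $\val(b_1)$ by noting that positive rewards occur only at the $b_\ell$'s and each such $\ell\in B$ with $\ell>1$ forces a passage through $b_\ell^-$ with cost $c(\ell)$. Your version is more explicit about the expected-walk structure (arguing the chain is strictly upward and $b_1$ is visited exactly once) and does a per-gadget net accounting of $r(\ell)-c(\ell)$, whereas the paper condenses this into two sentences about which rewards are collectible and which costs must accompany them; the added detail in your write-up is sound but not strictly necessary for the bound.
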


\begin{proof}
We first show that $b_i$ is reachable from a node other than $b_i^-$ only if $i \in B$. Assume that $i \notin B$, $i > 1$. By the weak invariant, $c_i = 0$. Similarly, since for any $j$, $w_j$ always points to the sink or a bit in $B$, no $w_j$ points to $b_i$. 
These are all possible actions to $b_i$, so $b_i$ is unreachable from any other node.

Since positive rewards are included only in the actions from the nodes $b_i$, $b_1$ attains at most the sum of the rewards incurred by the bits in $B$.
For $i > 1$, $b_i$ is reachable only from $b_i^-$; on this edge, we incur a cost of $c(i)$. 
Thus, 
$\val(b_1) \leq r(1) + \sum_{\substack{i \in B \\ i > 1}} r(i) - c(i)$.
\end{proof}

\autoref{prop:valbound} gives an upper and a lower bound for every bit $b_i$ for $i \in B$. .
\begin{proposition} \label{prop:valbound}
When the weak invariant is satisfied, for every $i \in B$ we have 
$$r(i) + \left(\sum_{\substack{j > i \\ j \in B}} r(j) - c(j)\right) - 2(n-i+1)\emax \leq \val(b_i) \leq r(i) + \left(\sum_{\substack{j > i \\ j \in B}} r(j) - c(j)\right)$$
\end{proposition}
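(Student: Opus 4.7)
The plan is to proceed by backward induction on $i \in B$, running from $i = \max B$ down, mirroring the proof of \autoref{prop:b_i-rew2} for the simple construction, but now accounting for two new ingredients: the self-loop gadget through $b_i^+$ on action $1$, and the cost edge $b_i^- \to b_i$ of magnitude $c(i)$ through which every policy-selected entry into $b_i$ is routed.

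The first step is to observe that the loop at $b_i^+$ is value-neutral, so the claim reduces to controlling $\val(d_{i+1})$. Since $i \in B$ selects action $1$ by the weak invariant, $\val(b_i) = \val(b_i^+)$, and $b_i^+$ returns to $b_i$ with probability $1 - 2^{-1000n}$ and otherwise proceeds to $d_{i+1}$ while contributing reward $r(i)$. Writing the Bellman equation for $\val(b_i^+)$ and substituting $\val(b_i) = \val(b_i^+)$ cancels the self-loop terms and yields the clean identity $\val(b_i) = r(i) + \val(d_{i+1})$, independent of the precise loop probability.

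For the base case $i = \max B$, the invariant forces $d_{i+1}, c_{i+1}, \ldots$ to their $0$-actions and routes each $w_k$ with $k > i$ to the sink, so $\val(d_{i+1}) \in [-\ep, 0]$ and hence $\val(b_i) \in [r(i) - 2\emax,\, r(i)]$, which matches the claim since the sum over $j > i$ is empty. For the inductive step, split on whether $i+1 \in B$. If $i+1 \in B$, then by the invariant $d_{i+1} \to c_{i+1} \to b_{i+1}$, with the entry into $b_{i+1}$ routed through $b_{i+1}^-$, giving $\val(d_{i+1}) = \val(b_{i+1}) - c(i+1) - 2\emax$. If $i+1 \notin B$, let $\ell = \min\{j \in B : j \geq i+2\}$; then $d_{i+1}$ takes action $0$ to $w_{i+2}$, which by the invariant routes through $b_\ell^-$ to $b_\ell$, giving $\val(d_{i+1}) = \val(b_\ell) - c(\ell)$ (or $\val(d_{i+1}) = 0$ if no such $\ell$ exists, in which case the remaining sum is empty). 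Indices strictly between $i$ and the next element of $B$ contribute nothing to $\sum_{j > i,\, j \in B}(r(j) - c(j))$, so substituting the inductive hypothesis together with the identity from Step~1 yields both inequalities at once. The $\ep$-slack accumulates at most $2$ per level across at most $n-i+1$ levels, producing the stated total of $2(n-i+1)\emax$.

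The main obstacle is the bookkeeping of the $c(j)$ terms: one must verify that the cost $c(j)$ is paid exactly once for each $j > i$ with $j \in B$, regardless of whether consecutive indices of $B$ are clustered or separated by gaps. This rests on the structural convention in the full construction that every policy-selected transition into $b_j$, whether from $c_{j-1}$ or from some $w_k$, passes through the cost edge $b_j^- \to b_j$, while the self-loop at $b_j^+$ never revisits $b_j^-$. Once this is checked, the induction becomes a direct algebraic combination of the inductive hypothesis with the case-by-case formula for $\val(d_{i+1})$.
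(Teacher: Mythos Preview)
Your proposal is correct and follows essentially the same approach as the paper: both use the identity $\val(b_i) = r(i) + \val(d_{i+1})$ (which you derive explicitly from the self-loop, while the paper simply uses it) and prove the lower bound by backward induction on $i \in B$, splitting on whether $i+1 \in B$. The only notable difference is that the paper establishes the \emph{upper} bound via a separate structural argument (showing that from $b_i$ one can reach only bits in $B$, each at most once, and invoking \autoref{prop:maxval} for $i=1$), whereas your plan handles both inequalities simultaneously within the same induction; your unified treatment is perfectly valid and arguably cleaner. One minor notational slip: you write $\val(d_{i+1}) = \val(b_{i+1}) - c(i+1) - 2\emax$ with equality, but this should be $\geq$ (the $\ep$-costs are each at most $\emax$), with the corresponding $\leq \val(b_{i+1}) - c(i+1)$ giving the upper bound.
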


\begin{proof}
Let $i \in B$, $i \neq 1$.
By the same argument from the proof of \autoref{prop:maxval}, there is a path from $b_i$ to $b_j$ only if $j \in B$. Rewards are only incurred on the actions from the bits. Each bit can be visited at most once, since any cycle must include an edge from some $b_j$ to $b_1$. But these actions are not reachable, since they are taken only if $j \notin B$, meaning $b_j$ is unreachable from any other node. Thus, for any $i \in B$, the value of $b_i$ is at most the sum of rewards incurred in the actions taken from higher bits in $B$, minus the costs required to reach these actions. This is exactly the upper bound,  $r(i) + \left(\sum_{\substack{j > i \\ j \in B}} r(j) - c(j)\right)$.
For $i = 1$, the upper bound holds by \autoref{prop:maxval}.

Like in the simple construction, we prove the lower bound by backwards induction on $i$.  We use the base case $i = \max B$.
If $i = n$, $\val(b_n) = r(n) - \ep(c_{n+1}) - \ep(d_{n+1})$ which is at least $r(n) - 2\emax$ and at most $r(n)$. Otherwise, if $i \neq n$, $b_i$ takes action 1 to $d_{i+1}$, collecting reward $r(i)$. $d_{i+1}$ takes action 0 to $w_{i+1}$, which goes to the sink with reward 0, since $i+1 \notin B$. Thus, $\val(b_i) = \val(d_{i+1}) + r(i) = r(i)$. 

Assume that for fixed $k \geq 0$, the claim holds for all indices $c \geq n - k$.

Let $j = n - k - 1$, where $j \in B$. By the invariant, $b_j$ selects its action 1 to $d_{j+1}$, collecting reward $r(j)$. First, suppose that $j + 1 \in B$. 
Then $d_{j+1} = c_{j+1} = 1$, so $\val(d_{j+1}) = \val(b_{j+1}) - c(j+1) - \ep(c_{j+1}) - \ep(d_{j+1}) \geq \val(b_{j+1}) - c(j+1) - 2\emax$.
By assumption, $\val(b_{j+1}) \geq r(j+1) + \left(\sum_{\substack{j' >  j+1\\ j' \in B}} r(j') - c(j')\right) - 2(n-j)\emax$. Thus, 
\begin{align*}
    \val(b_j) &\geq r(j) + r(j+1) - c(j+1) + \left(\sum_{\substack{j' >  j+1\\ j' \in B}} r(j') - c(j')\right) - 2(n-j)\emax - 2\emax\\
    &= r(j) + \left(\sum_{\substack{j' >  j\\ j' \in B}} r(j') - c(j')\right) - 2(n-j+1)\emax
\end{align*}

Next, suppose that $j + 1 \notin B$. Then $d_{j+1} = 0$, so it takes its action to $w_{j+2}$ with reward 0. $w_{j+2}$, by the invariant, goes to $b_\ell$ with reward 0, where $\ell$ is the smallest index in $B$ such that $\ell \geq j+2$. 
Since $j + 1 \notin B$, $\ell$ is also the smallest index in $B$ such that $\ell > j$.
Thus, $\val(d_{j+1}) = \val(w_{j+2}) = \val(b_\ell) - c(\ell)$, and by assumption, 
$\val(b_\ell) \geq r(\ell) + \left(\sum_{\substack{j' >  \ell \\ j' \in B}} r(j') - c(j')\right) - 2(n-\ell+1)\emax$.
Since $\val(b_j) = r(j) + \val(d_{j+1})$, putting this together, we have
\begin{align*}
    \val(b_j) &= r(j) + \val(b_\ell) - c(\ell)\\
    &\geq r(j) - c(\ell) + r(\ell) + \left(\sum_{\substack{j' >  \ell \\ j' \in B}} r(j') - c(j')\right) - 2(n-\ell+1)\emax\\
    &= r(j) + \left(\sum_{\substack{j' >  j \\ j' \in B}} r(j') - c(j')\right) - 2(n-\ell+1)\emax\\
    &\geq r(j) + \left(\sum_{\substack{j' >  j \\ j' \in B}} r(j') - c(j')\right) - 2(n-j+1)\emax
\end{align*}
as desired. We have thus shown that assuming that the lower bound holds for all indices $c \geq n - k$, the claim holds for index $j - n - k - 1$. 
\end{proof}

\autoref{prop:action1} shows that the actions $a_j^i$ for a bit $b_i$ have decreasing appeal. This property ensures that $b_i$ cycles through all of its actions $a_j^i$ before switching to 1. In its proof, we will make use of the following fact, which we state as a lemma to use it again in proving the main theorem.
\begin{lemma} \label{lemma:wb}
When the weak invariant is satisfied, regardless of the parameter values, $\val(b_1) = \val(w_1)$.
\end{lemma}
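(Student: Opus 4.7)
The plan is to case-split on whether $1 \in B$ or $1 \notin B$, and in each case trace only the actions actually selected at $w_1$, $b_1$, and $w_2$ under the weak invariant. The point is that no arithmetic with the perturbed parameters is needed: the equality $\val(b_1) = \val(w_1)$ will fall out from the fact that $b_1$ and $w_1$ are forced to route to the same downstream node, together with the carefully chosen normalization $c(1) = 0$.

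In the case $1 \in B$, the weak invariant pins down $w_1$ to select its action $1$, which goes to $b_1^-$ with reward $0$, and $b_1^-$ deterministically goes to $b_1$ with cost $c(1)$. Since $c(1) = 0$ by definition, we get $\val(w_1) = \val(b_1)$ immediately, independent of all other rewards, probabilities, or gadget actions in the MDP.

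In the case $1 \notin B$, the weak invariant forces $b_1$ (which has no gadget and is not in $B$) to take its action $0$ to $w_2$ with reward $0$, so $\val(b_1) = \val(w_2)$. It then suffices to show $\val(w_1) = \val(w_2)$. If $B = \emptyset$, the weak invariant says both $w_1$ and $w_2$ select their actions to the sink, giving value $0$. Otherwise, $w_1$ selects the action to $b_\ell$ (routed through $b_\ell^-$) where $\ell = \min B$, and $w_2$ selects the action to $b_{\ell'}$ where $\ell' = \min\{i \in B : i \geq 2\}$. Since $1 \notin B$, we have $\ell \geq 2$ and therefore $\ell = \ell'$, so $w_1$ and $w_2$ both route to the same $b_\ell^-$ and pay the same cost $c(\ell)$, giving $\val(w_1) = \val(w_2) = \val(b_1)$.

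The main obstacle, such as it is, is just bookkeeping the weak invariant's prescriptions for the three nodes $b_1$, $w_1$, $w_2$ and observing that $1 \notin B$ forces $\min B \geq 2$ so that $w_1$ and $w_2$ agree on their chosen successor. No step depends on the small parameters $\epsilon$ or $\delta$ or on the gadget probabilities, which is precisely the content of the phrase ``regardless of the parameter values'' and is what makes this lemma a useful robust building block for the perturbation analysis that follows.
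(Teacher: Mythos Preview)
Your proof is correct and follows essentially the same approach as the paper's: case-split on whether $1 \in B$, and in the second case observe that $b_1$ routes to $w_2$ while both $w_1$ and $w_2$ route to the same downstream target (the sink if $B=\emptyset$, or $b_\ell^-$ for $\ell=\min B$ otherwise). Your version is in fact slightly more careful than the paper's, explicitly tracking the intermediate node $b_\ell^-$ and the cost $c(\ell)$ (and noting $c(1)=0$ in the first case), whereas the paper's proof glosses over these and simply says ``reward $0$''; but the underlying argument is identical.
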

\begin{proof}
If $1 \in B$, $w_1$ goes to $b_1$ with reward 0, so their values are equal. If $1 \notin B$, $w_1$ goes to $b_\ell$ with reward 0, where $\ell$ is the smallest index in $B$ (if no such index exists, $b_\ell$ goes to the sink). 
$b_1$ goes to $w_2$ with reward 0. $w_2$ also goes to $b_\ell$ (or the sink) with reward 0. Thus, $\val(w_1) = \val(b_1)$.
\end{proof}

\begin{lemma} \label{lemma:action1}
Let the weak invariant be satisfied, where $b_i = a^i_{f(i)}$. Then for any parameters values for which Propositions 5-8 hold, the action of $b_i$ with greatest appeal is 1.
\end{lemma}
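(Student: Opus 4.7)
My plan is to compute the appeal of each action at $b_i$ in terms of $V := \val(b_i)$, $V_1 := \val(b_1)$, and $V_+ := \val(b_i^+)$, and show that action $1$ strictly dominates. The key identity is that since the current policy plays $a^i_{f(i)}$ at $b_i$, the Bellman equation at $b_i$ rearranges to $V - V_1 = 2^{2f(i)}\delta_{f(i)}$, which is $2^{-\Omega(n)}$ given $\delta = 2^{-100n}$ and $f(i) \leq 3 + 6n$. By \autoref{lemma:wb}, $\val(w_1) = V_1$, so $\appeal(b_i, a^i_0) = V_1 < V$ immediately.

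For each $j \geq 1$, substituting the identity into the appeal formula yields
\[
\appeal(b_i, a^i_j) - V \;=\; \delta_j - 2^{-2j}(V - V_1) \;=\; 3\delta\bigl(j - f(i) \cdot 4^{f(i)-j}\bigr),
\]
which is $0$ at $j = f(i)$ and strictly negative for $1 \leq j < f(i)$ because $f(i) \cdot 4^{f(i)-j} \geq 4f(i) > j$. Hence every $a^i_j$ has appeal at most $V$, with equality only at the current action. It remains to show $\appeal(b_i, 1) = V + 2^{-1000n}(V_+ - V) > V$, i.e., $V_+ > V$.

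By construction $V_+ = r(i) + \val(d_{i+1})$, so I lower-bound $\val(d_{i+1})$ by case splitting on whether $i+1 \in B$. If $i+1 \in B$, trace $d_{i+1} \to c_{i+1} \to b_{i+1}$ to get $\val(d_{i+1}) = \val(b_{i+1}) - 2\ep$ and invoke \autoref{prop:valbound} at $b_{i+1}$; if $i+1 \notin B$, the weak invariant sends $d_{i+1} \to w_{i+2}$, which routes either to the sink or to $b_\ell^-$ (paying cost $c(\ell)$) for the smallest $\ell \in B$ with $\ell \geq i+2$, and then \autoref{prop:valbound} applies at $b_\ell$. Either case yields
\[
\val(d_{i+1}) \;\geq\; \sum_{j \in B,\, j \geq i+1}(r(j) - c(j)) - O(n\emax).
\]
Combining this with the upper bound $V_1 \leq r(1) + \sum_{j \in B,\, j > 1}(r(j) - c(j))$ from \autoref{prop:maxval}, the tails above $i$ cancel, giving
\[
V_+ - V_1 \;\geq\; r(i) - r(1) - \!\!\!\sum_{j \in B,\, 1 < j < i}\!\!\!(r(j) - c(j)) - O(n\emax) \;>\; c(i) - O(n\emax) \;=\; \Omega(4^i),
\]
where the penultimate inequality uses \autoref{prop:rewcost} together with $r(j) - c(j) \leq r(j)$. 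Since $V - V_1 = 2^{-\Omega(n)}$ is vastly smaller than $\Omega(4^i)$, $V_+ > V$, so action $1$ strictly beats every other action and is the unique appeal-maximizer at $b_i$. The main obstacle is the case analysis for $\val(d_{i+1})$ together with checking that the $O(n\emax)$ and $2^{-\Omega(n)}$ slack from the small costs and rewards cannot erode the geometric margin from \autoref{prop:rewcost}; this robustness-aware bookkeeping is exactly why Propositions 5--8 are stated with explicit slack terms.
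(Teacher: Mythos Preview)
Your approach is essentially the paper's: both hinge on the same case split for $\val(d_{i+1})$ (whether $i+1 \in B$, then invoking Propositions~\ref{prop:maxval} and~\ref{prop:valbound}) together with Proposition~\ref{prop:rewcost} to show $r(i) + \val(d_{i+1}) > \val(b_1) + \dmax \geq \val(b_i)$, while the $a^i_j$ actions all have appeal at most $\val(b_1) + \dmax$.

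Two small slips. First, in this construction the reward $\delta_j$ sits on the branch to $b_1$, not unconditionally on the action; the Bellman equation at $b_i$ therefore gives $V - V_1 = \delta_{f(i)}$ (not $2^{2f(i)}\delta_{f(i)}$), and correspondingly $\appeal(b_i, a^i_j) - V = 2^{-2j}(\delta_j - \delta_{f(i)})$. Your conclusion that these are nonpositive for $j \leq f(i)$ survives, and in fact becomes immediate. Second, when $i+1 \in B$ the path $d_{i+1} \to c_{i+1}$ lands at $b_{i+1}^-$, so the cost $c(i+1)$ is also incurred; without it your stated lower bound $\sum_{j \in B,\, j \geq i+1}(r(j) - c(j)) - O(n\emax)$ does not come out (you would be over by $c(i+1)$). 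With both fixes the inequality $V_+ > V$ still goes through, and you no longer need the quantitative $\Omega(4^i)$ versus $2^{-\Omega(n)}$ comparison: from Propositions~\ref{prop:rewcost} and~\ref{prop:cost} alone one gets $V_+ - V_1 > \dmax \geq \delta_{f(i)} = V - V_1$, which is the parameter-agnostic statement the lemma actually asks for.
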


\begin{proof}
Finally, suppose that $j = f(i)$. Then $\appeal(b_i, 1) = \alpha(r(i) + \val(d_{i+1})) + (1-\alpha) \val(b_i)$, which is greater than $\val(b_i)$ if and only if $r(i) + \val(d_{i+1}) > \val(b_i)$.
We show that this is true, using the fact that $r(i)$ is greater than the sum of rewards of lower bits.

First, observe that $\val(b_i) \leq \val(b_1) + \dmax$, and by \autoref{prop:maxval}, $\val(b_1) + \dmax \leq (\sum_{\substack{j > i \\ j \in B}} r(j) - c(j)) + (\sum_{j < i} r(j)) + \dmax$.
By \autoref{prop:rewcost}, we can upper bound the sum of rewards $r(j)$ for $j < i$, and we have that $\val(b_1) + \dmax \leq (\sum_{\substack{j > i \\ j \in B}} r(j) - c(j)) + r(i) - 2n\emax $.

We now relate this expression back to $r(i) + \val(d_{i+1})$. First, consider the case where $i + 1 \notin B$. 
Here, $d_{i+1}$ goes to $w_{i+2}$ with no reward or cost. By the weak invariant, $w_{i+2}$ takes its action to $b_\ell$ and incurs cost $c(\ell)$, where $\ell$ is the smallest index such that $\ell \in B$ and $\ell \geq i$ (since $i + 1 \notin B$). Thus, $\val(d_{i+1}) = \val(b_\ell) - c(\ell)$.
If $i + 1 \in B$, $d_{i+1}$ incurs cost $\ep$ and goes to $c_{i+1}$, which incurs cost $\ep = \emax + c(i+1)$ and goes to $b_{i+1}$. Here, $\val(d_{i+1}) = \val(b_{i+1}) - c(i+1) - 2\emax$, where $i + 1$ is the smallest index that is greater than $i$ and in $B$.
In other words, because of how $\ell$ is defined, regardless of whether $i + 1 \in B$, $\val(d_{i+1}) \geq \val(b_\ell) - c(\ell) - 2\emax$.
By \autoref{prop:valbound}, $\val(b_\ell) - c(\ell) - 2\ep \geq (\sum_{\substack{j > i \\ j \in B}} r(j) - c(j)) - 2n\emax$.

Thus, $\val(d_{i+1}) + r(i) > (\sum_{\substack{j > i \\ j \in B}} r(j) - c(j)) + r(i) - 2n\emax \geq \val(b_1) + \dmax$, meaning that the appeal of switching $b_i$ to 1 is greater than the current value of $b_i$. In other words, if $b_i = a^i_{f(i)}$, $b_i$ is switchable to action 1.
Action 1 has the greatest appeal of any action, since the other actions $a^i_{j'}$ for $j' < j$ have reward at most $\dmax$ and thus appeal at most $\val(b_1) + \dmax$.
\end{proof}

\begin{proposition} \label{prop:action1}
If the weak invariant is satisfied, and $b_i = a^i_j$, $b_i$ is switchable and the action with greatest appeal is $a^i_{j+1}$ if $j \neq f(i)$, and action 1 if $j = f(i)$.
\end{proposition}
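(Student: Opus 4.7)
The case $j = f(i)$ is exactly Lemma \ref{lemma:action1}, so the work is to handle $j < f(i)$. My plan has four ingredients: (a) compute $\val(b_i)$ in closed form under the current policy; (b) derive a uniform closed-form expression for $\appeal(b_i, a^i_{j'})$ across all $j'$; (c) read off the maximizer among the $a^i_{j'}$'s; and (d) rule out action $1$ by a scale comparison.

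For (a), Lemma \ref{lemma:wb} gives $\val(w_1) = \val(b_1)$, so the value equation for $b_i$ under $a^i_j$ --- which routes to $b_1$ with probability $1/4^j$ collecting reward $3j\delta$ on that branch, and self-loops otherwise with no reward --- resolves to $\val(b_i) = \val(b_1) + 3j\delta$; the formula also holds trivially for $j=0$ since $a^i_0$ goes to $w_1$ with zero reward. For (b), a direct one-step calculation at this policy yields the uniform expression
$$\appeal(b_i, a^i_{j'}) \;=\; \val(b_i) \;+\; \frac{3\delta\,(j' - j)}{4^{j'}},$$
valid for every $j' \in \{0, 1, \ldots, f(i)\}$ (including $j'=0$, where the deterministic edge to $w_1$ contributes $\val(b_1)$). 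For (c), the $j' \leq j$ actions are non-improving, while for $j' = j+k$ with $k \geq 1$ the improvement factors as $(3\delta/4^{j+1}) \cdot (k/4^{k-1})$; so the question reduces to the elementary fact that $k/4^{k-1}$ is strictly decreasing for $k \geq 1$ (the ratio of consecutive terms is $(k+1)/(4k)<1$), so it is uniquely maximized at $k=1$. Hence $a^i_{j+1}$ is the unique appeal-maximizer among the $a^i_{j'}$'s, and its strict improvement $3\delta/4^{j+1}>0$ also witnesses switchability of $b_i$.

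For (d), action $1$ self-loops at $b_i$ with probability $1 - 2^{-1000n}$ before escaping to $b_i^+$ (whose value is $r(i) + \val(d_{i+1})$), so
$$\appeal(b_i, 1) - \val(b_i) \;=\; 2^{-1000n}\bigl(r(i) + \val(d_{i+1}) - \val(b_i)\bigr).$$
Crude absolute bounds from Propositions \ref{prop:maxval} and \ref{prop:valbound} give $|r(i) + \val(d_{i+1}) - \val(b_i)| = O(r(n)) = 2^{O(n)}$, so this improvement is at most $2^{-998n + O(1)}$. In contrast, since $\delta = 2^{-100n}$ and $4^{j+1} \leq 4^{f(n)+1} = 2^{O(n)}$, the $a^i_{j+1}$ improvement is $3\delta/4^{j+1} = \Omega(2^{-112n - O(1)})$, which dwarfs the action $1$ improvement. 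Thus $a^i_{j+1}$ beats action $1$ as well, completing the proposition.

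The only delicate step is (b)--(c): writing down the value and appeal equations carefully so that the $\val(b_1)$ terms cancel into the clean one-line formula, and then the one-line maximization of $k/4^{k-1}$ over positive integers. Step (d) is routine once one observes that the designers chose $\delta = 2^{-100n}$ against the much smaller escape probability $2^{-1000n}$ for action $1$ precisely so that the small-reward cycling actions $a^i_{j+1}$ always outrank the large-but-heavily-dampened action $1$ until $j$ has climbed all the way to $f(i)$, at which point Lemma \ref{lemma:action1} takes over and action $1$ wins.
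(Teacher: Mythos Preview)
Your proof is correct and follows essentially the same approach as the paper: both compute $\val(b_i) = \val(b_1) + 3j\delta$ via Lemma~\ref{lemma:wb}, derive the appeal formula $\appeal(b_i, a^i_{j+c}) = \val(b_i) + 3c\delta/4^{j+c}$, observe it is maximized at $c=1$, and then compare scales to rule out action~$1$ using $\delta = 2^{-100n}$ versus the escape probability $2^{-1000n}$. Your $k/4^{k-1}$ factorization and the explicit ratio test make the maximization step a bit cleaner than the paper's one-line assertion, and your bound $3\delta/4^{j+1} \geq \Omega(2^{-112n})$ is actually more careful than the paper's (which writes $3\delta/2^{2n}$ where it should account for $j+1$ up to $f(n)$), but the argument is otherwise the same.
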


\begin{proof}
We first show that action 1 has low appeal. Action 1 loops back to $b_i$ with very high probability $\alpha = 1 - \frac{1}{2^{1000n}}$, so $\appeal(b_i, 1) \leq \val(b_i) + \frac{1}{2^{1000n}} (r(i) + \val(d_{i+1}))$. By the same argument as in the proof of \autoref{prop:maxval}, the only rewards reachable from $d_{i+1}$ are those of the bits in $B$.
These rewards $r(j)$ are collected at most once, and they are collected with their corresponding costs $c(j)$ with the exception of $r(1)$ which has no cost. Thus, 
$$r(i) + \val(d_{i+1}) \leq r(i) + r(1) + \sum_{j = 2}^n r(j) - c(j) \leq 2^{2i + 1} + 2^5 + \sum_{j=2}^n 2^{2j} \leq 2^{2n+1}$$
So $\appeal(b_i, 1) \leq \val(b_i) + \frac{1}{2^{1000n}} \cdot 2^{2n+1} \leq \val(b_i) + 2^{-998n + 1}$.

Before arguing about the relative appeals of the actions $a^i_{j'}$, we note that action $a^i_0$ goes to $w_1$ with reward 0, and $\appeal(b_i, a^i_0) = \val(w_1) = \val(b_1)$ when the weak invariant is satisfied by \autoref{lemma:wb}. It thus holds that for all $j$, $\appeal(b_i, a^i_j) = \val(b_1) + 3j\cdot \delta$.

If an action $a^i_j$ (including $j = 0$) is currently selected where $j \neq f(i)$, $\val(b_i) = \val(b_1) + 3j \cdot \delta$, so the appeal of action $a^i_{j + c}$ is $(1 - \frac{1}{2^{2(j+c)}})(\val(b_1) + 3j \cdot \delta) + \frac{1}{2^{2(j+c)}}(\val(b_1) + 3(j + c)\delta) = \val(b_i) + \frac{3c\delta}{2^{2(j+c)}}$.
This expression is maximized for $c = 1$. $\appeal(b_i, a^i_{j+1}) > \val(b_i)$, so $b_i$ is switchable, and action $a^i_{j+1}$ is more appealing than any other action $a^i_{j+c}$ for $c > 1$. Actions $a^i_{j-c}$ for $c > 0$ have appeal less than $\val(b_i)$, since they all go to $b_1$ (or $w_1$, whose value is equal to that of $b_1$ by \autoref{lemma:wb}) with reward less than the reward of $a^i_j$. Furthermore, since $\delta = 2^{-100n}$, we have $2^{-998n +1} < \frac{3\delta}{2^{2n}}$ and thus $\appeal(b_i, 1) < \appeal(b_i, a^i_{j+1})$.

Finally, by \autoref{lemma:action1}, if $b_i = a^i_{f(i)}$, the action with greatest appeal is 1.
\end{proof}

\subsubsection{Main theorem}
Our main result is that on the full construction, Greedy PI takes subexponentially many iterations. The proof follows the same structure as the proof of the analogous result for the simple construction. 
We first prove \autoref{lemma:switchable}, which is analogous to \autoref{lemma:phases}. 
We then prove \autoref{thm:fullconstruction}, arguing as in the simple construction that Greedy PI follows our prescribed phases.
This shows that with the all-zero starting policy, Greedy PI behaves like a binary counter, iterating through all $2^n$ bit strings to reach the optimal policy.

\paragraph{All-zero policy.}The all-zero policy $\pi_0$ for the full construction is the same as that of the simple construction for all nodes other than the $b_i$ nodes. Each node $b_i$ selects its action $a^i_0$.

\begin{lemma} \label{lemma:switchable}
Let Propositions 5 through 9 hold given the parameter values. When the weak invariant is satisfied, the set of switchable nodes is exactly the set of bits not in $B$.
\end{lemma}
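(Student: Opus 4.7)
The plan is to adapt the case analysis of \autoref{lemma:phases} to the full construction, using Propositions \ref{prop:rewcost}--\ref{prop:action1} to handle the new $a^i_j$ actions and the gadget nodes $b_i^\pm$. The forward inclusion is immediate: if $i\notin B$, then by the weak invariant $b_i=a^i_j$ for some $j$, and \autoref{prop:action1} produces an action strictly more appealing than $\val(b_i)$ (namely $a^i_{j+1}$, or action $1$ when $j=f(i)$), so $b_i$ is switchable. For the reverse inclusion I would argue case by case that every other node is not switchable; the gadget nodes $b_i^+$ and $b_i^-$ have a unique outgoing action and are trivially non-switchable.

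For a bit $b_i$ with $i\in B$, currently on action $1$, the alternatives are $a^i_0,\dots,a^i_{f(i)}$. By \autoref{lemma:wb}, $\appeal(b_i,a^i_0)=\val(b_1)$, while for $j\geq 1$ the self-loop gives $\appeal(b_i,a^i_j)=\val(b_i)+\frac{1}{2^{2j}}\bigl(\val(b_1)-\val(b_i)\bigr)+3j\delta$, so non-switchability reduces to the inequality $\val(b_i)-\val(b_1)>3j\delta\cdot 2^{2j}$ for every $j\leq f(i)$. Combining the lower bound on $\val(b_i)$ from \autoref{prop:valbound} with the upper bound on $\val(b_1)$ from \autoref{prop:maxval}, and controlling the intermediate sum via \autoref{prop:rewcost}, I get a gap $\val(b_i)-\val(b_1)$ of order $c(i)=2^{2i}$; this crushes $3f(i)\delta\cdot 2^{2f(i)}\leq O(n)\cdot 2^{-88n+6}$ because $\delta=2^{-100n}$.

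For $c_j$ and $d_j$: when $j\in B$, switching to action $0$ sends them to $w_{j+1}$, and \autoref{prop:valbound} combined with \autoref{prop:rewcost} gives $\val(b_j)-\val(w_{j+1})$ of order $r(j)$, dwarfing the $2\ep$ saved on action $1$; when $j\notin B$, switching to $1$ routes through $b_j^-$ and pays the cost $c(j)$, and since the weak invariant places $b_j$ on some $a^j_m$ with $\val(b_j)\leq\val(b_1)+\dmax$, \autoref{prop:cost} together with \autoref{prop:maxval} certifies $c(j)+\val(w_{j+1})+\ep\geq\val(b_1)+\dmax$. For $w_j$: when $j\in B$, the action to $b_j^-$ gives value $\val(b_j)-c(j)$, and monotonicity of $\val(b_k)$ over $k\in B$ (from \autoref{prop:valbound}) together with \autoref{prop:rewcost} rules out every $b_k^-$ with $k>j$ as well as the sink; when $j\notin B$ with $j\leq\max B$, the same monotonicity applied at $\ell=\min\{k\in B:k\geq j\}$ handles all alternatives; and when $j>\max B$ the current action to the sink beats any $b_k^-$ because $\val(b_k)-c(k)\leq\val(b_1)+\dmax-c(k)\leq 0$ by \autoref{prop:cost}.

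The main obstacle is the bit-in-$B$ case: the escape probability $1/2^{2f(i)}$ via $a^i_{f(i)}$ and the compensating reward $3f(i)\delta$ are both exponentially small in $i$ and nearly cancel, so naive parameter choices would fail. The choices $f(i)=6i+3$ and $\delta=2^{-100n}$ are tuned precisely so that the value gap of order $c(i)=2^{2i}$ derived from \autoref{prop:rewcost}--\autoref{prop:valbound} dominates $3f(i)\delta\cdot 2^{2f(i)}$ uniformly in $i\leq n$; this quantitative balance is the heart of the argument and will be the main step to verify carefully.
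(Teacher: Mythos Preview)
Your case decomposition and the individual arguments match the paper's. There is, however, one misreading that manufactures what you call the ``main obstacle.''

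In the full construction the reward $\delta_j=3j\delta$ is carried on the branch to $b_1$ of action $a^i_j$, not on the action itself (this is why the paper gets $\val(b_i)=\val(b_1)+\delta_j$ rather than $\val(b_1)+\delta_j/p_j$ when $b_i$ selects $a^i_j$; see the formula in \autoref{lemma:delta}). Hence
\[
\appeal(b_i,a^i_j)=(1-p_j)\val(b_i)+p_j\bigl(\val(b_1)+\delta_j\bigr),
\]
a convex combination bounded by $\max\bigl(\val(b_i),\,\val(b_1)+\dmax\bigr)$. For $i\in B$ the paper therefore only needs the single inequality $\val(b_i)>\val(b_1)+\dmax$, which follows immediately from \autoref{prop:valbound}, \autoref{prop:maxval}, and \autoref{prop:cost}. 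Your formula places $3j\delta$ outside the combination and forces the stronger requirement $\val(b_i)-\val(b_1)>3j\delta\cdot 2^{2j}$; that inequality does hold for the chosen parameters, so your argument is valid, but the delicate balance you describe between $c(i)$, $f(i)$, and $\delta$ is an artifact of the misread reward structure and is not present in the actual proof.

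One smaller imprecision: for $c_j,d_j$ with $j\in B$, action $1$ routes through $b_j^-$ and therefore costs $c(j)+\ep$ (respectively $c(j)+2\ep$), not merely $\ep$ or $2\ep$; the gap $\val(b_j)-\val(w_{j+1})$ must dominate $c(j)$ as well. Since that gap is of order $r(j)$ and $r(j)-c(j)>2n\emax$ by \autoref{prop:rewcost}, your conclusion still holds, but the sketch should mention $c(j)$ explicitly.
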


\begin{proof}
We first reiterate that by \autoref{lemma:wb}, which holds regardless of the parameter values, $\val(w_1) = \val(b_1)$ when the weak invariant is satisfied.

\paragraph{Nodes $b_i$ for $i \notin B$:} All such nodes $b_i$ are switchable by \autoref{prop:action1}. 

\paragraph{Nodes $b_i$ for $i \in B$:} Let $i \in B$. Then by \autoref{prop:valbound}, $\val(b_i) \geq r(i) + (\sum_{\substack{j' >  i \\ j' \in B}} r(j') - c(j')) - 2n\emax$. If $b_i$ switches to some action $a^i_j$, it will take on at most $\val(w_1) + \dmax = \val(b_1) + \dmax$. By \autoref{prop:maxval}, $\val(w_1) = \val(b_1) \leq r(1) + \sum_{\substack{j' \in B \\ j' > 1}} r(j') - c(j') \leq \val(b_i) - c(i) + 2n\emax + \sum_{\substack{j' \in B \\ j' < i}} r(j')$. Since $c(i) > 2n\emax + \dmax + \sum_{\substack{j' \in B \\ j' < i}} r(j')$ by \autoref{prop:cost}, $\val(b_i) \geq \appeal(b_i, a^i_j)$ for any $j$.

\paragraph{Nodes $c_i, d_i$:} Let $i \in B$. Then $c_i = d_i = 1$, and $\val(c_i) = \val(b_i) - c(i) - \ep(c_i) > 0$, $\val(d_i) = \val(b_i) - c(i) - \ep(c_i) - \ep(d_i) > 0$, where the fact that these values are positive comes from \autoref{prop:rewcost}.
If $c_i$ or $d_i$ were to switch to 0, their values would be $\val(w_{i+1}) = \val(b_\ell) - c(\ell)$, where $\ell$ is the next highest bit in $B$, or 0 if there is no higher bit in $B$. If there is no higher bit in $B$ and the value of $w_{i+1}$ is 0, $c_i$ and $d_i$ are not switchable since their current values are positive. 
If there is such a $b_\ell$, then by \autoref{prop:valbound}, $\val(b_\ell) - c(\ell) \leq \sum_{\substack{j \geq \ell \\ j \in B}} r(j) - c(j)$, so $\val(w_{i+1}) \leq \sum_{\substack{j > i \\ j \in B}} r(j) - c(j)$. 
But also by \autoref{prop:valbound}, $\val(b_i) \geq r(i) + (\sum_{\substack{j > i \\ j \in B}} r(j) - c(j)) - 2(n-i+1)\emax$. So $\val(b_i) - \val(w_{i+1}) \geq r(i) - 2(n-1+1)\emax$. By \autoref{prop:rewcost}, this difference is positive.

Let $i \notin B$. Then $\val(c_i) = \val(d_i)$, so $d_i$ is not switchable, since switching to 1 incurs $d_i$ a cost of $\ep(d_i)$. For $c_i$, observe that $\val(b_i) \leq \val(b_1) + \dmax$ or $\val(b_i) \leq \val(w_1) + \dmax$, depending on which $a^i$ action it chooses. As previously noted, $\val(w_1) = \val(b_1)$.
Because of the cost $c(i)$ between $c_i$ and $b_i$, the appeal of $c_i$ switching to 1 is at most $\val(b_1) + \dmax - c(i)$.
By \autoref{prop:maxval}, $\val(b_1) \leq r(1) + \sum_{\substack{j \in B \\ j > 1}} r(j) - c(j)$, so plugging this in we have $\appeal(c_i, 1) \leq \dmax + r(1) + (\sum_{\substack{j \in B \\ j > 1}} r(j) - c(j)) - c(i)$.
By \autoref{prop:cost}, $c(i) > \dmax + 2n\emax + \sum_{j < i} r(j)$, which we can subtract from the sum in the previous inequality to get $\appeal(c_i, 1) < (\sum_{\substack{j \in B \\ j \geq i}} r(j) - c(j)) - 2n\emax$.
Now, we relate this appeal back to the current value of $c_i$. If there are no bits in $B$ that are higher than $i$, this value is negative, and $c_i$ is not switchable to 1 as its starting value is positive.
Otherwise, let $\ell$ again be the next highest bit in $B$.
 Recall that $c_i$ goes to $w_{i+1}$ with reward 0, and $\val(c_i) = \val(w_{i+1}) = \val(b_\ell) - c(\ell) \geq (\sum_{\substack{j \in B \\ j \geq \ell}}r(j) - c(j) ) - 2n\emax$ by \autoref{prop:valbound}.
 Since $\ell$ is the next highest bit (after $i$) in $B$, the two sums are equal: $(\sum_{\substack{j \in B \\ j \geq \ell}}r(j) - c(j) ) = (\sum_{\substack{j \in B \\ j \geq i}}r(j) - c(j) )$. Thus, our strict upper bound on $\appeal(c_i, 1)$ exactly equals this lower bound on $\val(c_i)$, making $c_i$ not switchable to 1.

\paragraph{Nodes $w_i$:} $w_i$ points to $b_\ell^-$, where $\ell$ is the smallest index such that $\ell \in B$, $\ell \geq i$.
Thus, $\val(w_i) = \val(b_\ell) - c(\ell)$. 
Recall that by \autoref{prop:valbound}, $\val(b_\ell) \geq r(\ell) + \left(\sum_{\substack{j' > \ell \\ j' \in B}} r(j') - c(j')\right) - 2n\emax$, so $ \val(b_\ell) -  c(\ell) \geq (\sum_{\substack{j' \geq \ell \\ j' \in B}} r(j') - c(j')) - 2n\emax$. 

We first show that for any $j > i$, $j \neq \ell$, $w_i$ is not switchable to $b_j$. 
If $j \in B$, by \autoref{prop:valbound}, $\appeal(w_i, b_j)  = \val(b_j) - c(j) \leq (\sum_{\substack{j' \geq j \\ j' \in B}} r(j') - c(j')) \leq (\sum_{\substack{j' \geq \ell \\ j' \in B}} r(j') - c(j')) + r(\ell)-c(\ell) - 2n\emax $, where the last inequality follows from the fact that $\ell < j$ is missing from the summation, 
and $r(\ell) - c(\ell) - 2n\emax > 0$ by \autoref{prop:rewcost}.
Thus, $\val(w_i) = \val(b_\ell) - c(\ell) > \appeal(w_i, b_j)$, and $w_i$ is not switchable to $b_j$. 

If $j \notin B$, $\val(b_j) \leq \val(b_1) + \dmax$. The same argument for why $c_i$ is not switchable to 1 when $i \notin B$ holds here. 

Finally, $w_i$ is never switchable to the sink, since the value of the sink is zero, and the value of $w_i$ is never negative.

\end{proof}

\begin{theorem} \label{thm:fullconstruction}
Let the parameter values be such that propositions 5 through 9 hold.
If we start at the all-zero policy, Greedy PI on the full construction takes at least $2^n$ iterations to arrive at the optimal policy under the total reward criterion. 
\end{theorem}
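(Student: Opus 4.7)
The plan is to prove Theorem 2 by induction on the rounds of Greedy PI, where each round consists of the four prescribed phases (0, 1, 2, 3) and corresponds to one increment of the binary counter encoded by $B$. The inductive hypothesis is that at the start of a round the strong invariant holds with some set $B \subsetneq [n]$; I would then show that Greedy PI executes Phases 0 through 3 exactly as prescribed and arrives at a policy satisfying the strong invariant with $B$ replaced by $(B \cup \{i\}) \setminus \{1, \ldots, i-1\}$, where $i = \min \bar{B}$. The base case is immediate: $\pi_0$ satisfies the strong invariant with $B = \emptyset$, so after Phase 0 of the first round one can appeal directly to the subsequent analysis.

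The inductive step proceeds phase by phase using two engines: \autoref{lemma:switchable}, which restricts switchable nodes to the bits $b_i$ with $i \notin B$, and \autoref{prop:action1}, which pinpoints the appeal-maximizing action at each such bit. In Phase 0 each bit outside $B$ increments its gadget action $a^{i'}_\ell \mapsto a^{i'}_{\ell+1}$, and the counting argument from the ``Cycling'' paragraph guarantees that the unique first bit to reach $a^i_{f(i)}$ is $b_i$ for $i = \min \bar{B}$. Phase 1 is then forced by \autoref{prop:action1} and \autoref{lemma:action1}: $b_i$ switches to action $1$, while every other non-$B$ bit increments once more. Phases 2 and 3 mirror the analysis of \autoref{thm:hybrid} for the simple construction: Phase 2 produces switches of $c_i$ and of each $w_j$ (for $j \leq i$) onto $b_i^-$ because $\val(b_i)$ now dominates every other bit value, and Phase 3 resets the lower bits, flips $d_i$ to $1$, and (when $i > 1$) resets the gadgets of higher non-$B$ bits to $a^{i'}_0$. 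After Phase 3 the strong invariant holds with the new $B$, closing the induction. Since the starting policy corresponds to $B = \emptyset$ and the optimal policy to $B = [n]$, and since each round advances $B$ by one binary-counter increment, Greedy PI performs at least $2^n$ iterations.

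At every phase boundary I would verify three things: (i) that the nodes whose predecessors' values just changed are still characterized correctly by the switchable-set description, (ii) that the appeal-maximizing action at each switchable node matches the prescription (invoking \autoref{prop:action1} for the bits and direct appeal computations for $c_i, d_i, w_j$ as in the proofs of \autoref{lemma:phases} and \autoref{lemma:switchable}), and (iii) that the resulting policy satisfies the weak invariant throughout the round and the strong invariant at its end. The transition from the strong to the weak invariant during Phase 0 is what allows the bookkeeping: the propositions in \autoref{sec:propositions} were stated under the weak invariant precisely to accommodate the intermediate states of the gadgets.

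The main obstacle is bookkeeping the simultaneous switches of Greedy PI: unlike hybrid PI, every switchable node switches each iteration, so I must check that the gadget increments at other non-$B$ bits during Phases 1--2 do not disturb the large-scale value comparisons driving the $w_j$, $c_i$, and $d_i$ switches. This is controlled by Propositions \ref{prop:rewcost} and \ref{prop:cost}, which give that the large rewards $r(i)$ and large costs $c(i)$ dominate $\dmax + 2n\emax + \sum_{j<i} r(j)$; the gadget's $\delta_j$ and $\emax$ contributions are therefore lower-order and cannot flip any of the comparisons. The subtlest point is the Phase 3 reset of higher bits to $a^{i'}_0$ when $i \neq 1$, which requires $\val(w_1) > \val(b_1) + \dmax$ after Phase 2; this follows because, once the $w_j$ for $j \leq i$ point to $b_i^-$, the value $r(i) - c(i)$ propagates into $\val(w_1)$ but not into $\val(b_1)$, and by \autoref{prop:rewcost} this gap swamps $\dmax$. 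With this gap in hand, $a^{i'}_0$ (which targets $w_1$) has strictly greater appeal than any $a^{i'}_{\ell}$ with $\ell \geq 1$ (which target $b_1$), and the reset follows.
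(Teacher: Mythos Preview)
Your proposal is correct and follows essentially the same approach as the paper's proof: both argue by induction over rounds, proceeding phase by phase using \autoref{lemma:switchable} and \autoref{prop:action1} to determine the switches in each phase, and both identify the Phase~3 comparison $\val(w_1) > \val(b_j) + \dmax$ (for $j \neq i$) as the key step driving the gadget resets. The paper's proof simply carries out in detail the appeal computations you outline for Phases~2 and~3, with the same case split on whether $i = 1$.
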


\begin{proof}
The outline of this proof is similar to that of the proof of \autoref{thm:hybrid}, the equivalent theorem for the simple construction. 
We start with the all-zero policy, which satisfies the invariant.
We first show that when the invariant is satisfied, Greedy PI proceeds in the four previously described phases, ending Phase 4 with the invariant again satisfied. 
Since each set of four phases involves adding the lowest zero bit to $B$ and resetting all lower bits to zero, Greedy PI behaves exactly as a binary counter and iterates through all binary strings for the bits $b_i$ before reaching the optimal all-one policy, where all bits $b_i$ are set to 1.
Thus, it remains only to show that Greedy PI follows the four phases, preserving the invariant at the end of Phase 4.

Suppose that the invariant holds. We show that Greedy PI proceeds in our described phases.
\paragraph{Phase 0.} The weak invariant is maintained throughout this phase. By \autoref{lemma:switchable}, the set of switchable nodes is exactly the set of $b_j$ where $j \notin B$. If no $b_j$ already selects $a^j_{f(j)}$, by \autoref{prop:action1}, each $b_j$ increments its action $a^j_\ell$ to $a^j_{\ell+1}$.
Otherwise, some $b_i$ selects $a^i_{f(i)}$, and Phase 0 has completed. There can only be one such $b_j$, because the invariant requires that every node $b_{j'}$ for $j' \notin B$ starts by selecting $a^{j'}_0$ or $a^{j'}_3$, and $f(\cdot)$ differs by at least 6 on any pair of distinct inputs.

\paragraph{Phase 1.} Because the weak invariant is still satisfied, the set of switchable nodes is exactly the set of $b_j$ where $j \notin B$. $b_i$, which currently selects $a^i_{f(i)}$, switches to action 1.
By \autoref{prop:action1}, each $b_j$ where $j \neq i$, $j \notin B$ increments its action. 

\paragraph{Phase 2.} Since no node selects an action to $b_i$, changing the value of $b_i$ can only make $w_j$ for $j \leq i$ and $c_i$ switchable. Furthermore, the values of all nodes other than $b_i$ are the same as when the weak invariant is satisfied.
Thus, $\val(b_i) = r(i) + \val(d_{i+1})$. 

We first show that $w_j$ switches to $b_i$ for all $j \leq i$. For any such $w_j$, $\appeal(w_j, b_i) = \val(b_i) - c(i) \geq (\sum_{\substack{j' \geq i \\ j' \in B}} r(j') - c(j')) -2n\emax$.
For any $\ell$ such that $\ell \neq i$, we have $\appeal(w_j, b_\ell) \leq \sum_{\substack{j' > i \\ j' \in B}} r(j') - c(j') + \sum_{\substack{j' < i \\ j' \in B}} r(j')$. Since $r(i) - c(i) - 2n\emax > \sum_{\substack{j' < i \\ j' \in B}} r(j')$ by \autoref{prop:rewcost}, $\appeal(w_j, b_i) > \appeal(w_j, b_\ell)$. 
Thus, $w_j$ switches to $b_i$ for all $j \leq i$.

The only other node that may become switchable is $c_i$. $c_i$ currently selects action 0 to $w_{i+1}$ with zero reward, so $\val(c_i)  = \appeal(c_i, 0) = \val(w_{i+1})$. Taking action 1 has $\appeal(c_i, 1) \geq \val(b_i) - \emax = \val(d_{i+1}) + r(i) - \emax$. 
We claim that the values of $d_{i+1}$ and $w_{i+1}$ are nearly equal. If $i+1 \in B$, $d_{i+1}$ takes its action 1 to $c_{i+1}$, which takes its action 1 to $b_i^-$. A small cost of at most $\emax$ is incurred on each of these edges.
$w_{i+1}$ takes its action 1 with no cost, also leading to $b_i^-$.
Thus, $\appeal(c_i, 1) \geq \val(d_{i+1}) + r(i) - 3\emax > \appeal(c_i, 0)$.
If $i + 1 \notin B$, $d_{i+1}$ takes its action 0 to $w_{i+2}$, which takes its action with zero reward to the least bit in $B$ that is at least $i+2$ (or the sink, if no such bit exists). $w_{i+1}$ takes its action to that same bit, also with zero reward.
Therefore, $\val(d_{i+1}) = \val(w_{i+1})$, and $\appeal(c_i, 1) > \val(d_{i+1}) = \appeal(c_i, 0)$. We've shown that in either case, $c_i$ is switchable to 1. 

Finally, we note that the nodes $b_{i'}$ where $i' \neq i$ and $i' \in B$ remain switchable. Since the values of all nodes reachable by the $b_{i'}$'s remain unchanged, these nodes again increment their actions.

\paragraph{Phase 3.} We first show a useful fact, that $\val(w_1) = \val(b_i) - c(i) > \val(b_j) - c(j) + 2n\emax + \dmax$ for any $j \neq i$. This will later help us show that the lower bits switch to the action $a^j_0$, going to $w_1$ with reward 0.

$w_1$ switched to point to $b_i^-$ in Phase 2, so its value is now $\val(b_i) - c(i)$. Since the weak invariant is still satisfied for all higher indexed nodes (than $i$), we still have $\val(b_i) \geq r(i) + \left(\sum_{\substack{j > i \\ j \in B}} r(j) - c(j)\right)$ by \autoref{prop:valbound}.
\autoref{prop:valbound} already gives us that for any $j > i$ where $j \in B$, $\val(b_j) - c(j) + 2n\emax + \dmax \leq \left(\sum_{\substack{i' > i \\ i' \in B}} r(j) - c(j)\right) + 2n\emax + \dmax < \val(b_i) - c(i)$ as desired, since $r(i) - c(i) > 2n\emax + \dmax$ by \autoref{prop:rewcost}. If $j > i$ but $j \notin B$, $b_j$ goes to $w_1$ with reward at most $\dmax$; thus, $\val(b_j) - c(j) + 2n\emax + \dmax < \val(w_1)$.
We now consider $j < i$. For all $j' < i$, recall that $b_{j'} = c_{j'} = d_{j'} = 1$ by the invariant. Thus, the nodes $d_j, c_j, b_j$ for $j < i$ are all set to 1 and form a path up the right side of the graph to $d_i$, as in the simple construction. 
Thus, $\val(b_j) \leq \val(d_i) + \sum_{\substack{i > j' > j \\ j' \in B}} r(j') - c(j')$.
Since $d_i = 0$, $\val(d_i) \leq \val(w_{i+1}) \leq \left(\sum_{\substack{j' > i \\ j' \in B}} r(j') - c(j')\right)$. 
Again, \autoref{prop:valbound} still holds for $b_i$, so $\val(b_i) \geq r(i) + (\sum_{\substack{j' \geq i \\ j' \in B}} r(j') - c(j')) - 2n\emax$. Since $r(i) - c(i) - 2n\emax - \dmax > \sum_{j' < i} r(j')$ by \autoref{prop:rewcost}, $\val(w_1) = \val(b_i) - c(i) > \val(d_i) + 2n\emax - \dmax \geq \val(b_j) + 2n\emax + \dmax$.
Thus, $\val(w_1) > \val(b_j) + 2n\emax + \dmax$ for any $j \neq i$.

We first show that $b_j$ switches to $a^j_0$ for all $j < i$. $b_j$ is currently set to 1 by the invariant. Since $\val(w_1) > \val(b_j) - c(j)$, $b_j$ is switchable to $a^j_0$. All other actions $a^j_c$ for $c > 0$ go to $b_1$ with negligible reward at most $\dmax$. Thus, if $i \neq 1$, $\val(w_1) > \appeal(b_j, a^j_c)$ for all $c > 0$.
If $i = 1$, the actions' appeals are decreasing by their indices. Thus, $b_j$ increments its action.

We next show that $d_i$ switches to 1. We showed at the beginning of this phase's proof that $\val(b_i) - c(i) > \val(b_j) - c(j) + 2n\emax$ for any $j \neq i$. 
Since $w_{i+1}$ points to some higher bit $j \neq i$, $\val(w_{i+1}) \leq \val(b_j) - c(j)$ for that $j$.
Thus, $\appeal(d_i, 1) = \val(b_i) - c(i) - \ep(d_i) > \val(w_{i+1})$.

We next show that $c_j$ and $d_j$ switch to 0 for all $j < i$. We have already shown that $\val(b_i) - c(i) = \val(w_1) > \val(b_1) + \dmax$. Since $w_j$ switched to $b_i$ in Phase 2, we have that $\appeal(c_j, 0) = \appeal(d_j, 0) = \val(w_j) = \val(b_i) - c(i)$. The current value of $c_j$ and $d_j$ is at most $\val(b_j) \leq \val(b_1) + \dmax$. We have already shown that $\val(b_i) - c(i) > \val(b_1)  + \dmax$. Thus, $c_j$ and $d_j$ switch to 0 and go to $w_j$, whose value is $\val(b_i) - c(i)$.

For all $i' > i$, $i' \notin B$, $b_{i'}$ switches to $a^{i'}_0$ if $i \neq 1$. $\appeal(b_{i'}, a^{i'}_0) = \val(w_1)$. As previously shown, $\val(w_1) > \val(b_{j}) + \dmax$ for all $j \neq i$. Thus, $b_{i'}$ is switchable to $a^{i'}_0$. 
$b_{i'}$ is not switchable to 1, since its resulting value would be at most $\val(b_j) - c(j)$ for some higher $j$, and $\val(b_j) - c(j) < \val(w_1)$.
Furthermore, for any $c > 0$, $\appeal(b_{i'}, a^{i'}_c) \leq \val(b_1) + \dmax < \val(w_1)$. Thus, $a^{i'}_0$ is the action with highest appeal, and $b_{i'}$ switches to $a^{i'}_0$.
If $i = 1$, we again have $\val(w_1) > \val(b_{j})$ for all $j \neq i$, so $b_{i'}$ is switchable. $b_{i'}$ is not switchable to 1 for the same reason as before. Since $w_1 = 1$, $\val(w_1) = \val(b_1)$, and the appeals of the actions at $b_{i'}$ are ordered based on their indices. 
Thus, $b_{i'}$ increments its action.

\end{proof}

We showed previously that for our parameter values ($r(i) = 2^{2i + 1}, c(i) = 2^{2i}, \epsilon = 2^{-100n}$, and $\delta = 2^{-100n}$), the propositions hold. Therefore, Greedy PI takes at least $2^n$ iterations to arrive at the optimal policy on the full construction MDP with these parameter values.

This result also holds for the average reward criterion. Policy iteration under the average reward criterion determines which actions to switch based first on a \emph{gain function}, then based on a \emph{bias function} in the case that multiple actions yield equal gain. 
Fearnley \cite{fearnley2010exponential} notes that for MDPs that are guaranteed to reach a 0-reward sink state, like our full construction, this gain function is always zero. 
Here, the bias function also becomes equivalent to our appeal function.
Thus, on the full construction, the choices that Greedy PI makes under the average reward criterion are the same as those made under the total reward criterion. This gives us the following corollary:

\begin{corollary}
Let the parameter values be such that propositions 5 through 9 hold.
If we start at the all-zero policy, Greedy PI on the full construction takes at least $2^n$ iterations to arrive at the optimal policy under the average reward criterion. 
\end{corollary}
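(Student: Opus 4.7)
The plan is to reduce this to the already-proved total reward result (\autoref{thm:fullconstruction}) by observing that Greedy PI makes identical choices under both criteria on our construction. First I would verify that the full construction has the property that every policy eventually reaches the sink with probability 1: from any state, following the selected actions forward, one either terminates at the sink directly, or follows the right-column chain $b_i \to d_{i+1} \to c_{i+1} \to \cdots$ or the loop-back at $b_i^+$; in every case the only absorbing state is the 0-reward sink, since the gadget actions $a^i_j$ route through $b_1$ which in turn funnels toward $w_2$, and the $w_j$ chain always terminates at the sink or feeds into a higher-indexed bit. Thus every Markov chain induced by a policy reaches the sink almost surely.

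Next I would invoke the standard structural fact (noted in \cite{fearnley2010exponential}) that for an MDP in which every policy almost surely reaches a 0-reward absorbing state, the average reward (gain) of every state under every policy is identically 0. Consequently, under the average reward criterion, Greedy PI never has an opportunity to discriminate policies using the gain function; the tiebreaking mechanism, the bias function, takes over completely. For such MDPs the bias of a state under policy $\pi$ equals its total expected reward until absorption, which is exactly $\val^\pi(s)$ as defined for the total reward criterion. Likewise, the average reward notion of appeal of an action $a$ at state $s$ (used to decide switchability) coincides with $\appeal^\pi(s,a)$ from the total reward criterion.

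With these two observations, the set of switchable states and the action of greatest appeal at each state, as computed by Greedy PI under the average reward criterion, match precisely those computed under the total reward criterion at every iteration. Therefore starting from the all-zero policy $\pi_0$, the trajectory of policies produced by Greedy PI under average reward is identical to that under total reward, and the $2^n$ lower bound of \autoref{thm:fullconstruction} transfers verbatim.

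The only real step with any content is the verification that every induced Markov chain in the full construction is absorbing into the sink; the cycling probabilistic actions $a^i_j$ at $b_i$ (which self-loop with probability $1 - 2^{-2j}$) and the action 1 at $b_i$ (which self-loops with probability $1 - 2^{-1000n}$) require a brief check, but since these are the only loops and each has strictly positive escape probability to $b_1$ or $b_i^+$ respectively, absorption probability 1 follows. Everything else is a direct invocation of the gain/bias equivalence.
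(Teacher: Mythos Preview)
Your proposal is correct and follows essentially the same approach as the paper: both arguments observe that since the full construction always reaches the 0-reward sink, the gain is identically zero (citing Fearnley), so the bias function coincides with the total-reward appeal, and hence Greedy PI under average reward makes the same choices as under total reward. You add a slightly more careful verification that every policy is absorbing into the sink, which the paper simply asserts, but otherwise the reasoning is the same.
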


\subsection{Robust construction} 

We show that our construction can be made robust to perturbations of the rewards and probabilities by replacing certain edges with gadgets.
We first show a gadget that allows us to manufacture a reward between $2^i$ and $2^{i+2}$ with probability 1.

\begin{figure}
    \centering
    \includegraphics[width=380pt]{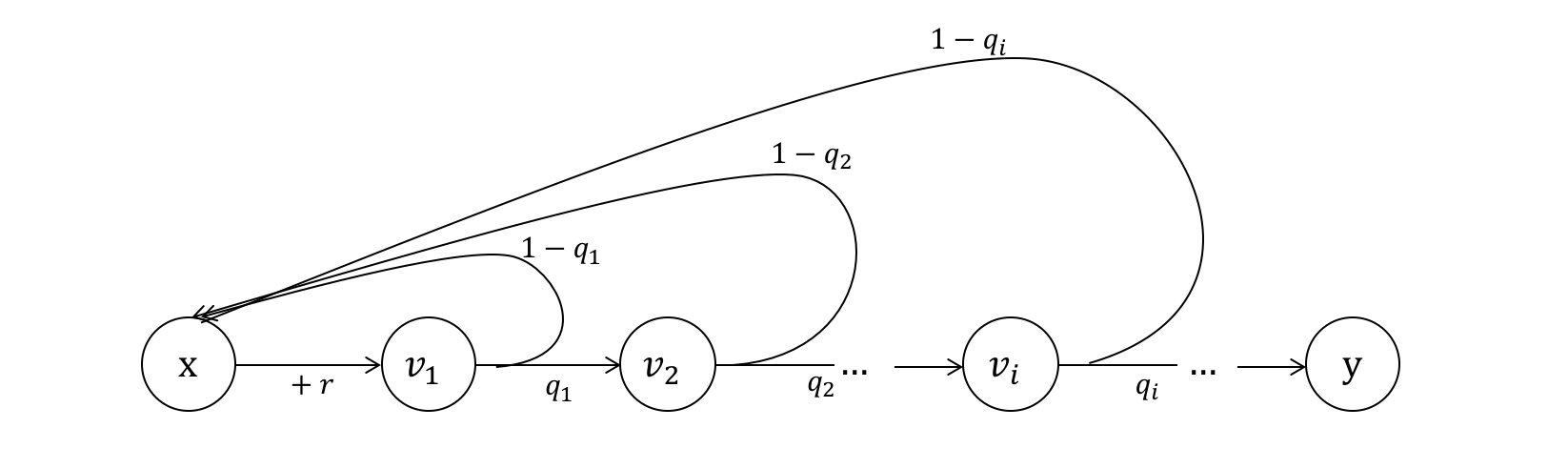}
    \caption{Gadget $g_2(k)$ with $k$ intermediate vertices allows us to manufacture an exponentially large reward between $x$ and $y$.}
    \label{fig:rewgadget}
\end{figure}

\begin{lemma}
If the probabilities and rewards are perturbed by at most $\frac{1}{4k'}$ for any $k' \geq k$, the gadget $g_2(k)$ with reward $r = 1 - \frac{1}{2k}$ and probabilities $q_i = \frac{1}{2} + \frac{1}{4k'}$ yields $\val(y) + 2^{k-2} \leq \val(x) \leq \val(y) + 2^{k}$ after perturbation.
\end{lemma}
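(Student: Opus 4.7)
The plan is to reduce the computation of $\val(x)-\val(y)$ to a clean closed form via a probabilistic interpretation, and then to bound this expression over the allowed perturbations of the $q_i$'s and $r$.

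First I would set $d_i := \val(v_i) - \val(y)$ for the intermediate vertices $v_1,\dots,v_k$, with $d_{k+1}=0$ and (assuming $x$ enters $v_1$ deterministically with zero reward) $d_0 = d_1$. Rather than unroll the Bellman recurrence $d_i = q_i d_{i+1} + (1-q_i)(r + d_1)$ directly, I would argue probabilistically: starting at $v_1$, each ``attempt'' either succeeds in reaching $y$ (with probability $P := \prod_{i=1}^{k} q_i$) or resets to $x$ and hence restarts at $v_1$ (with probability $1-P$), so the number of failed attempts before termination is geometric with mean $P^{-1}-1$. Since $r$ is collected exactly once per reset, it follows that
\[
\val(x) - \val(y) \;=\; r\bigl(P^{-1} - 1\bigr),
\]
and crucially, this formula holds for any, possibly heterogeneous, choice of the $q_i$'s.

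Next I would translate the perturbation constraints into bounds on $P$ and $r$. After perturbation of magnitude at most $1/(4k')$, each $q_i$ lies in $[\tfrac12,\, \tfrac12 + \tfrac{1}{2k'}]$ and $r$ lies in $[1-\tfrac{1}{2k}-\tfrac{1}{4k'},\, 1-\tfrac{1}{2k}+\tfrac{1}{4k'}]$. For the upper bound, $P \geq 2^{-k}$ gives $P^{-1}-1 \leq 2^k-1$, and $r \leq 1$ (using $k' \geq k$), so $\val(x)-\val(y) < 2^k$. For the lower bound, $P \leq (\tfrac12 + \tfrac{1}{2k'})^k$, so
\[
P^{-1} \;\geq\; \frac{2^k}{(1+1/k')^k} \;\geq\; \frac{2^k}{e},
\]
using the standard inequality $(1+1/k')^{k'} < e$ together with $k \leq k'$. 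Combined with $r \geq 1 - 3/(4k)$, this yields $\val(x)-\val(y) \geq (1-\tfrac{3}{4k})(2^k/e - 1)$, which exceeds $2^{k-2}$ whenever $k$ is at least a small constant, since $1/e > 1/4$.

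The main subtlety lies in the first step: perturbations are heterogeneous, in the sense that each $q_i$ may be perturbed independently (subject only to the complementary probability also lying in its perturbed range so that the two sum to $1$), so the closed form must accommodate arbitrary $q_1,\dots,q_k$ in the allowed interval. The probabilistic interpretation handles this cleanly because it depends on the $q_i$'s only through the single product $P$, which lets both extreme cases be analyzed by elementary inequalities on $P$. A secondary technical point is verifying that the constant in the lower bound really is $\geq 2^{k-2}$ for the range of $k$ used by the construction, but this is a short arithmetic check relying on $e < 4$.
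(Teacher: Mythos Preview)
Your approach is essentially the paper's: derive a closed form for $\val(x)-\val(y)$ in terms of the single product $P=\prod_{i=1}^k q_i$, then bound $P$ and $r$ over the perturbation interval using $(1+1/k')^k\le e$. One small slip: in $g_2(k)$ the reward $r$ sits on the edge from $x$ into $v_1$ (the paper's derivation begins with $\val(x)=r+\val(v_1)$), not on the reset edges, so $r$ is collected once per \emph{attempt} rather than once per failure. The correct closed form is therefore $\val(x)-\val(y)=r/P$, not $r(P^{-1}-1)$. The discrepancy is at most $r\le 1$, so your bounds are unaffected and in fact simplify: $r/P\le 1\cdot 2^k$ for the upper bound, and $r/P\ge (1-1/k)\cdot 2^k/e\ge 2^{k-2}$ for $k>3$, which is exactly what the paper writes.
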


\begin{proof}
First, we show that without perturbation, we have $\val(x) = \frac{r}{\prod_{i=1}^k q_i} + \val(y)$.
\begin{align*}
    \val(x) &= r + \val(v_1) \\
    &= r + \left( \prod_{i=1}^k q_i\right) \val(y) + \left(1-\prod_{i=1}^k q_i \right) \val(x)\\
    \val(x) &= \frac{r}{\prod_{i=1}^k q_i} + \val(y)
\end{align*}
Let $q'_i$ denote the value of $q_i$ after perturbation.
If we choose $q_i = \frac{1}{2} + \frac{1}{4k'}$, then after perturbation we must have $q'_i \in \left[\frac{1}{2}, \frac{1}{2} + \frac{1}{2k'}\right]$ for each $i$, we have $\frac{1}{2^k} \leq \prod_{i=1}^k q'_i \leq \frac{e}{2^k}$. 
Let $r'$ denote the value of the reward $r$ after perturbation.
If we choose $r = 1 - \frac{1}{2k}$, then we must have $r' \in [1 - \frac{1}{k}, 1]$.
Observe that $r'(\prod_{i=1}^k q'_i)^{-1} \geq \frac{2^k(1-\frac{1}{k})}{e} \geq 2^{k-2}$ for $k > 3$.
We also have $r'(\prod_{i=1}^k q'_i)^{-1} \leq 2^k$. 
\end{proof}
Note that we can similarly construct a negative reward (cost) between $-2^{i+2}$ and $-2^i$ by using $r = - (1-\frac{1}{2k})$.

We now show that we can use the gadget $g_2$ with zero reward to create exponentially small probabilities.

\begin{lemma} \label{lemma:g2prob}
Let $\appeal(x)$ be the appeal of taking the action leading to the gadget $g_2$ from $x$.
If the probabilities and rewards are perturbed by at most $\frac{1}{3000n^2}$, then for any $k \leq 1000n + 2$, the gadget $g_2(k)$ with reward $r = 0$ and probabilities $q_i = \frac{1}{n} + \frac{1}{1000n^2}$ yields $\appeal(x) = (1 - p)\val(x) + p\val(y)$ after perturbation, for $\frac{1}{n^k} \leq p \leq \frac{1}{n^{k-2}}$.
\end{lemma}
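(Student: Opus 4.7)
The plan is to mimic the algebraic computation from the previous gadget lemma, then carefully chase through the constants. First, I derive the closed form for $\appeal(x)$. Writing $\val(v_i)$ for the value of the $i$-th intermediate vertex under the perturbed probabilities $q'_1,\dots,q'_k$, the recurrence $\val(v_i) = q'_i \val(v_{i+1}) + (1 - q'_i) \val(x)$ with $v_{k+1} := y$ unrolls by induction on $k-i$ to $\val(v_1) = \bigl(\prod_{i=1}^k q'_i\bigr) \val(y) + \bigl(1 - \prod_{i=1}^k q'_i\bigr) \val(x)$, exactly as in the previous lemma. Because $r = 0$ is a zero reward, the smoothed model leaves it untouched, so the appeal of the action from $x$ into the gadget is just $\val(v_1)$. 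Setting $p := \prod_{i=1}^k q'_i$ then gives $\appeal(x) = p\val(y) + (1-p)\val(x)$, as claimed.

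Next I bound $p$ from above and below. By the perturbation model each $q'_i$ lies in $[q_i - \sigma,\ q_i + \sigma]$ with $\sigma = \tfrac{1}{3000n^2}$, and the complementary loop-back probability $1 - q'_i$ is automatically determined and stays positive since $q_i \approx 1/n$. Rewriting $q_i = \tfrac{1}{n} + \tfrac{3}{3000n^2}$, I obtain $q'_i \in \bigl[\tfrac{1}{n} + \tfrac{2}{3000n^2},\ \tfrac{1}{n} + \tfrac{4}{3000n^2}\bigr]$. In particular $q'_i > \tfrac{1}{n}$ strictly, so $p > \tfrac{1}{n^k}$, establishing the lower bound. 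For the upper bound, I factor $q'_i \leq \tfrac{1}{n}\bigl(1 + \tfrac{4}{3000n}\bigr)$, so $p \leq \tfrac{1}{n^k}\bigl(1 + \tfrac{4}{3000n}\bigr)^k$; it then suffices to prove $\bigl(1 + \tfrac{4}{3000n}\bigr)^k \leq n^2$. Using $\ln(1+x) \leq x$ and $k \leq 1000n + 2$, the log of this quantity is at most $k \cdot \tfrac{4}{3000n} \leq \tfrac{4(1000n+2)}{3000n} = \tfrac{4}{3} + O(1/n)$, which for every $n \geq 2$ is strictly less than $2\ln n$. Hence $(1 + \tfrac{4}{3000n})^k \leq n^2$ and $p \leq \tfrac{1}{n^{k-2}}$.

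The main obstacle is not conceptual but arithmetic: the constants $1000$ and $3000$ appearing in $q_i$ and $\sigma$ are calibrated precisely so that the exponent $\tfrac{4(1000n+2)}{3000n} \approx \tfrac{4}{3}$ slides under $2\ln n$ over the whole range $k \leq 1000n+2$. The tightest case is $n = 2$, where $2\ln n \approx 1.386$ only barely exceeds $\tfrac{4}{3} \approx 1.333$, so I would verify this boundary case explicitly in the write-up and note that the slack grows quickly with $n$.
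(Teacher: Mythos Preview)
Your proposal is correct and follows essentially the same approach as the paper: derive the closed form $\appeal(x) = p\val(y) + (1-p)\val(x)$ with $p = \prod_i q'_i$, then bound $p$ via $(1+x)^k \leq e^{kx}$ and compare to $n^2$. Your arithmetic is in fact a bit more careful than the paper's own proof, which loosely writes $q'_i \in [\tfrac{1}{n}, \tfrac{1}{n} + \tfrac{1}{2000n^2}]$ and bounds $(1+\tfrac{1}{2000n})^k \leq e \leq n^2$; your explicit check of the boundary case $n=2$ is a nice touch.
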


\begin{proof}
First, observe that
\begin{align*}
    \appeal(x) &= \left(\prod_{i=1}^k q_i\right) \val(y) + \left(1 - \prod_{i=1}^k q_i \right) \val(x)
\end{align*}
where each $q_i$ satisfies $\frac{1}{n} \leq q_i \leq \frac{1}{n} + \frac{1}{2000n^2}$. 
Thus, $\frac{1}{n^k} \leq \prod_{i=1}^k q_i \leq \frac{1}{n^k}(1 + \frac{1}{2000n})^k$. For the right side of the inequality, we have that $(1 + \frac{1}{2000n})^k \leq e$, since $k \leq 1000n + 2$. Since $e \leq n^2$, we have $(1 + \frac{1}{2000n})^k \leq \frac{1}{n^{k-2}}$, as desired.

\end{proof}

We now show a gadget allowing us to make exponentially small rewards. This is used to make the $\ep$ costs in the full construction.

\begin{figure}
    \centering
    \includegraphics[width=400pt]{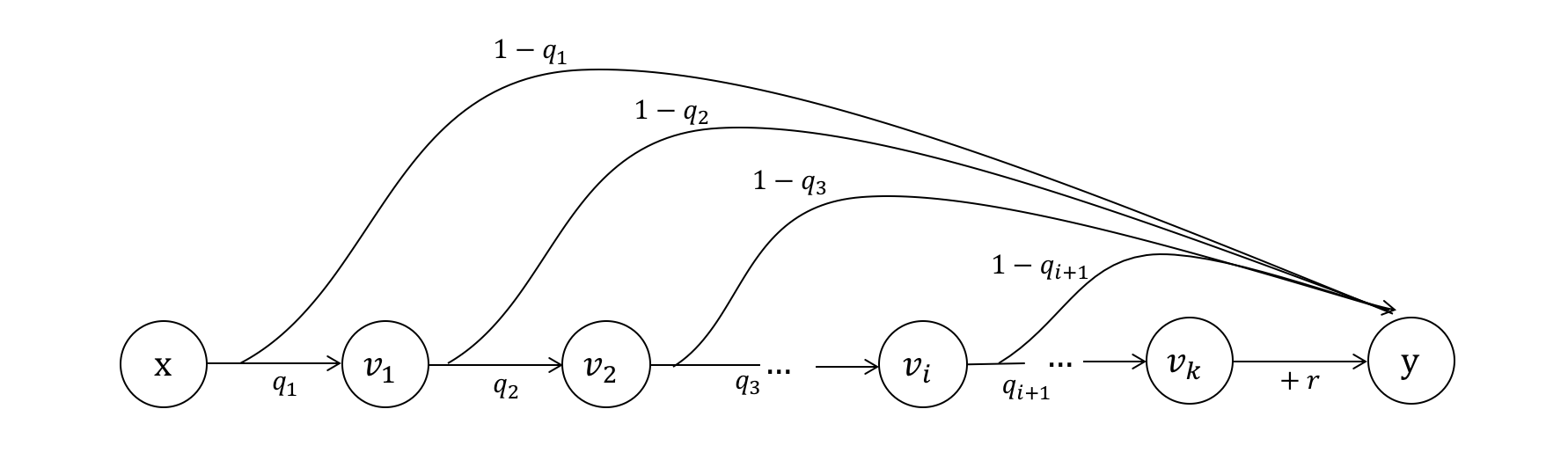}
    \caption{Gadget $g_3(k)$ with $k$ intermediate vertices allows us to manufacture an exponentially small reward between $x$ and $y$.}
    \label{fig:gadget_small_reward}
\end{figure}

\begin{lemma}
 If the probabilities and rewards are perturbed by at most $\frac{1}{4k'}$ where $k' \geq k$, the gadget $g_3(k)$ with reward $r = 1 + \frac{1}{4k}$ and probabilities $q_i = \frac{1}{2} + \frac{1}{4k'}$ yields $\val(y) + 2^{-k} \leq \val(x) \leq \val(y) + 2^{-k + 2}$ after perturbation. 
\end{lemma}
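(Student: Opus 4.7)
The plan is to mirror the structure of the preceding lemma for $g_2(k)$. First I would unroll the value recursion in $g_3(k)$, propagating values from $y$ back through the intermediate probabilistic nodes to $x$. Unlike $g_2$, whose chain loops back to $x$ on failure and therefore amplifies the reward by a factor of $1/\prod q_i$, the gadget $g_3$ is arranged so that a single successful pass down the chain is required to collect the reward $r$, with failures terminating directly at $y$. Unrolling the resulting linear system node by node should yield a closed form of the shape $\val(x) = \left(\prod_{i=1}^k q_i\right) r + \val(y)$, in which the reward is damped by the product of the $q_i$ rather than amplified by its reciprocal; this is precisely what makes an exponentially small effective reward achievable with only polynomially-bounded parameters.

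Given this recursion, the remainder of the proof becomes a perturbation calculation parallel to the one already carried out for $g_2$. Under perturbation by at most $1/(4k')$, each perturbed probability $q'_i$ lies in $[1/2,\, 1/2 + 1/(2k')]$, so
\[
2^{-k} \;\leq\; \prod_{i=1}^{k} q'_i \;\leq\; \left(\tfrac{1}{2} + \tfrac{1}{2k'}\right)^k \;=\; 2^{-k}\left(1 + \tfrac{1}{k'}\right)^k \;\leq\; e \cdot 2^{-k},
\]
using $(1+1/k')^k \leq (1+1/k)^k \leq e$ whenever $k' \geq k$, exactly as in the $g_2$ lemma. Since $r = 1 + 1/(4k)$ and the perturbation satisfies $1/(4k') \leq 1/(4k)$, the perturbed reward $r'$ lies in $[1,\, 1 + 1/(2k)]$.

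Combining these bounds yields
\[
\val(x) - \val(y) \;=\; r' \cdot \prod_{i=1}^{k} q'_i \;\in\; \left[1 \cdot 2^{-k},\; \left(1 + \tfrac{1}{2k}\right)\cdot e \cdot 2^{-k}\right] \;\subseteq\; \left[2^{-k},\, 2^{-k+2}\right],
\]
where the upper inclusion uses $(1+1/(2k)) \cdot e \leq 4$ for $k$ past a small constant. The main obstacle I anticipate is confirming the exact form of the value recursion for $g_3(k)$ from its figure, i.e.\ that the chain indeed damps rather than amplifies the reward; once that is pinned down, the perturbation bookkeeping is essentially identical to the $g_2$ analysis.
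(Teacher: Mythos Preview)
Your proposal is correct and follows essentially the same approach as the paper: derive the closed form $\val(x) = \val(y) + r\prod_{i=1}^k q_i$ from the gadget structure, bound the perturbed probabilities and reward in the intervals $[1/2,\,1/2+1/(2k')]$ and $[1,\,1+1/(2k)]$ respectively, and combine to get $r'\prod q'_i \in [2^{-k},\, e(1+1/(2k))2^{-k}] \subseteq [2^{-k},\,2^{-k+2}]$ for $k$ beyond a small constant. Your intuition about the recursion is correct; the paper simply observes that the probability of traversing the reward edge is $\prod q_i$, which is exactly the damping you anticipated.
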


\begin{proof}
 Observe that the probability of traversing the edge with the reward $r$ is $\prod_{i=1}^k q_i$. Thus, without perturbations, $\val(x) = \val(y) + r\prod_{i=1}^k q_i$.
For each $i$, let $q'_i$ denote the probability $q_i$ after perturbation. Choosing $q_i = \frac{1}{2} + \frac{1}{4k'}$ for each $i$, we have $q'_i \in [\frac{1}{2}, \frac{1}{2} + \frac{1}{2k'}]$.
Thus, $\frac{1}{2^k} \leq \prod_{i=1}^k q'_i \leq \frac{e}{2^k}$.
Let $r'$ denote the reward $r$ after perturbation. Choosing $r = 1 + \frac{1}{4k}$, we have $r' \in [1, 1 + \frac{1}{2k}]$. Thus, $\frac{1}{2^k} \leq r' \prod_{i=1}^k q'_i \leq \frac{e(1 + \frac{1}{2k})}{2^k} \leq \frac{1}{2^{k-2}}$ for $k > 2$.
\end{proof}
Again, we can similarly construct a small negative reward between $-2^{-k+2}$ and $-2^{-k}$ by using $r = -(1 + \frac{1}{4k})$.

\subsubsection{Constructing the parameters $r(i), c(i), \ep, \delta_j, p_j, \alpha$} \label{sec:robustparams}

We first describe how to construct each of these parameters, either naively or by using the gadgets. Note that the \emph{raw} values of the rewards and probabilities used in the MDP, presented below, lie in $[-2, 2]$. While we later discuss the \emph{effective} rewards  yielded by the gadgets, which are exponential, the actual parameters are small.

\begin{description}
\item[\textbf{Reward $r(i)$}.] We construct $r(i)$ using $g_2(7(i-1) + 6)$, with $k' = 10n$. Each probability $q_j$ is set to $\frac{1}{2} + \frac{1}{40n}$, with allowable perturbation up to $\frac{1}{40n}$. The reward $r$ used in the gadget is set to $1 - \frac{1}{2(7(i-1) + 6)}$, with perturbation up to $\frac{1}{40n}$.

\item[\textbf{Cost $c(i)$}.] We construct $c(i)$ using $g_2(7(i-1) + 3)$, with $k' = 10n$. Each probability $q_j$ is set to to $\frac{1}{2} + \frac{1}{40n}$, with allowable perturbation up to $\frac{1}{40n}$. The cost $r$ used in the gadget is set to $- (1 - \frac{1}{2(7(i-1) + 3)})$, with perturbation up to $\frac{1}{40n}$.

\item[\textbf{Cost $\ep$}.] We construct $\ep$ using $g_3(100n)$, with $k' = 100n$. Each probability $q_j$ is set to $\frac{1}{2} + \frac{1}{400n}$, with allowable perturbation up to $\frac{1}{400n}$. The cost $r$ used in the gadget is set to $- (1 + \frac{1}{200n})$, with perturbation up to $\frac{1}{400n}$.

\item[\textbf{Reward $\delta_j$}.] We set $\delta_j = \frac{2j}{4n^2} + \frac{1}{8n^2}$ and allow perturbation up to $\frac{1}{8n^2}$.

\item[\textbf{Probability $p_j$}.] We construct $p_j$ using $g_2(4j+3)$ as described in \autoref{lemma:g2prob}, with reward $r = 0$, probabilities $q_\ell$ is set to $\frac{1}{n} + \frac{1}{1000n^2}$, with allowable perturbation up to $\frac{1}{2000n^2}$. \autoref{lemma:g2prob} applies because $4j + 3 \leq 1000n + 2$.

\item[\textbf{Probability $\alpha$}.] Recall that $\alpha$ is the probability associated with action 1 from each $b_i$. We construct $\alpha$ using $g_2(1000n + 2)$ with reward $r = 0$, as described in \autoref{lemma:g2prob}. Each probability $q_j$ within the gadget is set to $\frac{1}{n} + \frac{1}{1000n^2}$, with allowable perturbation up to $\frac{1}{2000n^2}$. 
\end{description}

Below, we present the \emph{effective} rewards and probabilities; i.e., the intervals that the parameters lie in after perturbation:
\begin{itemize}
    \item $r(i) \in [2^{7(i-1) + 4}, 2^{7(i-1) + 6}]$
    \item $c(i) \in [2^{7(i-1) + 1}, 2^{7(i-1) + 3}]$
    \item $\ep \in [2^{-100n}, 2^{-100n + 2}]$
    \item $\delta_j \in [\frac{2j}{4n^2}, \frac{2j+1}{4n^2}]$.
    $0 < \delta_1 < \delta_2 < \ldots < \dmax$, and $\dmax \leq \frac{4}{n}$, and $|\delta_{j} - \delta_{j'}| \geq \frac{1}{4n^2}$
    \item $p_j \in \left[\frac{1}{n^{4j + 3}}, \frac{1}{n^{4j + 1}}\right]$
    \item $\alpha \in [n^{-1000n - 2}, n^{-1000n}]$
\end{itemize}

\subsubsection{Reproving the propositions}
We reprove the propositions from \autoref{sec:propositions}. The propositions are sufficient to prove \autoref{thm:fullconstruction} for the full construction, giving us the analogous result for the robust construction: Greedy PI again requires at least $2^n$ iterations to arrive at the optimal policy.

Since each $\ep$ is at most $2^{-100n + 2}$, we can let $\emax = 2^{-100n + 2}$. Recall that $\dmax = \delta_{f(n)}$. Since $\delta_{f(n)} = \delta_{6n + 3} \leq \frac{2(6n+3) + 1}{4n^2} < \frac{4}{n}$, we can let $\dmax = \frac{4}{n}$.

\begin{customprop}{\ref{prop:rewcost}} \label{customprop:rewcost}
$r(i) - c(i) - 2n\emax - \dmax> \sum_{j < i} r(j)$ for all $i$.
\end{customprop}

\begin{proof}
Since $r(i) \geq 2^{7(i-1) + 4}$ and $c(i) \leq 2^{7(i-1)+3}$, $r(i) - c(i) \geq 2^{7(i-1)+3}$.
Since $r(j) \leq 2^{7(j-1) + 6}$, $\sum_{j < i} r(j) \leq \sum_{j \leq i-1} 2^{7(j-1) + 6} \leq 2^{7(i-2) + 7} - 2 = 2^{7(i-1)} - 2$.
Since $\emax + \dmax < 2$, $r(i) - c(i) - \emax - \dmax \geq 2^{7(i-1)+3} > 2^{7(i-1)} - 2 \geq \sum_{j < i} r(j)$. 
\end{proof}

\begin{customprop}{\ref{prop:cost}} \label{customprop:cost} 
$c(i) > \delta_{\text{max}} + 2n\emax + \sum_{j < i} r(j)$ for all $i$.
\end{customprop}

\begin{proof}
From the proof of the previous proposition, we have $\sum_{j < i} r(j) \leq 2^{7(i-1)} - 2$, and $2n\emax + \dmax < 2$. Thus, $\dmax + 2n\emax + \sum_{j < i} r(j) \leq 2^{7(i-1)}$. 
Since $c(i) \geq 2^{7(i-1) + 1}$, this is strictly less than $c(i)$. 
\end{proof}

\begin{customprop}{\ref{prop:maxval}} \label{customprop:maxval}
When the weak invariant is satisfied, $\val(b_1) \leq r(1) + \sum_{\substack{i \in B \\ i > 1}} r(i) - c(i)$. 
\end{customprop}

\begin{proof}
The original proof of this proposition for the full construction is not dependent on the exact values of the rewards or costs and thus still holds.
\end{proof}

\begin{customlemma}{\ref{lemma:wb}} \label{customlemma:wb}
When the weak invariant is satisfied, $\val(b_1) = \val(w_1)$.
\end{customlemma}

\begin{proof}
The original proof of this proposition for the full construction is not dependent on the exact values of the rewards or costs and thus still holds.
\end{proof}

\begin{customlemma}{\ref{lemma:action1}} \label{customlemma:action1}
Let the weak invariant be satisfied, where $b_i = a^i_{f(i)}$. Then for any parameters values for which Propositions 5-8 hold, the action of $b_i$ with greatest appeal is 1.
\end{customlemma}

\begin{proof}
The original proof of this proposition for the full construction is not dependent on the exact values of the rewards or costs and thus still holds.
\end{proof}

\begin{customprop}{\ref{prop:valbound}} \label{customprop:valbound}
When the weak invariant is satisfied, for every $i \in B$ we have 
$$r(i) + \left(\sum_{\substack{j > i \\ j \in B}} r(j) - c(j)\right) - 2(n-i+1)\emax \leq \val(b_i) \leq r(i) + \left(\sum_{\substack{j > i \\ j \in B}} r(j) - c(j)\right)$$
\end{customprop}

\begin{proof}
The original proof of this proposition for the full construction is not dependent on the exact values of the rewards or costs and thus still holds.
\end{proof}

 We now prove a useful lemma about the appeal of each $a^i_{j'}$ in terms of general $\delta_{j'}$.

\begin{lemma} \label{lemma:delta}
Let the weak invariant be satisfied.
Suppose the current action at $b_i$ is $a^i_j$, and let $\delta_0 = 0$. Then for any $j' \neq j$, we have $\appeal(b_i, a^i_{j'}) = \val(b_i) + p_{j'}(\delta_{j'} - \delta_j)$. 
\end{lemma}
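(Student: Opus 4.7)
The plan is a direct calculation in two steps: solve the value equation at $b_i$ under the current policy to get a clean relation between $\val(b_i)$ and $\val(b_1)$, then substitute into the one-step appeal formula.

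For the first step, under policy $a^i_j$ with $j\ge 1$ the value equation at $b_i$ reads
\[
  \val(b_i) \;=\; p_j\bigl(\delta_j + \val(b_1)\bigr) + (1-p_j)\val(b_i),
\]
since the action transitions to $b_1$ with effective probability $p_j$ (collecting $\delta_j$ only on that successful branch, matching the semantics implicitly used in the excerpt's \autoref{prop:action1} calculation) and otherwise returns to $b_i$. Rearranging gives the clean identity $\val(b_i) = \val(b_1) + \delta_j$. The case $j=0$ is consistent by convention: $a^i_0$ is deterministic to $w_1$ with zero reward, and \autoref{lemma:wb} gives $\val(w_1)=\val(b_1)$, matching $\delta_0:=0$. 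In the robust construction, $p_j$ is realized by the gadget $g_2(4j+3)$ with internal reward $0$; since each internal gadget vertex has a single forced action, recursively solving their value equations collapses the gadget to the same effective one-step transition, so the identity $\val(b_i)-\val(b_1)=\delta_j$ holds verbatim with the perturbed effective $p_j$ from \autoref{lemma:g2prob}.

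For the second step, the one-step appeal of $a^i_{j'}$ under the current policy is, by the same effective-transition reasoning,
\[
  \appeal(b_i,a^i_{j'}) \;=\; p_{j'}\bigl(\delta_{j'} + \val(b_1)\bigr) + (1-p_{j'})\val(b_i),
\]
with $j'=0$ collapsing to $\val(w_1)=\val(b_1)$, which agrees with the same formula under the conventions $p_0:=1$ and $\delta_0:=0$. Substituting $\val(b_1)=\val(b_i)-\delta_j$ from the first step and simplifying,
\[
  \appeal(b_i,a^i_{j'}) \;=\; p_{j'}\bigl(\delta_{j'}-\delta_j + \val(b_i)\bigr) + (1-p_{j'})\val(b_i) \;=\; \val(b_i) + p_{j'}(\delta_{j'}-\delta_j),
\]
which is the desired identity.

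I do not expect a real obstacle. The only subtle point is recognizing the ``reward-on-success'' semantics of the $a^i_j$ actions, which is what yields the neat additive identity $\val(b_i)-\val(b_1)=\delta_j$ rather than the more unwieldy $\delta_j/p_j$; this is the same semantics already used in \autoref{prop:action1}, and the probability gadget preserves it because the gadget is a passive cascade of forced transitions whose value equations reduce exactly to a single effective probabilistic action.
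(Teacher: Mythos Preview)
Your proof is correct and follows essentially the same approach as the paper: both first establish the identity $\val(b_i)=\val(b_1)+\delta_j$ (the paper uses it implicitly in the algebraic step $\val(b_i)-p_{j'}\val(b_i)=\val(b_i)-p_{j'}(\val(b_1)+\delta_j)$, while you derive it explicitly from the value equation), then substitute into the one-step appeal formula and simplify. Your treatment is slightly more careful in making the ``reward-on-success'' semantics explicit and in handling the $j'=0$ case via the convention $p_0:=1$, but the argument is otherwise identical.
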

\begin{proof}
Since the weak invariant is satisfied, if $j = 0$ without loss of generality, $\val(b_i) = \val(b_1) = \val(b_1) + \delta_0$.
Thus, $\appeal(b_i, a^i_{j'}) = (1-p_{j'})\val(b_i) + p_{j'}(\val(b_1) + \delta_{j'}) = \val(b_i) + p_{j'}(\delta_{j'} - \delta_0)$.
Otherwise, if $j \neq 0$, we have:
\begin{align*}
    \appeal(b_i, a^i_{j'}) &= (1 - p_{j'})\val(b_i) + p_{j'}(\val(b_1) + \delta_{j'})\\
    &= \val(b_i) - p_{j'}(\val(b_1) + \delta_j) + p_{j'}(\val(b_1) + \delta_{j'})\\
    &= \val(b_i) + p_{j'}(\delta_{j'}-\delta_j)
\end{align*}
\end{proof}

\begin{customprop}{\ref{prop:action1}} \label{customprop:action1}
If the weak invariant is satisfied, and $b_i = a^i_j$, $b_i$ is switchable and the action with greatest appeal is $a^i_{j+1}$ if $j \neq f(i)$, and action 1 if $j = f(i)$.
\end{customprop}

\begin{proof}
We first show that action 1 has low appeal. As in the full construction, the appeal of action 1 is 
$\appeal(b_i, 1) = (1-\alpha)\val(b_i) + \alpha(r(i) + \val(d_{i+1}))$, which is at most $\val(b_i) + \alpha(r(i) + \val(d_{i+1}))$.
By the same argument as in the proof of \autoref{prop:maxval}, $d_{i+1}$ collects at most the rewards of higher bits in $B$. 
Since $r(j') \leq 2^{7(j'-1) + 6}$ for every $j'$,
$$r(i) + \val(d_{i+1}) \leq r(i) + \sum_{j'=i+1}^n r(j') \leq 2^{7(i-1) + 6} + \sum_{j'=i+1}^n 2^{7(j'-1)+6} < 2^{7(i-1) + 6} + 2^{7n} < n^{8n}$$
Since $\alpha \leq n^{-1000n}$, $\appeal(b_i, 1) \leq \val(b_i) + \alpha(r(i) + \val(d_{i+1})) \leq \val(b_i) + n^{-1000n} \cdot n^{8n} \leq n^{-992n}$.

We now show that for any $j_1 > j$, $\appeal(b_i, a^i_{j_1}) > \appeal(b_i, 1)$. In other words, in the case that $j \neq f(i)$, the action with greatest appeal is $a^i_{j+1}$. 
By \autoref{lemma:delta}, $\appeal(b_i, a^i_{j_1}) = \val(b_i) + p_{j_1}(\delta_{j_1} - \delta_j)$. Since $\delta_{j_1} - \delta_j \geq \frac{1}{4n^2}$, and $p_{j_1} \geq \frac{1}{n^{4j_1 + 3}} \geq \frac{1}{n^{4n+3}}$, we have
\begin{equation}
\appeal(b_i, a^i_{j_1}) \geq \val(b_i) + p_{j_1}(\delta_{j_1} - \delta_j) \geq \val(b_i) + \frac{1}{n^{4j_1 + 3}} \cdot \frac{4}{n^2} \geq \val(b_i) + \frac{4}{n^{4j_1 + 5}} \label{eq:appealaj1}
\end{equation}

Since $j_1 \leq f(n) \leq 3 + 6n \leq 7n$, we can generously lower bound bound $\frac{4}{n^{4j_1 + 5}} \geq \frac{1}{n^{5j_1}} \geq n^{-35n}$.
Thus, $\appeal(b_i, a^i_{j_1}) \geq n^{-35n} > n^{-992n} \geq \appeal(b_i, 1)$.
Furthermore, since $n^{-35n} > 0$, the appeal of $a^i_{j_1}$ is greater than the current value of $b_i$, and $b_i$ is switchable to $a^i_{j_1}$.

We finally show that for any $j_2 > j_1$, $\appeal(b_i, a^i_{j_2}) < \appeal(b_i, a^i_{j_1})$. If $j_1 = 0$, this is immediately true since by \autoref{lemma:wb}, $\appeal(b_i, a^i_0) = \val(b_1) < \appeal(b_i, a^i_{j_2})$.
Otherwise, if $j_1 \neq 0$, by \autoref{lemma:delta} we have that $\appeal(b_i, a^i_{j_2}) = \val(b_i) + p_{j_2}(\delta_{j_2} - \delta_j)$. 
Since $\dmax \leq \frac{4}{n}$, $\delta_{j_2} - \delta_j \leq \frac{4}{n}$.
$p_{j_2} \leq \frac{1}{n^{4j_2+1}}$, so
$$\appeal(b_i, a^i_{j_2}) \leq \val(b_i) + \frac{1}{n^{4j_2+1}} \cdot \frac{4}{n} \leq \val(b_i) + \frac{4}{n^{4j_2 + 2}}$$
Since $j_2 \geq j_1 + 1$, $\frac{4}{n^{4j_2 + 2}} \leq \frac{4}{n^{4(j_1 + 1) + 2}} = \frac{4}{n^{4j_1 + 6}} < \frac{4}{n^{4j_1 + 5}}$.
Recall from Equation \ref{eq:appealaj1} that $\appeal(b_i, a^i_{j_1}) \geq \val(b_i) +  \frac{4}{n^{4j_1+5}}$. Putting this together, we have, $\appeal(b_i, a^i_{j_2}) < \val(b_i) + \frac{4}{n^{4j_1+5}} \leq \appeal(b_i, a^i_{j_1})$.

Thus, $b_i$ is switchable to any action $a^i_{j_1}$ for $j_1 > j$. 
These actions are decreasing in appeal, and the appeal of action 1 is less than the appeal of any of these actions. 
Actions $a^i_{j'}$ for $j' < j$ have appeal less than $\val(b_i)$, since their rewards are $\delta_{j'} < \delta_j$. This includes action $a^i_0$, which has appeal equal to $\val(b_1)$ by \autoref{lemma:wb}.
In other words, the action with greatest appeal is $a^i_{j+1}$ as desired. This completes the case where $j \neq f(i)$.

Finally, by \autoref{lemma:action1}, when $j = f(i)$ the action of $b_i$ with greatest appeal is 1.
\end{proof}

\subsubsection{Main theorem}
Because the propositions imply the main theorem from the greedy construction section, \autoref{thm:fullconstruction}, we have the analogous result for the robust construction given the same all-zero policy as in the full construction, which we state as the following lemma:

\begin{lemma} \label{thm:robustconstruction}
Let the robust construction have parameters lying in $[-2, 2]$ as specified in \autoref{sec:robustparams}, with perturbations of up to $\frac{1}{8n^2}$. When started at the all-zero policy, Greedy PI takes at least $2^n$ iterations to arrive at the optimal policy under the total reward criterion.
\end{lemma}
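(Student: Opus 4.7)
The lemma is designed to be a direct corollary of the work already done in Section 3.3, and my plan is essentially to invoke the proof of \autoref{thm:fullconstruction} with the robust parameters in place of the full-construction parameters. In \autoref{sec:robustparams} the raw parameters and the perturbation budget $1/(8n^2)$ were chosen precisely so that every effective parameter ($r(i)$, $c(i)$, $\epsilon$, $\delta_j$, $p_j$, $\alpha$) lies in a specified interval under every admissible perturbation, and the immediately preceding subsection reverified Propositions 5--9 together with Lemmas 1--2 using only those interval bounds. Since the proof of \autoref{thm:fullconstruction} touches the numerical parameters only through the statements of these propositions and lemmas, the same phase-by-phase argument transfers, yielding the claimed $2^n$ lower bound.

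To present this cleanly I would first extend the weak invariant to the robust MDP by requiring, in addition to its original conditions, that every interior vertex of every reward, cost and probability gadget selects its ``through-gadget'' probabilistic action. The all-zero starting policy trivially satisfies this extended invariant. I would then replay the four-phase analysis of \autoref{thm:fullconstruction} verbatim: Phase 0 increments the $a^i_j$ actions at each $b_i \notin B$; Phases 1--3 simulate one binary increment, flipping the least-indexed zero bit to $1$ and resetting all strictly lower bits to $0$; and the extended invariant is restored at the end of Phase 3. Because each block of four phases implements one increment of an $n$-bit counter, $2^n$ such blocks are needed to transition from the all-zero to the all-one policy.

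The one piece of genuinely new work is verifying that no interior vertex of any gadget ever becomes switchable mid-phase, since the robust MDP contains many more states than the full construction. Each such interior vertex has only two available actions --- the probabilistic branching action and a deterministic exit to the gadget's input endpoint --- so non-switchability reduces to a single sign comparison between the values of the two endpoints of the gadget. By the reproved Propositions 5--9, these endpoint values always have the correct sign with healthy slack, for instance $r(i)-c(i)\geq 2^{7(i-1)+3}$ against $\sum_{j<i} r(j)\leq 2^{7(i-1)}-2$, versus a perturbation budget of only $1/(8n^2)$. I expect this bookkeeping to be the main obstacle, in the sense of being tedious rather than conceptually difficult: no new idea beyond Sections 3.2 and 3.3 is required, only a uniform verification that each gadget's interior is monotone in the signed difference of its endpoint values.
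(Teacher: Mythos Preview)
Your proposal is correct and matches the paper's approach: the paper treats this lemma as an immediate consequence of having reproved Propositions~5--9 for the robust parameter ranges, and simply invokes \autoref{thm:fullconstruction} (whose hypothesis is precisely that those propositions hold) without any further argument.

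The one place where you do more work than the paper is your proposed verification that no interior gadget vertex becomes switchable. This is unnecessary: the intermediate vertices of the gadgets $g_2$ and $g_3$ each have a \emph{single} probabilistic action (continue forward with probability $q_i$, return to the gadget's entry with probability $1-q_i$), just like the single-action nodes $b_i^-$ and $b_i^+$ in the full construction. A vertex with only one available action is never switchable, so there is nothing to check. Your assumption that interior vertices carry a second deterministic ``exit'' action is not how the gadgets are built; once you drop that assumption, your ``main obstacle'' disappears and the proof collapses to exactly the one-line invocation the paper gives.
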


We now show that $2^n$ is subexponential in $N$, the total number of vertices used for the robust construction. Recall that $n$ is the number of bit vertices $b_i$; let $N$ denote the total number of vertices. 
For each bit $b_i$, we can count the number of vertices used in the various gadgets. 
\begin{itemize}
    \item $r(i)$ requires at most $10n$ vertices.
    \item $c(i)$ requires at most $10n$ vertices.
    \item $\ep$ requires at most $100n$ vertices.
    \item $a^i_j$ requires at most $4j + 3 \leq 4(n+3) + 3 \leq 5n$ vertices to create $p_j$.
    \item $\alpha$ requires $1000n + 2$ vertices.
    \item $\delta_j$ requires no additional vertices.
\end{itemize}
In the structure for $b_i$, we have one reward $r(i)$, one cost $c(i)$, two small costs $\ep$, at most $n+3$ actions $a^i_j$, and one instance of small probability $\alpha$.
Thus, the number of vertices used in the gadgets to create all of these is at most
$$7n + 7n + 2 \cdot 100n + (n+3)(5n) + 1000n + 2 = 5n^2 + 1229n + 2$$
Additionally, at each bit $b_i$, we have the ``real" vertices (not within gadgets) $w_i, b_i, c_i, d_i$, and the two square vertices $b_i^-$ and $b_i^+$; this makes 6.
Thus, at each bit $b_i$, we have at most $5n^2 + 1229n + 8$ vertices.
We have at most $n+1$ bits (we have only a partial structure for the last bit $n+1$).
This yields $N \leq (n+1)(5n^2 + 1229n + 8) \leq 6n^3$ for $n$ sufficiently large.

We previously showed that Greedy PI takes at least $2^n$ iterations. In terms of $N$, this is $2^{\sqrt[3]{\frac{N}{6}}}$ iterations.
Recall that our perturbations were up to $\frac{1}{8n^2}$. Since $n \leq \sqrt[3]{\frac{N}{6}}$, perturbations of $\frac{1}{N}$ are at most $\frac{1}{8n^2}$ for sufficiently large $n$ and  $N$.

\begin{theorem}
Let the robust construction have parameter values as specified in \autoref{sec:robustparams}, with allowed perturbations of up to $\frac{1}{N}$, where $N$ is the number of vertices. Starting at the all-zero policy, Greedy PI takes at least $2^{\sqrt[3]{\frac{N}{6}}}$ iterations to arrive at the optimal policy under the total reward criterion.
\end{theorem}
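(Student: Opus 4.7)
The plan is to derive the theorem as a short corollary of Lemma~\ref{thm:robustconstruction}, which already provides a $2^n$ iteration lower bound for the robust construction under perturbations up to $1/(8n^2)$. All that remains is to (i) bound the total vertex count $N$ in terms of $n$, and (ii) verify that the perturbation regime $1/N$ in the theorem's hypothesis is at least as restrictive as the regime $1/(8n^2)$ demanded by the lemma.

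For step~(i) I would proceed by counting, per bit $b_i$, the vertices contributed by each gadget built in Section~\ref{sec:robustparams}: the gadgets for $r(i)$ and $c(i)$ each use $O(n)$ vertices; the gadget for $\epsilon$ uses $O(n)$ vertices; each of the at most $f(n)+1 = O(n)$ probabilities $p_j$ requires an additional $O(n)$-vertex gadget, contributing the dominant $O(n^2)$ term; and $\alpha$ contributes $O(n)$ vertices. Adding the constantly many non-gadget vertices $w_i, b_i, c_i, d_i, b_i^-, b_i^+$ per bit and multiplying by the $n+1$ bits gives the stated bound $N \le 6n^3$ for all sufficiently large $n$, exactly the arithmetic already sketched in the paragraph preceding the theorem statement. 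Inverting, $n \ge \sqrt[3]{N/6}$.

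For step~(ii), from $N \le 6n^3$ I would derive $8n^2 \le 8(N/6)^{2/3} \le N$ for $N$ sufficiently large, so every perturbation of magnitude at most $1/N$ also has magnitude at most $1/(8n^2)$. Hence Lemma~\ref{thm:robustconstruction} applies, and from the all-zero starting policy Greedy PI requires at least $2^n \ge 2^{\sqrt[3]{N/6}}$ iterations to reach the optimal policy under the total reward criterion, which is the claim.

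There is no real obstacle: the technical content sits in Lemma~\ref{thm:robustconstruction} (and the chain of propositions supporting it), while the theorem itself is essentially a bookkeeping step. The only minor care needed is checking that the two ``sufficiently large'' conditions---the one behind $N \le 6n^3$ and the one behind $8n^2 \le N$---are simultaneously satisfied, which they are for all $n$ beyond an absolute constant, absorbable into the asymptotic statement.
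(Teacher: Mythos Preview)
Your approach is essentially the same as the paper's: reduce to Lemma~\ref{thm:robustconstruction} via a vertex count and a perturbation-radius comparison. There is, however, a small logical slip in step~(ii). From the upper bound $N \le 6n^3$ you obtain $n \ge (N/6)^{1/3}$, hence $8n^2 \ge 8(N/6)^{2/3}$, which is the reverse of the inequality you wrote. An upper bound on $N$ cannot by itself yield the lower bound $N \ge 8n^2$ that you need for $1/N \le 1/(8n^2)$.

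The fix is immediate: the construction also gives a \emph{lower} bound on $N$ in terms of $n$. For instance, the $\alpha$-gadget alone contributes $1000n+2$ vertices at each of the $n$ bits, so $N \ge 1000n^2 \ge 8n^2$ for all $n \ge 1$. (The paper is itself a bit loose at this point, writing ``since $n \le \sqrt[3]{N/6}$'' without justification; the intended content is exactly this easy lower bound on $N$.) With that correction your argument goes through and matches the paper's.
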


By the same argument as in \autoref{subsec:full}, since the MDP always terminates at a 0-reward sink state, we have as a corollary the same result for the average reward criterion:
\begin{corollary}
Let the robust construction have parameter values as specified in \autoref{sec:robustparams}, with allowed perturbations of up to $\frac{1}{N}$, where $N$ is the number of vertices. Starting at the all-zero policy, Greedy PI takes at least $2^{\sqrt[3]{\frac{N}{6}}}$ iterations to arrive at the optimal policy under the average reward criterion.
\end{corollary}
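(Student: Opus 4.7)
The plan is to reduce the average-reward case to the total-reward theorem we have just established, mirroring the argument already used at the end of Section 3.2 to promote \autoref{thm:fullconstruction} to its analogous corollary. The only thing to check is that nothing about the extra gadgets in the robust construction breaks the hypothesis that lets us identify the two criteria's switching behavior.

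First I would observe that the robust construction inherits from the full construction the structural property that every policy eventually reaches the 0-reward sink. The new ingredients added in Section 3.3 are the gadgets $g_2(k)$ and $g_3(k)$, which replace individual rewards, costs, and probabilities by short chains of probabilistic actions whose only escape is to the two endpoints $x$ and $y$; they introduce no new ``terminal'' behavior and no new loops that an agent could exploit to collect unbounded reward. Hence for every policy $\pi$, the induced Markov chain is absorbing into the 0-reward sink, and the long-run average reward (the gain) is identically $0$ at every state.

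Next I would invoke the observation from \cite{fearnley2010exponential} (already cited in Section 3.2): whenever the gain function vanishes everywhere, the bias function of policy iteration under the average-reward criterion satisfies exactly the Bellman equation $\val^\pi(s) = r(s,\pi(s)) + \sum_{s'} p(s'\mid s,\pi(s))\val^\pi(s')$ that defines the value under the total-reward criterion, and the tie-breaking rule on the bias reduces to our $\appeal^\pi(s,a)$. Consequently, at every iteration the set of switchable states and the chosen switching action agree in the two criteria, so starting from the all-zero policy $\pi_0$ Greedy PI produces an identical sequence of policies under either criterion.

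Applying the total-reward theorem just proved then yields the claimed $2^{\sqrt[3]{N/6}}$ lower bound on the number of iterations, uniformly over all adversarial perturbations of magnitude at most $1/N$. The main (and only) step that requires any care is the first one: confirming that the gadgets $g_2(k)$ and $g_3(k)$ used to manufacture $r(i), c(i), \epsilon, p_j, \alpha$ introduce no nonzero recurrent class under any perturbed policy, so that the gain is genuinely zero on the whole (perturbed) MDP. This is immediate because every gadget has its only ``exit'' actions leading either back to the main graph or to the sink, and the main graph itself is absorbing at the sink; no additional analysis of the perturbed probabilities is needed since absorption at the sink holds for all probability values in $(0,1)$.
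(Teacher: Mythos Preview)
Your proposal is correct and follows essentially the same route as the paper: the paper's proof is a single sentence invoking ``the same argument as in \autoref{subsec:full}'' (the Fearnley observation that for MDPs absorbing at a 0-reward sink the gain vanishes and the bias coincides with the total-reward value), and you simply spell out the additional verification that the gadgets $g_2$ and $g_3$ do not spoil absorption at the sink. That extra check is a nice clarification but not a different idea.
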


\section{A lower bound for Greedy PI under the reachability criterion}
In this section, we prove an exponential lower bound for the reachability criterion in the worst case, without perturbations. We start in \autoref{subsec:newparam} by presenting new parameter ranges for the full construction, then reproving Propositions 5-9 to show that Greedy PI takes exponentially many iterations. 
We give ranges here instead of explicit values to simplify our proof later; we do not consider perturbations here.
Then, in \autoref{subsec:reach}, we show how to modify the full construction with gadgets such that under the reachability criterion, the effective rewards lie in those ranges.
It then follows that Greedy PI requires exponentially many iterations on this modified construction under the reachability criterion.

\subsection{New parameter ranges for the full construction} \label{subsec:newparam}

Suppose that the parameters from the full construction under the total reward criterion be in the following ranges: 
\begin{itemize}
    \item $r(i) \in [2^{4i + 2 - 5n}, 2^{4i + 3 - 5n}]$
    \item $c(i) \in [2^{4i - 5n}, 2^{4i + 1 - 5n}]$
    \item $\ep \in [2^{-100n -1}, 2^{-100n}]$
    \item $\delta_j \in [2^{-200n + 2j} , 2^{-200n + 2j + 1} ]$.
    $0 < \delta_1 < \delta_2 < \ldots < \dmax$, and $\dmax \leq 2^{-100n}$, and $|\delta_{j} - \delta_{j'}| \geq 2^{-200n} $
    \item $\alpha = 2^{-400n -400f(n)^2}$
    \item $p_j = 2^{-400j^2}$
\end{itemize}

Before reproving the propositions, we first prove two useful inequalities.

Recall that $r(j) \leq 2^{4j+3-5n}$ for any $j$. Thus, for any $i$, 
\begin{equation}
    \sum_{j < i} r(j) \leq 2^{-5n}\sum_{j < i} 2^{4j+2} < 2^{-5n} \cdot 2^{4(i-1) + 2 + 1} = 2^{-5n} \cdot 2^{4i-1} \label{eq:reachsumrew}
\end{equation}

Recall that $\dmax \leq 2^{-100n}$ and $\emax \leq 2^{-100n}$. Thus,
\begin{equation}
    \dmax + 2n\emax \leq (2n + 1)2^{-100n} < 2^n \cdot 2^{-100n} = 2^{-99n} \label{eq:reachsmallrews}
\end{equation}

We now reprove the propositions from \autoref{sec:propositions}, which together imply that Greedy PI takes $2^n$ iterations. Recall that propositions \ref{prop:maxval} and \ref{prop:valbound} and lemmas \ref{lemma:wb} and \ref{lemma:action1} don't depend on the exact values of the rewards or costs and thus don't need to be reproven.

\begin{customprop}{\ref{prop:rewcost}} \label{customprop2:rewcost}
$r(i) - c(i) - 2n\emax - \dmax > \sum_{j < i} r(j)$ for all $i$.
\end{customprop}

\begin{proof}
By Equation \ref{eq:reachsmallrews}, $\dmax + 2n\emax < 2^{-99n}$. 
For any $i$, $r(i) - c(i) \geq 2^{4i+2-5n} - 2^{4i+1-5n} = 2^{4i+1-5n}$.
Thus, $r(i) - c(i) - 2n\emax - \dmax > 2^{4i+1-5n} - 2^{-99n} > 2^{4i - 5n}$.

By Equation \ref{eq:reachsumrew}, $\sum_{j < i} r(j) < 2^{4i-1-5n}$. Thus, $r(i) - c(i) - 2n\emax - \dmax > \sum_{j < i} r(j)$.
\end{proof}

\begin{customprop}{\ref{prop:cost}} \label{customprop2:cost} 
$c(i) > \delta_{\text{max}} + 2n\emax + \sum_{j < i} r(j)$ for all $i$.
\end{customprop}

\begin{proof}
By Equation \ref{eq:reachsumrew}, $\sum_{j < i} r(j) < 2^{-5n} \cdot 2^{4i-1}$.
By Equation \ref{eq:reachsmallrews}, $\dmax + 2n\emax < 2^{-99n}$.
The sum of these two expressions is then less than 
$\dmax + 2n\emax + \sum_{j < i} r(j) < 2^{4i-1-5n} + 2^{-99n} < 2^{4i-5n} \leq c(i)$.
\end{proof}

\begin{customprop}{\ref{prop:action1}} \label{customprop2:action1}
If the weak invariant is satisfied, and $b_i = a^i_j$ where $j \neq f(i)$, $b_i$ is switchable and the action with greatest appeal is $a^i_{j+1}$, and action 1 if $j = f(i)$.
\end{customprop}

\begin{proof}
We first consider the case where $j \neq f(i)$.
We begin by showing that action 1 has low appeal. $\appeal(b_i, 1) = (1-\alpha)\val(b_i) + \alpha(r(i) + \val(d_{i+1}))$.
By the same argument as in the proof of \autoref{prop:maxval}, only the rewards $r(j)$ for $j > i$ are collectible from $d_{i+1}$, so $\val(d_{i+1}) \leq \sum_{j > i} r(j) \leq \sum_{j=1}^n 2^{4i+2-5n} < 2^{4n+3-5n}$.
Since $r(i) \leq 2^{4i+3-5n}$, $r(i) + \val(d_{i+1}) < 1$.
Since $\alpha = 2^{-400n -400f(n)^2}$, $\appeal(b_i, 1) \leq \val(b_i) + \alpha(r(i) + \val(d_{i+1})) < \val(b_i) + 2^{-400n -400f(n)^2}$.

Let $j_2 > j_1 > j$.
Note first that by \autoref{lemma:wb}, $\appeal(b_i, a^i_0) = \val(b_1)$, so action $a^i_0$ has an effective reward of $\delta_0 = 0$.
By \autoref{lemma:delta}, for any $j' \neq j$, we have $\appeal(b_i, a^i_{j'}) = \val(b_i) + p_{j'}(\delta_{j'} - \delta_j)$. 
Since $\delta_{j_2} \leq 2^{-200n + 2j_2 + 1}$, $\appeal(b_i, a^i_{j_2}) \leq \val(b_i) + p_{j_2} \cdot \delta_{j_2} \leq \val(b_i) + 2^{-200n - 400j_2^2 + 2j_2 + 1}$.
Since $j_2^2 \geq (j_1 + 1)^2\geq j_1^2 + 2j_1$, 
$2^{-200n - 400j_2^2 + 2j_2 + 1} \leq 2^{-200n - 400j_1^2 - 800j_1 + 2j_1 + 1} < 2^{-200n - 400j_1^2}$.
The difference between $\delta_{j_1}$ and $\delta_j$ is at least $2^{-200n}$, so 
$\appeal(b_i, a^i_{j_1}) \geq \val(b_i) + p_{j_1} \cdot 2^{-200n}$.
Since $p_{j_1} = 2^{-400j_1^2}$, we have $p_{j_1} \cdot 2^{-200n} \geq 2^{-200n - 400j_1^2}$.
Thus, $\appeal(b_i, a^i_{j_1}) > \appeal(b_i, a^i_{j_2})$.

Note also that $\appeal(b_i, a^i_{j_1}) \geq \val(b_i) +  2^{-200n - 400j_1^2} \geq \val(b_i) + 2^{-200n - f(n)^2}$, since $j_1 \leq f(n)$.
This is strictly less than the appeal of action 1, which is at most $\val(b_i) + 2^{-400n - 400f(n)^2}$.
Observe also that $\appeal(b_i, a^i_{j_1}) > \appeal(b_i, a^i_j)$ only if $\delta_{j_1} > \delta_j$. This is true even for $j_1 = 0$, since $\appeal(b_i, a^i_0) = \val(b_1)$ by \autoref{lemma:wb}.
Thus, the switchable actions are exactly the action 1 and the actions $a^i_{j'}$ for $j' > j$.
We have shown that these actions have decreasing appeal as $j'$ increases.
Thus, when $j \neq f(i)$, the action with greatest appeal is $a^i_{j+1}$.

Finally, by \autoref{lemma:action1}, when $j = f(i)$, the action with greatest appeal is action 1.
\end{proof}

\subsection{Transforming to the reachability criterion} \label{subsec:reach}
Policy iteration under the reachability criterion maximizes the probability of reaching a designated sink node. The value of a node under a policy is the probability of reaching the sink from that node. 
Thus, policy iteration under the reachability criterion ignores rewards and costs.
To adapt our full construction for the reachability criterion, we introduce gadgets that simulate rewards and costs using random actions.
These gadgets require known bounds on the minimum and maximum values of any nodes.

\begin{claim}
Given the parameters in this section, the maximum value of any node is at most $\frac{1}{4}$.
\end{claim}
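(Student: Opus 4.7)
The strategy is to combine the structural bounds from Section 4.1 with a direct geometric sum of the new $r(i)$'s.

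First I would invoke the reproven Proposition 5 (applied under the weak invariant, which holds for every policy reached by Greedy PI starting from the all-zero policy): $\val(b_1) \le r(1) + \sum_{i \in B,\, i > 1}(r(i) - c(i)) \le \sum_{i=1}^n r(i)$. The upper half of Proposition 8 analogously gives $\val(b_i) \le \sum_{j \ge i,\, j \in B} r(j) \le \sum_{j=1}^n r(j)$ for every $i \in B$. Using $r(i) \le 2^{4i+3-5n}$ and summing the geometric series,
\[
\sum_{i=1}^n r(i) \;\le\; 2^{3-5n} \sum_{i=1}^n 2^{4i} \;<\; 2^{3-5n} \cdot 2^{4n+1} \;=\; 2^{-n+4},
\]
which is already at most $1/4$ for $n \ge 6$.

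Next I would extend this bound to every remaining node type. For $b_i$ with $i \notin B$ and current action $a^i_j$, the identity used in Lemma 7 and the proof of Proposition 9 gives $\val(b_i) = \val(b_1) + \delta_j \le 2^{-n+4} + \dmax \le 2^{-n+4} + 2^{-100n}$. The auxiliary nodes $c_i, d_i, b_i^-, b_i^+$ each have value equal to $\val(b_i)$ shifted by at most one of $r(i), c(i)$, or a small $\ep$; each $w_i$ satisfies $\val(w_i) = \val(b_\ell) - c(\ell)$ for the smallest $\ell \in B$ with $\ell \ge i$ (or zero if no such $\ell$ exists); and the internal vertices of the gadgets that implement $p_j$ and $\alpha$ lie on monotone chains between their endpoint values. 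Since every one of these shifts is itself at most $2^{-n+4}$ (and most are much smaller), each node's value is at most $2^{-n+4} + O(2^{-100n})$, hence at most $1/4$ once $n$ is sufficiently large.

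The bulk of the effort is routine bookkeeping over node types. The only delicate point is verifying that the $b_i$ self-loops under actions $a^i_j$ do not amplify the tiny reward $\delta_j$ via the tiny exit probability $p_j = 2^{-400 j^2}$; this is handled by the paper's value formula $\val(b_i) = \val(b_1) + \delta_j$, under which the $\delta_j$ contribution is effectively collected once per excursion to $b_1$ and therefore stays below $\dmax$. Once this is in place, the bound is an immediate consequence of the geometric estimate above.
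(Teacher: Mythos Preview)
Your argument has a gap: you assert that the weak invariant ``holds for every policy reached by Greedy PI starting from the all-zero policy,'' but this is not true. The weak invariant is maintained throughout Phase~0, yet once Phase~1 fires and some $b_i$ switches to action~1 we have $i \in B$ (since $B = \{i : b_i = 1\}$) while still $c_i = d_i = 0$, so the invariant is violated for the remainder of Phases~1--3. Propositions~\ref{prop:maxval} and~\ref{prop:valbound} are stated only under the weak invariant, so you cannot invoke them for these intermediate policies, and your node-by-node case analysis then rests on nothing during those steps. A clean patch is to appeal to the monotonicity of Greedy PI: values are nondecreasing across iterations, so it suffices to bound the values under the terminal optimal policy, where the invariant certainly holds.

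The paper's own proof avoids the invariant entirely. It observes that positive rewards occur only on actions out of the $b_i$ nodes, that no policy reached by Greedy PI contains a (multi-node) cycle, and hence that each $b_i$ contributes at most $r(i)$ to the value of any starting node. This yields the single global bound $\val(s) \le \sum_{i=1}^n r(i) \le 2^{4-n} < \tfrac{1}{4}$ for every node $s$ at once, making your separate treatment of $c_i$, $d_i$, $w_i$, $b_i^{\pm}$, and gadget internals unnecessary.
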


\begin{proof}
Rewards can be collected only at the nodes $b_i$. No policy reached by Greedy PI has actions forming a cycle, because of the costs $c(j)$. Thus, each $b_i$ is visited at most once, and the maximum value of any node is at most the sum of rewards collected at each $b_i$.
The greatest reward of any action at $b_i$ is $r(i)$.
Thus, the maximum value is at most 
$$\sum_{i=1}^n r(i) \leq \sum_{i=1}^n 2^{4i + 3 - 5n} \leq 2^{4n + 4 - 5n} = 2^{4-n} < \frac{1}{4}$$
\end{proof}

Now, observe that if we add a reward of $\frac{1}{4}$ upon reaching the sink, the behavior of policy iteration is not affected. 
This is because it increases the value of every vertex by exactly $
\frac{1}{4}$, and also increases the appeal of every action by $
\frac{1}{4}$. 
Thus, adding this reward does not affect any appeals or values relative to each other.
Since policy iteration depends only on relative appeals, its behavior does not change. 

After adding this reward of $\frac{1}{4}$ upon reaching the sink, the value of every node lies in $[\frac{1}{4}, \frac{1}{2}]$ throughout the duration of policy iteration. We use these bounds in our transformation to the reachability criterion, using the following gadgets.

\begin{figure}
    \centering
    \includegraphics[width=400pt]{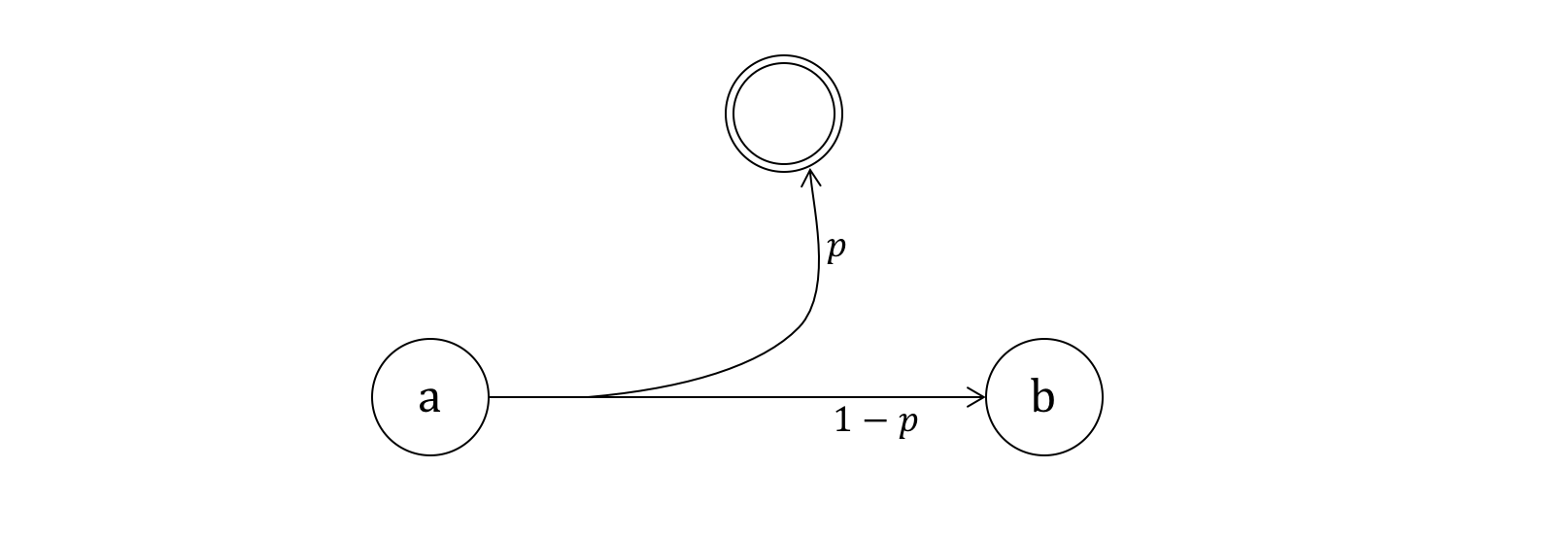}
    \caption{Gadget $g_4(p)$ for simulating positive rewards under the reachability criterion. The vertex with a double-line border is the sink that we are trying to reach. $p$ is the probability of reaching the sink from the depicted action.}
    \label{fig:reachability}
\end{figure}

\begin{lemma}
Assume that the value of every vertex lies in $\left[\frac{1}{4}, \frac{1}{2}\right]$. 
Then, for any $p \in [0, 1]$ and any deterministic action between vertices $a$ and $b$, $g_4(p)$, shown  in \autoref{fig:reachability}, achieves $\val(a) = \val(b) + r$ where $r \in \left[\frac{p}{2}, \frac{3p}{4}\right]$.
\end{lemma}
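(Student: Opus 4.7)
The plan is to read off the structure of $g_4(p)$ directly from Figure 4: the single deterministic edge from $a$ to $b$ is replaced by a probabilistic action that, with probability $p$, transitions to the target sink, and with the remaining probability $1-p$ transitions to $b$. Everything else in the MDP is unchanged, so only the value of $a$ (and any value that reaches $a$) is affected by the substitution, and I only need to compute $\val(a)$ under the new action.

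Under the reachability criterion, the target sink has value $1$ (it reaches itself with probability $1$), so the Bellman equation for $a$ under the new action is
\[
\val(a) \;=\; p \cdot 1 \,+\, (1-p)\,\val(b) \;=\; \val(b) \,+\, p\bigl(1 - \val(b)\bigr).
\]
The induced effective reward is therefore $r := \val(a) - \val(b) = p(1-\val(b))$. Using the hypothesis that every vertex value lies in $[1/4,\, 1/2]$, we have $1 - \val(b) \in [1/2,\, 3/4]$, and multiplying by $p \ge 0$ yields $r \in [p/2,\, 3p/4]$, which is the claimed range.

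There is essentially no obstacle in the argument itself; it is a one-line Bellman computation combined with the global value bound established in the preceding claim. The only point worth flagging is why working with a reward \emph{range} $[p/2, 3p/4]$ (rather than a pinned-down exact value) is sufficient for the overall reduction: this is precisely why Section 4.1 restates Propositions 5--9 for parameter \emph{ranges} rather than specific values. By choosing $p$ so that the interval $[p/2, 3p/4]$ lies within the range required for each reward $r(i)$, $\delta_j$, etc., the analysis of Section 4.1 applies verbatim, and the correctness of the reduction to the reachability criterion follows.
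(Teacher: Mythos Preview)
Your proof is correct and follows essentially the same approach as the paper: compute $\val(a) = p + (1-p)\val(b) = \val(b) + p(1-\val(b))$ and then use $\val(b)\in[1/4,1/2]$ to bound $r = p(1-\val(b))$. The paper's version is terser and omits your closing remarks about parameter ranges, but the mathematical content is identical.
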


\begin{proof}
We have $\val(a) = (1-p)\val(b) + p$. Since $\val(b) \leq \frac{1}{2}$, we have $\val(a) = \val(b) + p - p\val(b) \geq \val(b) + \frac{p}{2}$.
Since $\val(b) \geq \frac{1}{4}$, we have $\val(a) = \val(b) + p - p\val(b) \leq \val(b) + \frac{3p}{4}$.
\end{proof}

\begin{figure}
    \centering
    \includegraphics[width=400pt]{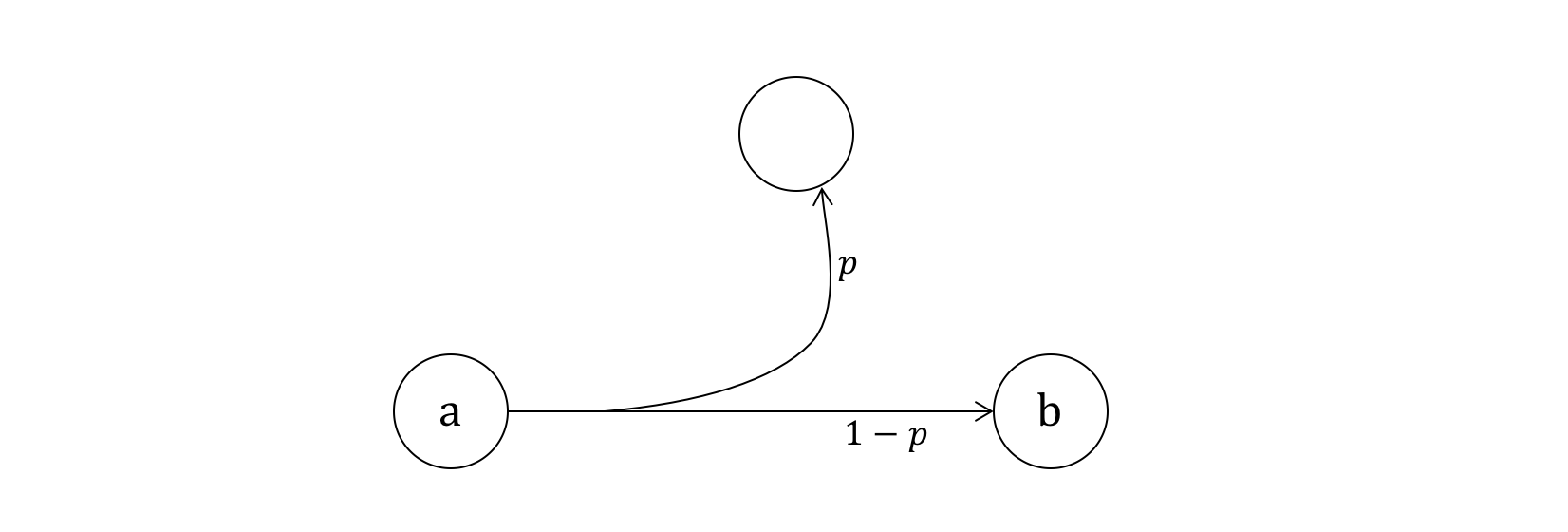}
    \caption{Gadget $g_5(p)$ for negative rewards under the reachability criterion. The top vertex is the 0-sink that we get no reward for reaching. $p$ is the probability of reaching the sink from the depicted action.}
    \label{fig:g5}
\end{figure}

We also create a gadget $g_5(p)$ that simulates a negative reward, shown in \autoref{fig:g5}.

\begin{lemma}
Assume that the value of every vertex lies in $\left[\frac{1}{4}, \frac{1}{2}\right]$. 
Then, for any $p \in [0, 1]$ and any deterministic action between vertices $a$ and $b$, $g_5(p)$ achieves $\val(a) = \val(b) - r$ where $r \in \left[\frac{p}{4}, \frac{p}{2}\right]$.
\end{lemma}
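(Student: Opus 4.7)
The plan is to mirror the argument used for $g_4(p)$ in the preceding lemma, with the only substantive change being that the auxiliary sink here is a $0$-sink rather than the target sink. Under the reachability criterion, the value of any vertex is the probability that the induced Markov chain reaches the target sink. Since the top vertex in the gadget is explicitly a sink that is \emph{not} the target, no execution that enters it can subsequently reach the target, so its value under any policy is $0$.

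First I would write down the value equation for $a$. The depicted action at $a$ is the only action (the gadget simply replaces the deterministic edge $a\to b$), and it goes to the $0$-sink with probability $p$ and to $b$ with probability $1-p$. Hence
\begin{equation*}
\val(a) \;=\; (1-p)\,\val(b) + p\cdot 0 \;=\; (1-p)\,\val(b).
\end{equation*}
Rearranging, $\val(b) - \val(a) = p\,\val(b)$, so setting $r := p\,\val(b)$ gives the desired form $\val(a) = \val(b) - r$.

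Next I would plug in the hypothesized bounds $\val(b) \in [\tfrac14,\tfrac12]$ to conclude $r = p\,\val(b) \in [\tfrac{p}{4}, \tfrac{p}{2}]$, which finishes the lemma. The only thing to check is that the hypothesis on the range of values applies to $b$; this is part of the standing assumption of the lemma, so nothing more is needed.

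There is no real obstacle here: the gadget is simpler than $g_4(p)$ (no detour through intermediate vertices), and the argument is essentially a one-line value-equation computation followed by substituting the range bounds. The only mild subtlety worth flagging in the write-up is that we are implicitly using the reachability semantics to assign value $0$ to the $0$-sink, which is why this construction plays the role of a negative reward rather than a positive one as in $g_4(p)$.
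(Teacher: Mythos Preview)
Your proposal is correct and takes essentially the same approach as the paper: write $\val(a) = (1-p)\val(b) + p\cdot 0$, observe that the effective cost is $r = p\,\val(b)$, and bound it using $\val(b)\in[\tfrac14,\tfrac12]$. The paper's proof is the same one-line computation, just phrased as two inequalities rather than via the explicit substitution $r = p\,\val(b)$.
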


\begin{proof}
We have $\val(a) = (1-p)\val(b) + p \cdot 0$. Since $\val(b) \leq \frac{1}{2}$, $(1-p)\val(b) \geq \val(b) - \frac{p}{2}$.
Since $\val(b) \geq \frac{1}{4}$, $(1-p)\val(b) \leq \val(b) - \frac{p}{4}$.
\end{proof}

We can convert the full construction to use the reachability criterion by using gadget $g_4$ in place of any action with a positive reward, and using gadget $g_5$ in place of any action with a negative reward. 
We slightly edit the actions $a^i_j$ so that the rewards fall on deterministic actions, in order to apply the gadgets. This does not affect the behavior of policy iteration.

The full construction has $N = O(n)$ nodes in total, where $n$ is the number of bit-nodes $b_i$.
This conversion to the reachability criterion introduces at most $O(N)$ new nodes. Thus, we still have $O(N)$ vertices in total, giving us the following lower bound:

\begin{theorem} \label{thm:fullconstruction2}
There exists an MDP on $N$ nodes on which Greedy PI under the reachability criterion takes $2^{\Omega(N)}$ iterations to arrive at the optimal policy.
\end{theorem}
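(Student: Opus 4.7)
The plan is to apply the reduction described in \autoref{subsec:reach} to the full construction of Section 3.2, equipped with the parameter ranges from \autoref{subsec:newparam}, and then invoke the exponential lower bound already established for that construction. The first step is to observe that adding a uniform reward of $1/4$ upon reaching the sink does not alter the trajectory of Greedy PI: such a bonus shifts every value and every appeal by exactly $1/4$, so all comparisons are preserved. Under this harmless modification, the parameters of \autoref{subsec:newparam} force every state value to lie in $[1/4,1/2]$ throughout execution (the total collectible reward is at most $\sum_i r(i)\leq 2^{4-n}\ll 1/4$), which is precisely the precondition required by the gadgets $g_4$ and $g_5$.

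Next, I would replace each rewarded deterministic edge of the full construction by the appropriate gadget: $g_4(p)$ for every positive reward (the $r(i)$ and $\delta_j$) and $g_5(p)$ for every negative cost (the $c(i)$ and $\ep$). For a target reward $R$ lying in a prescribed range $[R_{\min},R_{\max}]$ from \autoref{subsec:newparam}, I would choose the probability $p$ used by the gadget so that the guaranteed output window $[p/2,3p/4]$ (for $g_4$) or $[p/4,p/2]$ (for $g_5$) lands inside $[R_{\min},R_{\max}]$. Since each prescribed range spans a factor of two and each gadget window spans a factor of at most two, a valid $p$ always exists; an essentially identical tuning produces individual $\delta_j$'s satisfying the strict ordering $0<\delta_1<\cdots<\dmax$ and the separation $|\delta_j-\delta_{j'}|\geq 2^{-200n}$. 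The probabilities $\alpha$ and $p_j$ are already probabilities in the full construction, so they require no gadget.

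After this substitution the transformed MDP, read under the reachability criterion, is exactly a full construction whose effective rewards lie in the ranges of \autoref{subsec:newparam}. Consequently Propositions \ref{customprop2:rewcost}, \ref{customprop2:cost}, \ref{customprop2:action1} (together with the unchanged \ref{prop:maxval}, \ref{prop:valbound}, \ref{lemma:wb}, \ref{lemma:action1}) all apply, so by the reasoning of \autoref{thm:fullconstruction} Greedy PI from the all-zero policy requires at least $2^n$ iterations to converge. For the vertex count, the bare full construction contributes $O(1)$ nodes per bit (the six nodes $b_i,b_i^+,b_i^-,c_i,d_i,w_i$) and each of the $O(n)$ rewarded edges is replaced by a gadget with $O(1)$ new nodes, so $N=O(n)$ and $2^n=2^{\Omega(N)}$. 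The main delicate point, and the reason \autoref{subsec:newparam} loosens the parameters to intervals rather than fixing exact values, is that the gadgets can only produce rewards within a factor-of-two band of $p$: the lower bound succeeds precisely because the earlier analysis is robust to any choice of parameters within the prescribed ranges.
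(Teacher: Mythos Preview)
Your proposal is correct and mirrors the paper's argument: add the $1/4$ sink bonus to force all values into $[1/4,1/2]$, substitute $g_4$ or $g_5$ for each rewarded edge so that the resulting reachability MDP is a full construction with effective parameters in the ranges of \autoref{subsec:newparam}, and invoke the phase-by-phase analysis behind \autoref{thm:fullconstruction}. The one point to tighten is the edge count: the actions $a^i_j$ contribute $\sum_{i} f(i)=\Theta(n^2)$ rewarded transitions, not $O(n)$, so to keep $N=O(n)$ you must share a single intermediate node per reward value $\delta_j$ (there are only $f(n)=O(n)$ distinct such values, all targeting $b_1$); the paper glosses over this detail in exactly the same way.
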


\section{Smoothed lower bounds for simple, difference, and topological policy iteration}
Melekopoglou and Condon \cite{melekopoglou1994complexity} constructed MDPs on which policy iteration requires exponential time in the worst case when using the simple, topological, and difference switching rules. The MDPs in these constructions are reachability MDPs.
The construction involves two sink nodes, labeled $0^*$ and $1^*$. The goal is to minimize the probability of reaching $1^*$. Equivalently, every edge to $1^*$ incurs a cost of 1, and all other edges have a cost of 0. 
There are $2n + 1$ other vertices: $0', 1', \ldots, n'$, and $1, 2, \ldots, n$. The vertex $0'$ has a random edge going to $n$ with probability $\frac{1}{2}$ and going to $1^*$ with probability $\frac{1}{2}$. The vertex $1'$ has a random edge going to $1^*$ with probability $\frac{1}{2}$ and to $0^*$ with probability $\frac{1}{2}$. Each other vertex $k'$ has a random edge going to $k-2$ with probability $\frac{1}{2}$ and to $(k-1)'$ with probability $\frac{1}{2}$. 
Each vertex $k$ has two deterministic actions: an action 0 going to $k-1$, and an action 1 going to $k'$.
We call the vertices $1, \ldots, n$ \textit{min-vertices}, and we call the vertices $1', \ldots, n'$ \textit{random vertices}.

Let $S_k$ denote the action chosen at vertex $k$. We write a policy as a string $S_n S_{n-1} \ldots S_1$. Thus, the policy where every vertex $k$ takes action 0 is denoted $00 \ldots 0$.
Observe that the optimal policy is $00 \ldots 01$, with vertex 1 set to action 1, and all other vertices set to action 0.

\begin{figure} 
    \centering
    \includegraphics[width=400pt]{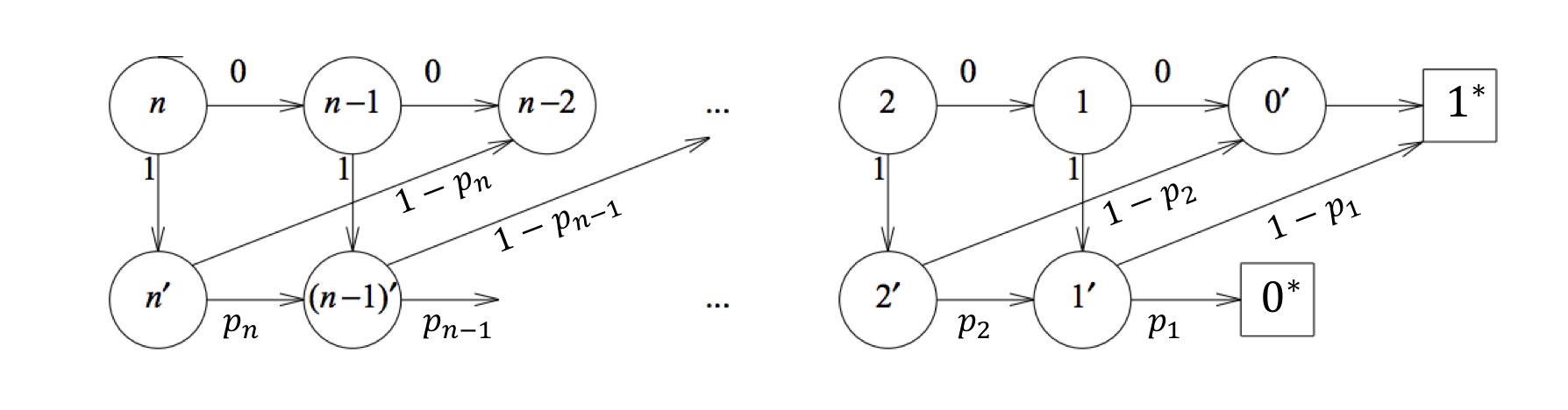}
    \caption{The basic graph on $n$ vertices}
    \label{fig:basicgraph}
\end{figure}

We use slightly different notation here than in the earlier sections, because the MDPs we consider are 2-action MDPs.
We use $V(k)$ or $V(k')$ to denote the cost of a vertex $k$ or $k'$, i.e. the probability of reaching the target sink $1^*$, starting from vertex $k$ or $k'$; this is the same notation used in the original paper \cite{melekopoglou1994complexity}.
For each min-vertex $k$, we use $\diff(k)$ to denote the difference in value between the children of $k$. This is well-defined for this construction since each vertex $k$ has at most two children. More precisely, we let $\diff(1) := V(1') - V(0')$, and for $k \geq 2$,
$$\diff(k) := V(k') - V(k-1)$$

We show that with slight modifications, the Melekopoglou-Condon (MC) construction requires exponential time even when the probabilities assigned to random edges can be perturbed.

\subsection{Simple Policy Improvement Algorithm} \label{subsec:simple}
Simple policy iteration involves switching the action of exactly one vertex in each iteration. If there are multiple switchable vertices, the vertex whose action is switched is chosen according to a fixed ordering. 
Here, we choose the highest-labeled switchable vertex to be switched.

We use the MC basic construction but remove the edge from $0'$ to $n$, and make the edge from $0'$ to $1^*$ have probability 1. We call this modified version of the MDP the \emph{basic graph}, shown in \autoref{fig:basicgraph}.
We introduce variables $p_k$ to represent the probabilities of the random edges.
Let $p_1$ denote the probability of the edge from $1'$ to $0^*$, meaning the edge from $1'$ to $1^*$ has probability $1-p_1$.
For $k \geq 2$, let the edge from vertex $k'$ to $(k-1)'$ have probability $p_k$ and let the edge from $k'$ to $k-2$ have probability $1-p_k$. 
Let $S_k$ denote the state of vertex $k$.

Our proof follows the same format as the analogous one of MC. First, we derive an expression for $\diff(k)$. Then we use this expression to reprove lemmas from MC, which are sufficient to prove the lower bound.

\begin{lemma} \label{lemma:diff}
For every $k \geq 1$, $\diff(k) = \diff(1)\prod_{i=2}^k (p_i - S_{i-1})$.
\end{lemma}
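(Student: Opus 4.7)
The plan is to prove the identity by induction on $k$, with the base case $k=1$ handled by the convention that the empty product equals $1$. The inductive step reduces to establishing the one-step recursion
\[
\diff(k) \;=\; (p_k - S_{k-1})\,\diff(k-1) \qquad \text{for } k \ge 2,
\]
from which the stated product formula follows by an immediate unrolling.

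To derive this recursion, I would unfold both $V(k')$ and $V(k-1)$ in terms of $V((k-1)')$ and $V(k-2)$. For the random vertex $k'$, the definition of the basic graph gives
\[
V(k') \;=\; p_k\,V((k-1)') + (1-p_k)\,V(k-2).
\]
For the min-vertex $k-1$, whose chosen action is $S_{k-1} \in \{0,1\}$ (action $0$ goes to $k-2$, action $1$ goes to $(k-1)'$), I would write
\[
V(k-1) \;=\; (1-S_{k-1})\,V(k-2) + S_{k-1}\,V((k-1)').
\]
Subtracting, the coefficient of $V((k-1)')$ is $p_k - S_{k-1}$ and the coefficient of $V(k-2)$ is $(1-p_k)-(1-S_{k-1}) = -(p_k - S_{k-1})$, so
\[
\diff(k) \;=\; V(k') - V(k-1) \;=\; (p_k - S_{k-1})\bigl(V((k-1)') - V(k-2)\bigr) \;=\; (p_k - S_{k-1})\,\diff(k-1),
\]
which is exactly the recursion. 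Iterating down to $\diff(1)$ gives the claimed product.

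I do not expect a real obstacle here, since the argument is a short linear computation. The only thing to be a bit careful about is notational: the definition of $\diff$ is split between $k=1$ (where it is $V(1')-V(0')$ and plays the role of the initial datum) and $k\ge 2$ (where it is $V(k')-V(k-1)$), so I would emphasize that the recursion is only claimed for $k\ge 2$ and that the empty product convention at $k=1$ is what makes the formula uniform. Also, since $S_{k-1} \in \{0,1\}$ and $p_k \in [0,1]$, the factors $(p_k - S_{k-1})$ lie in $[-1,1]$, which will be the relevant feature when this lemma is later invoked to track sign patterns and magnitudes of $\diff(k)$ during the analysis of Simple PI; but that use is outside the statement itself.
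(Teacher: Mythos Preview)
Your proposal is correct and follows essentially the same approach as the paper: derive the one-step recursion $\diff(k) = (p_k - S_{k-1})\diff(k-1)$ by expanding $V(k')$ and $V(k-1)$ in terms of $V((k-1)')$ and $V(k-2)$ and subtracting, then apply induction. The paper's proof is identical in content, only slightly terser in its handling of the base case.
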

\begin{proof}
We derive several recursive expressions for the costs of the vertices, which will give us a nice formula for $\diff(k)$.
From the structure of the graph, we have
\begin{align*}
    V(k') &= p_kV((k-1)') + (1-p_k)V(k-2)\\
    V(k-1) &= (1-S_{k-1})V(k-2) + S_{k-1}V((k-1)')
\end{align*}
Plugging in $V(k')$ and $V(k-1)$ in the expression for $\diff(k)$, we have
\begin{align}
    \diff(k) &= V(k') - V(k-1) \nonumber \\
    &= (p_k - S_{k-1})V((k-1)') + (1 - p_k - 1 + S_{k-1})V(k-2) \nonumber \\
    &= (p_k - S_{k-1})(V((k-1)') - V(k-2)) \nonumber \\
    &= (p_k - S_{k-1})\diff(k-1) \label{eq:diff}
\end{align}
We can make the following inductive argument. For the base case, $\diff(1) = \diff(1)$. Assuming that $\diff(k-1) = \diff(1)\prod_{i=2}^{k-1} (p_i - S_{i-1})$, we have by \autoref{eq:diff} that $\diff(k) = (p_k - S_{k-1})\diff(k-1) = \diff(1)\prod_{i=2}^k (p_i - S_{i-1})$ as desired.
\end{proof}

Observe that $k$ is switchable if and only if $S_k = 0$ and $\diff(k) < 0$ or $S_k = 1$ and $\diff(k) > 0$, since the goal is to minimize the cost. We can use \autoref{lemma:diff} to compute $\text{sgn}(\diff(k))$ and thus determine whether any vertex $k$ is switchable given the actions of the other states.
The original proof from \cite{melekopoglou1994complexity} that policy iteration requires exponentially many iterations still holds because Lemmas 2.6 and 2.7 from \cite{melekopoglou1994complexity} still hold. 
We state these below as Lemmas \ref{lemma:switch1} and \ref{lemma:switch2}. 

\begin{lemma} \label{lemma:switch1}
If $k$ is switched, and $S_n \ldots S_{k+2}S_{k+1} = 0 \ldots 01$, then all the vertices $k+1, k+2, \ldots, n$ are switchable.
\end{lemma}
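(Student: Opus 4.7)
The plan is to reduce the switchability of each vertex to a simple parity condition on $\sum_{i=1}^j S_i$, using Lemma \ref{lemma:diff}. First I would compute the base case: $V(0') = 1$ and $V(1') = 1 - p_1$, so $\diff(1) = -p_1 < 0$. Next, since $0 < p_i < 1$ and $S_{i-1} \in \{0,1\}$, the sign of each factor $p_i - S_{i-1}$ is $+1$ when $S_{i-1}=0$ and $-1$ when $S_{i-1}=1$. Combining with Lemma \ref{lemma:diff}, $\mathrm{sgn}(\diff(j)) = -(-1)^{\sum_{i=1}^{j-1} S_i}$. Thus $\diff(j) < 0$ iff $\sum_{i=1}^{j-1} S_i$ is even, and $\diff(j) > 0$ iff it is odd.

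I would then package this as the key claim: vertex $j$ is switchable iff $\sum_{i=1}^{j} S_i$ is even. Indeed, $j$ is switchable iff either ($S_j=0$ and $\diff(j)<0$) or ($S_j=1$ and $\diff(j)>0$); the first case gives $\sum_{i=1}^{j-1}S_i$ even and $S_j=0$, while the second gives $\sum_{i=1}^{j-1}S_i$ odd and $S_j=1$. In both cases the total $\sum_{i=1}^j S_i$ is even.

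With this criterion the lemma is immediate. Because $k$ was switchable just before being switched, $\sum_{i=1}^k S_i$ was even at that moment; after flipping $S_k$, this sum becomes odd. By hypothesis $S_{k+1}=1$ and $S_{k+2}=\cdots=S_n=0$, so for every $j\in\{k+1,\ldots,n\}$ the tail $S_{k+1}+S_{k+2}+\cdots+S_j$ equals $1$. Hence $\sum_{i=1}^j S_i = \mathrm{odd} + 1$ is even, which by the criterion means that $j$ is switchable.

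The proof is essentially bookkeeping with parities; there is no real obstacle. The only point worth noting is that the sign analysis of $p_i - S_{i-1}$ depends only on the strict inequality $0 < p_i < 1$, which holds automatically after any perturbation that keeps the MDP valid. This is precisely the robustness feature we need for the smoothed lower bound in Section \ref{subsec:simple}.
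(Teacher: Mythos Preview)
Your proof is correct and rests on the same idea as the paper's: use the product formula of Lemma~\ref{lemma:diff} to read off $\mathrm{sgn}(\diff(j))$ from the bits $S_1,\ldots,S_{j-1}$, then match this against $S_j$ to decide switchability. The paper carries this out by a direct case split on whether $S_k$ flipped $0\to 1$ or $1\to 0$, tracking the signs of $(p_{k+1}-S_k)$ and $(p_{k+2}-S_{k+1})$ explicitly in each case; you instead distill it into the single parity criterion ``$j$ is switchable iff $\sum_{i\le j} S_i$ is even,'' which absorbs both cases at once. The parity packaging is a bit cleaner and makes the argument uniform, but it is the same computation underneath.
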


\begin{proof}
If $S_k$ is switched from 0 to 1, then we must have had $\diff(k) < 0$ before switching. Since $p_i - S_{i-1} > 0$ for all $i > k+2$,  $p_{k+2} - S_{k+1} < 0$, and $p_{k+1} - S_{k} < 0$ after $k$ is switched, we have $\text{sgn}(\diff(i)) = \text{sgn}(\diff(k))$ for all $i > k+1$. Thus for all $i > k+1$, $\diff(i) < 0$, and all of these vertices are switchable. 
For $k+1$, we have $\text{sgn}(\diff(k+1)) = -\text{sgn}(\diff(k)) > 0$, and since $S_{k+1} = 1$, $S_{k+1}$ is also switchable.

If $S_k$ is switched from 1 to 0, $\diff(k) > 0$. Again, we have that $p_i - S_{i-1} > 0$ for all $i > k+1$, and $p_{k+2} - S_{k+1} < 0$. We now have that $p_{k+1} - S_k > 0$, since $S_k$ is switched from 1 to 0. 
Thus $\text{sgn}(\diff(i)) = -\text{sgn}(\diff(k))$ for all $i > k+2$, after $S_k$ is switched. Thus these all have $\diff(i) < 0$ and $S_i = 0$ and are switchable. 
For vertex $k+1$, we have $\text{sgn}(\diff(k+1)) = \text{sgn}(\diff(k)) > 0$ and $S_{k+1} = 1$, so $S_{k+1}$ is also switchable.
\end{proof}

\begin{lemma} \label{lemma:switch2}
The following two statements hold for every positive $k$:
\begin{enumerate}[(1)]
    \item If $S_n \ldots S_{k+1}S_k = 0 \ldots 01$, and the vertices $k, k+1, \ldots, n$ are switchable, the next $2^{n-k+1} - 1$ switches of the simple policy improvement algorithm are made on these vertices, to reach the policy where $S_n \ldots S_{k+1}S_k = 0 \ldots 00$.
    \item If $S_n \ldots S_{k+1}S_k = 0 \ldots 00$, and the vertices $k, k+1, \ldots, n$ are switchable, the next $2^{n-k+1}-1$ switches of the simple policy improvement algorithm are made on these vertices, to reach the policy where $S_n \ldots S_{k+1}S_k = 0 \ldots 01$.
\end{enumerate}
\end{lemma}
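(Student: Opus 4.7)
The plan is to prove (1) and (2) jointly by downward induction on $k$, following Melekopoglou--Condon's structure but using the general formula $\diff(k)=\diff(1)\prod_{i=2}^k(p_i-S_{i-1})$ from \autoref{lemma:diff} rather than committing to $p_i=1/2$. The base case $k=n$ is immediate: only vertex $n$ is mentioned, it is assumed switchable, and being the maximum label in the graph it is what Simple PI switches next, using the prescribed $2^{n-n+1}-1=1$ step to flip $S_n$ to the required value.

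For the inductive step, assume both (1) and (2) for $k+1$. Consider case (2) first, starting from $S_n\ldots S_k=0\ldots 00$ with all of $\{k,\ldots,n\}$ switchable. In Phase A I apply IH~(2) at $k+1$, which is legal because the suffix on $\{k+1,\ldots,n\}$ is $0\ldots 00$ and those vertices are switchable: the next $2^{n-k}-1$ switches are confined to $\{k+1,\ldots,n\}$ and produce suffix $0\ldots 01$. Vertex $k$ is not touched during these switches since Simple PI always takes the highest-indexed switchable vertex and such a vertex sits above $k$ at every step of this phase; moreover, $S_k$ and $S_j$ for $j<k$ are unchanged, so \autoref{lemma:diff} keeps $\text{sgn}(\diff(k))$ fixed and $k$ remains switchable. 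In Phase B I claim $k$ is the unique switchable vertex in $\{k,\ldots,n\}$: the current values are $S_k=0$, $S_{k+1}=1$, $S_j=0$ for $j>k+1$, and $\diff(k)<0$ (forced by $S_k=0$ together with $k$ switchable). \autoref{lemma:diff} then gives $\diff(k+1)=p_{k+1}\diff(k)<0$ (so vertex $k+1$, with $S_{k+1}=1$, is not switchable); $\diff(k+2)=(p_{k+2}-1)\diff(k+1)>0$; and inductively $\diff(i)>0$ for all $i\geq k+2$ since every remaining factor $p_j-0$ is positive. Each $S_i$ for $i\geq k+2$ is $0$, so none of these vertices is switchable, and Simple PI flips $S_k$ from $0$ to $1$. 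In Phase C, \autoref{lemma:switch1} re-enables all of $k+1,\ldots,n$ after this flip, and the suffix is $0\ldots 01$, so IH~(1) at $k+1$ drives it back to $0\ldots 00$ using $2^{n-k}-1$ additional switches in $\{k+1,\ldots,n\}$. Summing the three phases gives $(2^{n-k}-1)+1+(2^{n-k}-1)=2^{n-k+1}-1$ switches, all lying in $\{k,\ldots,n\}$, with final window state $0\ldots 01$, as required.

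Case (1) is handled by a symmetric three-phase argument starting from $0\ldots 01$: Phase A applies IH~(2) at $k+1$ (whose hypothesis holds since the suffix on $\{k+1,\ldots,n\}$ is $0\ldots 00$ and those vertices are switchable) to move the suffix to $0\ldots 01$, producing window state $0\ldots 011$; Phase B uses essentially the same sign computation, now with $\diff(k)>0$, to show $k$ is the unique switchable vertex in the window and gets flipped from $1$ to $0$; Phase C invokes \autoref{lemma:switch1} and IH~(1) at $k+1$ on the resulting suffix $0\ldots 01$ to reach $0\ldots 00$.

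The main obstacle I anticipate is Phase B in each case: one must verify that, after the IH-driven upper run, vertex $k$ is genuinely the only switchable vertex in $\{k,\ldots,n\}$, so that Simple PI cannot leak into higher indices before flipping $S_k$. The multiplicative structure of \autoref{lemma:diff} reduces this to a finite sign check, but the cases $i=k+1$ and $i=k+2$ behave differently from $i\geq k+3$, since $S_{k+1}=1$ is the only ``on'' bit among the higher indices, so the sign chain must be tracked carefully in both statements of the lemma.
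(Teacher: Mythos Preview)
Your proof is correct and follows essentially the same approach as the paper: downward induction on $k$, with the inductive step decomposed into three phases (apply the IH at $k+1$, flip $S_k$, apply the IH at $k+1$ again), invoking \autoref{lemma:switch1} to re-enable the upper vertices after the middle flip. Your Phase~B sign computation is more explicit than the paper's (which asserts ``the vertices $m, m+1, \ldots, n$ are no longer switchable'' rather tersely), and you spell out case~(2) whereas the paper does case~(1) and defers (2) to symmetry, but the argument is the same.
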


\begin{proof}
We prove the lemma by induction on $k$, starting from $k = n$ and decreasing. For our base case, $k = n$, statements 1 and 2 trivially hold since if $n$ is switchable, one switch is made on $n$.

Assume that for all $k \geq m$, the two statements hold. We will show that both statements hold for vertex $m-1$.
\par{Statement 1.} If $S_n \ldots S_mS_{m-1} = 0 \ldots 01$, and $m-1, m, \ldots, n$ are switchable, then $(2)$ is satisfied for $k = m$. Thus the next $2^{n-m+1}-1$ switches are made so that $S_n \ldots S_{m+1} S_m S_{m-1} = 0 \ldots 011$. Now, $m-1$ is still switchable since no lower number vertices were switched, and $\diff(m-1) > 0$. 
Since $(p_m - S_{m-1}) < 0$, $(p_i - S_{i-1}) > 0$ for $i > m + 1$, and the vertices $m, m+1, \ldots, n$ are no longer switchable, $m-1$ is now switched. 
This gives us $S_n S_{n-1} \ldots S_m S_{m-1} = 0 0 \ldots 1 0$. 
By Lemma \ref{lemma:switch1}, $m, \ldots, n$ are all switchable again, and $(1)$ from the inductive hypothesis holds for $k = m$. So the next $2^{n-m+1}-1$ switches are made to get $S_n \ldots S_m S_{m-1} = 0 \ldots 00$. 
The total number of switches made is $2^{n-m+1} - 1 + 1 + 2^{n-m+1} - 1= 2^{n-(m-1)+1}-1$ as desired.

\par{Statement 2.} The proof of statement 2 follows similarly.
\end{proof}

\begin{theorem} \label{thm:simple}
Given the basic graph on $n$ states, the simple policy improvement algorithm requires $2^n$ iterations in the worst case, even when the probabilities associated with the probabilistic vertices can be perturbed within the open interval $(0, 1)$ and the cost of the sink $1^*$ can be perturbed to any positive value.
\end{theorem}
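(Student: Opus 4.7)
The plan is to run simple PI starting from the all-zero initial policy $\pi_0 = S_n \cdots S_1 = 0 \cdots 0$ and invoke statement~(2) of Lemma~\ref{lemma:switch2} with $k = 1$. That statement will immediately deliver a sequence of $2^n - 1$ switches before the algorithm halts at the optimal policy $0 \cdots 01$, which matches the claimed $2^n$ bound up to a single additive unit.

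The only precondition to check is that every min-vertex is switchable at $\pi_0$. Under $\pi_0$ we have $V(k) = V(0') = c_{1^*}$ for every min-vertex $k$, and $V(1') = (1 - p_1) c_{1^*}$, so $\diff(1) = V(1') - V(0') = -p_1 c_{1^*} < 0$. Applying Lemma~\ref{lemma:diff} with $S_{i-1} = 0$ for all $i$ gives $\diff(k) = \diff(1) \prod_{i=2}^{k} p_i < 0$ for every $k \in \{1, \ldots, n\}$. A negative $\diff(k)$ combined with $S_k = 0$ means that action $1$ strictly lowers the cost at vertex $k$, so every min-vertex is switchable. Hence the hypothesis of Lemma~\ref{lemma:switch2}(2) is met and the $2^n - 1$ switches follow.

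For the perturbation robustness, I would observe that every inference in Lemmas~\ref{lemma:diff}, \ref{lemma:switch1}, \ref{lemma:switch2} and in the verification above depends only on the \emph{signs} of the quantities $\diff(k)$, never on their magnitudes. The factorization in Lemma~\ref{lemma:diff} shows that these signs are completely determined by $\mathrm{sgn}(\diff(1))$ together with the current bits $S_{i-1} \in \{0, 1\}$: for $p_i \in (0, 1)$ we have $\mathrm{sgn}(p_i - 0) = +1$ and $\mathrm{sgn}(p_i - 1) = -1$, and for $c_{1^*} > 0$ we have $\mathrm{sgn}(\diff(1)) = \mathrm{sgn}(-p_1 c_{1^*}) = -1$. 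Since the probabilities are constrained to the \emph{open} interval $(0, 1)$ and $c_{1^*}$ is strictly positive, these signs are preserved under every admissible perturbation, and simple PI produces an identical sequence of switches on the perturbed graph.

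The main obstacle is mild but worth stating: one must be certain that the sign-only reasoning holds not only at $\pi_0$ but at each of the $2^n - 1$ intermediate policies encountered along the trajectory. This is immediate because at every intermediate policy the bits $S_{i-1}$ remain integral, so each factor $p_i - S_{i-1}$ retains its sign regardless of the specific value of $p_i$ inside $(0, 1)$. The exponential lower bound therefore survives every admissible perturbation.
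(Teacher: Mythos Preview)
Your proof is correct and follows essentially the same approach as the paper: start from the all-zero policy, verify via Lemma~\ref{lemma:diff} that $\diff(1)=-p_1 c_{1^*}<0$ and hence $\diff(k)<0$ for all $k$ so that every min-vertex is switchable, then apply Lemma~\ref{lemma:switch2}(2) with $k=1$ to obtain the $2^{n}-1$ switches; the robustness argument in both cases rests on the observation that only the signs of the factors $p_i-S_{i-1}$ and of $\diff(1)$ enter the analysis.
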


\begin{proof}
Consider the basic graph from \autoref{fig:basicgraph}, with probabilities $p_1, \ldots, p_n$ for the probabilistic vertices. 
If we start with policy $00 \ldots 00$, all vertices are switchable because $\diff(1) = -p_1 < 0$ and $p_i - S_{i-1} > 0$ for all $i \geq 2$.
So by Lemma \ref{lemma:switch2}, the simple policy improvement algorithm makes $2^{n}-1$ switches to arrive at the optimal policy $00 \ldots 01$.

Let $c$ denote the cost of the sink $1^*$.
To show that the cost can be perturbed, recall the expression of $\diff$ from \autoref{lemma:diff}: for every $k \geq 1$, $\diff(k) = \diff(1)\prod_{i=2}^k (p_i - S_{i-1})$.
$\diff(1) = V(1') - V(0') = c(1-p_1) - c = -p_1 c$.
Thus, as long as $c$ is positive, $\diff(1)$ is negative and the argument remains unchanged, since for any vertex $k$, the sign of $\diff(k)$ depends only on the signs of $\diff(i)$ for $i < k$.

\end{proof}

\subsection{Topological Policy Improvement Algorithm}
Topological policy iteration, like simple policy iteration, switches one vertex in each iteration. If multiple vertices are switchable, the vertex to be switched is selected based on a topological ordering: if there is a path from a vertex $i$ to a vertex $j$, the order of $i$ is at least the order of $j$. Topological policy iteration always switches a vertex of the lowest order; if there are multiple switchable vertices in that order, it picks the highest-numbered vertex among those.

\begin{figure}
    \centering
    \includegraphics[width=400pt]{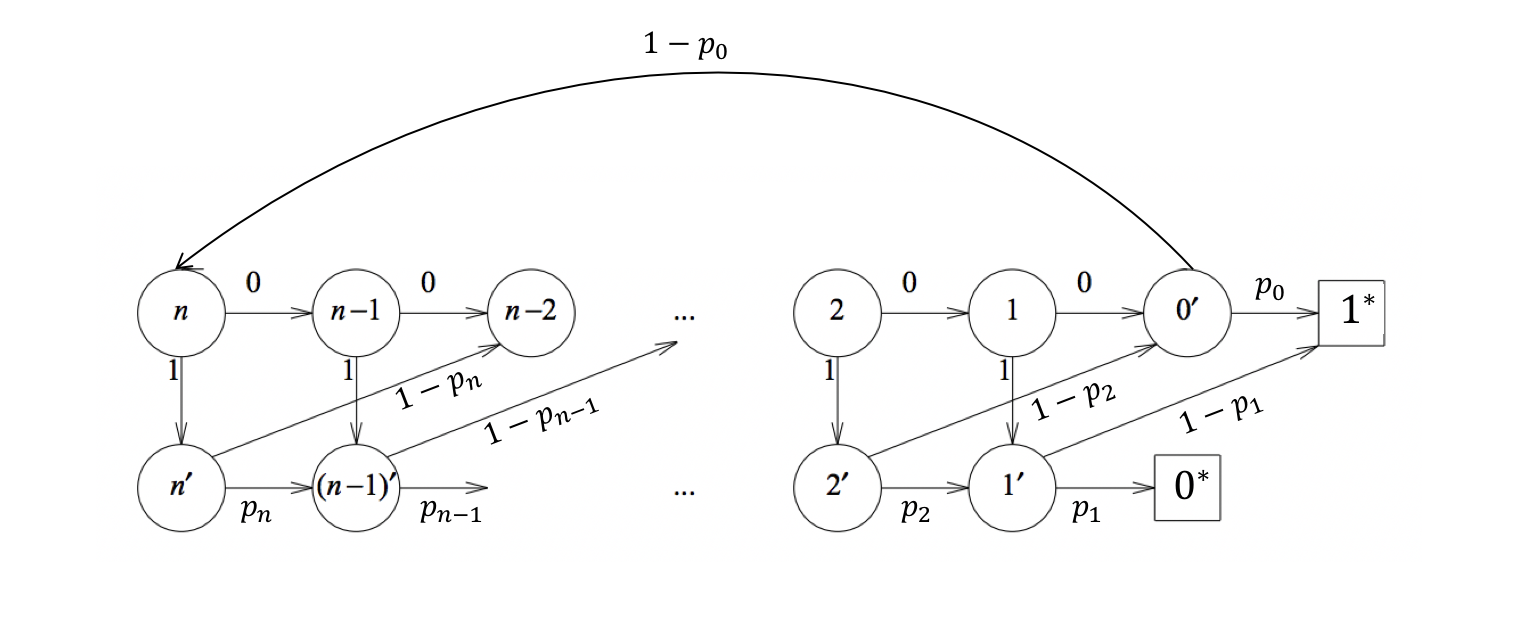}
    \caption{The topological graph on $n$ vertices}
    \label{fig:topologicalgraph}
\end{figure}

For this section, we add back in the edge from $0'$ to $n$ and let $p_0$ be the probability of the edge from $0'$ to $1^*$. We call this the \emph{topological graph}, shown in \autoref{fig:topologicalgraph}.
Now, all non-sink vertices have the same order, and thus on this MDP topological policy improvement makes the same switches as simple policy improvement.
It turns out that adding back the edge from $0'$ to $n$ does not affect the switches switches made by Simple PI (and therefore Topological PI), which we argue to obtain the following theorem.

\begin{theorem} \label{thm:topological}
Given the topological graph on $n$ states, the topological policy improvement algorithm requires $2^n$ iterations in the worst case, even when the probabilities and cost are perturbed, as long as we have that $p_0, p_1, \ldots, p_n \in (0, 1)$, the probabilities $p_0$ and $p_1$ satisfy $p_0 > 1-p_1$, and the cost $c$ of the sink $1^*$ is positive.
\end{theorem}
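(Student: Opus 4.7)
The plan is to reduce the analysis to that of \autoref{thm:simple} by verifying three facts: (a) that Topological PI makes exactly the same choices as Simple PI on the topological graph, (b) that the formula $\diff(k) = \diff(1)\prod_{i=2}^k (p_i - S_{i-1})$ from \autoref{lemma:diff} continues to hold, and (c) that the sign invariant $\diff(1) < 0$ is preserved throughout execution despite $V(0')$ now depending on the current policy.

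For (a), I would argue that the non-sink vertices $\{1,\ldots,n,0',1',\ldots,n'\}$ form a single strongly connected component of the topological graph: the reinstated edge $0' \to n$ (present because $1 - p_0 > 0$), combined with the chain $n \to n-1 \to \cdots \to 1 \to 0'$, closes a cycle through every $k$, and each primed vertex $k'$ can be reached from $k$ via action 1 and returns to the cycle via $k' \to k-2$. Since all non-sink vertices then share a single topological order, Topological PI simply picks the highest-numbered switchable vertex, coinciding exactly with Simple PI. Step (b) is essentially free: the derivation of \autoref{lemma:diff} uses only the transitions at $k'$ and $k-1$ for $k \geq 2$ and never invokes $V(0')$, so the formula transfers verbatim even though the value of $0'$ has changed.

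The hard part will be (c), because in the topological graph $V(0') = p_0 c + (1 - p_0) V(n)$ now fluctuates with the policy through $V(n)$. Expanding $\diff(1) = V(1') - V(0') = (1-p_1)c - p_0 c - (1-p_0)V(n)$ I would write
\[
\diff(1) \;=\; c(1 - p_0 - p_1) \;-\; (1 - p_0)\,V(n),
\]
and then observe that the hypothesis $p_0 > 1 - p_1$ is tailored precisely to force the first term to be strictly negative (as $c > 0$), while the second term is non-positive because $V(n) \geq 0$ (all edge costs are nonnegative) and $1 - p_0 > 0$. This yields the policy-independent bound $\diff(1) < 0$. Once this invariant is in place, the proofs of \autoref{lemma:switch1} and \autoref{lemma:switch2} apply verbatim --- they rely only on the sign pattern of $\diff(k)$, which is determined by $\diff(1) < 0$ together with the factors $p_i - S_{i-1}$, all of which are nonzero because $p_i \in (0,1)$ and $S_{i-1} \in \{0,1\}$. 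Starting from the all-zero policy, every factor $p_i - 0 > 0$ makes $\diff(k) < 0$ with $S_k = 0$, so all $n$ vertices are switchable, and \autoref{lemma:switch2}(2) with $k=1$ then forces at least $2^n - 1$ switches before reaching the unique optimum $0\cdots 01$, establishing the theorem.
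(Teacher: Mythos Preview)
Your proposal is correct and follows essentially the same approach as the paper: you establish that the min-vertices share a single topological order (so Topological PI coincides with Simple PI), observe that the recursion behind \autoref{lemma:diff} is unaffected by the new edge at $0'$, and then bound $\diff(1) = c(1-p_0-p_1) - (1-p_0)V(n) < 0$ using $p_0 > 1-p_1$, $c>0$, and $V(n)\ge 0$, exactly as the paper does. Your write-up is in fact a bit more explicit than the paper's on the strong-connectivity argument; one small caveat is that $1'$ only transitions to the two sinks and so is technically not in the same SCC as the other non-sink vertices, but since $1'$ is a random vertex and never switchable this has no effect on the argument (the paper glosses over this point as well).
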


\begin{proof}
Recall from the previous section that in that graph, $\diff(k) = \diff(1)\prod_{i=2}^k (p_i - S_{i-1})$. This still holds when the edge from $0'$ to $n$ is added back in, since this does not affect the relationship between $\diff(i)$ and $\diff(i-1)$ for $i > 1$.
With the edge added back in and cost $c > 0$ associated with the 1-sink, we have 
$$\diff(1) = V(1') - V(0) = c(1 - p_1) - p_0 c - (1-p_0)V(n) \leq c(1 - p_1 - p_0)$$
If $p_0 > 1-p_1$, this expression is always negative, and we again have the property that vertex $k$ is switchable if and only if $S_k = 0$ and $\diff(k) < 0$ or $S_k = 1$ and $\diff(k) > 1$.
Then by the same argument as in the previous section, the algorithm with a switching policy that chooses the largest numbered vertex first uses an exponential number of iterations.
\end{proof}

\subsection{Difference Policy Improvement Algorithm} \label{subsec:difference}
The difference policy improvement algorithm, like simple and topological policy iteration, switches one vertex in each iteration. 
This vertex is chosen to maximize the difference between the costs of its two children; the chosen vertex $v^*$ from the set of switchable min-vertices $V$ satisfies
$$v^* = \argmax_{v \in V} |\diff(v)|$$
In this section, we use the basic graph, and we insert gadgets used in \cite{melekopoglou1994complexity}, with different parameters to account for the perturbation.

\begin{figure}
    \centering
    \includegraphics[width=400pt]{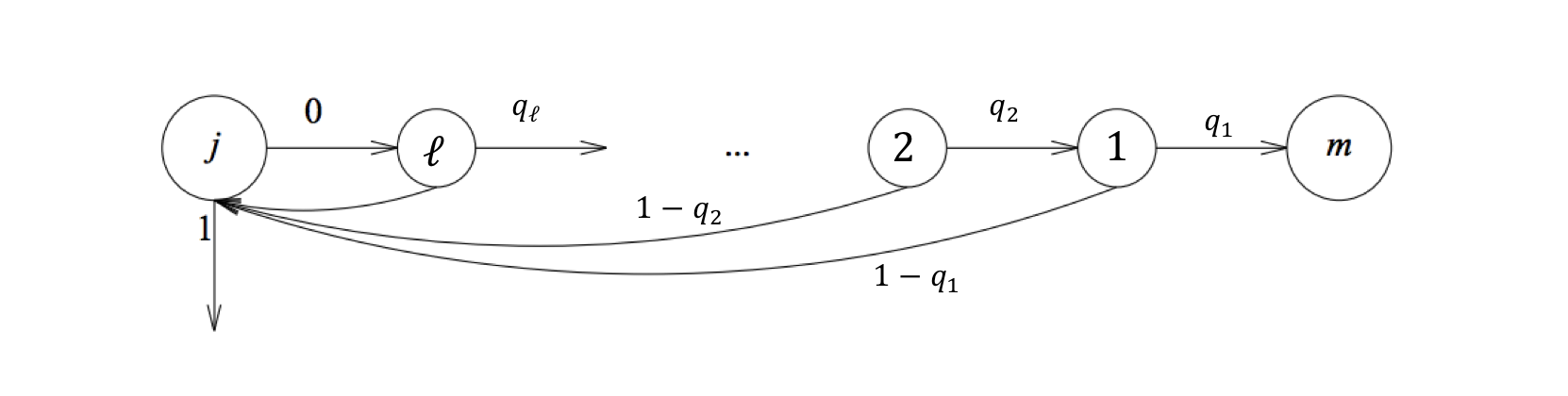}
    \caption{The gadget $g_1(\ell)$ from $j$ to $m$, with $\ell$ intermediate nodes. The edges shown from $j$ are actions 0 and 1. From each intermediate gadget node $i$, there is a single zero-reward probabilistic action going to $(i-1)$ with probability $q_i$ and back to $j$ with probability $1-q_i$.}
    \label{fig:g1}
\end{figure}

For each min-vertex $k$, we add two copies of the gadget $g_1({f(k))}$ shown in Figure 3 to the basic graph where $f$ is a function that is defined later: one copy of the gadget is inserted between $k$ and $k'$, and one between $k$ and $k-1$, replacing the corresponding edges of the basic graph.
We assume that the probabilities $q_i$ within the gadgets are between $\frac{1}{2}$ and $\frac{1}{2} + \frac{1}{n}$.

\begin{defin}
Let $k$ be a min-vertex. Let $a$ be the child of $k$ in the gadget between $k$ and $k-1$; let $b$ be the child of $k$ in the gadget between $k$ and $k'$. We define
$$\diff'(k) := V(b) - V(a)$$
\end{defin}

Let $q_1, \ldots, q_{f(k)}$ be the probabilities in the gadget from $k$ to $k-1$, and let $r_1, \ldots, r_{f(k)}$ be the probabilities in the gadget from $k$ to $k'$. Let $a$ be the child of $k$ in the gadget between $k$ and $k-1$, and let $b$ be the child of $k$ in the gadget between $k$ and $k'$.
We first derive expressions for the values of $\diff$ yielded by the gadget in \autoref{lemma:diffbound}. 
We then use this expression to show that $\diff$ is increasing in $n$, so difference policy iteration follows the same sequence of policies as simple policy iteration, flipping higher-numbered bits first. 

\begin{lemma} \label{lemma:diffbound}
If $S_k = 1$, $\diff'(k) = \left(\prod_{i=1}^{f(k)} q_i\right)\diff(k)$. If $S_k = 0$, $\diff'(k) = \left(\prod_{i=1}^{f(k)} r_i \right)\diff(k)$
\end{lemma}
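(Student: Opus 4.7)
The plan is to solve the linear system induced by the two gadgets and then use the fact that, when $S_k \in \{0,1\}$ is selected, $V(k)$ equals the value of the chosen gadget-child of $k$. This reduces $\diff'(k) = V(b) - V(a)$ to an explicit multiple of $V(k') - V(k-1) = \diff(k)$.

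First I would label the intermediate nodes of the $k$-to-$(k-1)$ gadget as $1, 2, \ldots, f(k)$, with node $f(k) = a$ the entry point from $k$ and node $0$ identified with $k-1$. Each intermediate node $i \geq 1$ satisfies the recurrence $V_i = q_i V_{i-1} + (1-q_i) V(k)$, which solves explicitly to $V_i = Q_i V(k-1) + (1 - Q_i) V(k)$ where $Q_i := \prod_{j=1}^{i} q_j$. In particular, $V(a) = Q_{f(k)} V(k-1) + (1 - Q_{f(k)}) V(k)$. The analogous computation on the $k$-to-$k'$ gadget gives $V(b) = R_{f(k)} V(k') + (1 - R_{f(k)}) V(k)$ with $R_i := \prod_{j=1}^{i} r_j$.

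Next, I would exploit the self-consistency $V(k) = V(a)$ in the case $S_k = 0$: substituting the formula for $V(a)$ yields $Q_{f(k)}(V(k) - V(k-1)) = 0$, so since $Q_{f(k)} > 0$ we get $V(k) = V(k-1)$. Symmetrically, $S_k = 1$ forces $V(k) = V(k')$. Intuitively, once action $S_k$ is fixed, every excursion into the chosen gadget exits at its designated target with probability one, since each return to $k$ merely re-enters the same gadget.

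Finally I would plug these identities back into $V(b) - V(a)$. For $S_k = 1$, $V(b) = V(k')$ and $V(a) = Q_{f(k)} V(k-1) + (1 - Q_{f(k)}) V(k')$, so
\[
\diff'(k) = V(k') - Q_{f(k)} V(k-1) - (1 - Q_{f(k)}) V(k') = Q_{f(k)}\bigl(V(k') - V(k-1)\bigr) = Q_{f(k)} \diff(k).
\]
The case $S_k = 0$ is symmetric with $V(a) = V(k-1)$ and $V(b) = R_{f(k)} V(k') + (1 - R_{f(k)}) V(k-1)$, yielding $\diff'(k) = R_{f(k)} \diff(k)$. There is no real obstacle beyond bookkeeping: the key conceptual step is the self-consistency identity $V(k) = V(\text{chosen child})$, and everything else is a routine unwinding of the two telescoping recursions; one only has to be careful to pair the selected action with its corresponding fixed-point equation and not to confuse the $q$'s with the $r$'s.
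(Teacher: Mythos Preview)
Your proof is correct and follows essentially the same approach as the paper's. The only difference is that you explicitly derive the identity $V(k) = V(k')$ (resp.\ $V(k) = V(k-1)$) when $S_k = 1$ (resp.\ $S_k = 0$) via the fixed-point argument, whereas the paper writes the formulas for $V(a)$ and $V(b)$ directly with $V(k')$ or $V(k-1)$ substituted in for $V(k)$, leaving that substitution implicit.
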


\begin{proof}
If $S_k = 1$, 
$V(a) = \left(\prod_{i=1}^{f(k)} q_i\right)V(k-1) + \left(1 - \prod_{i=1}^{f(k)} q_i\right)V(k')$, and 
$V(b) = V(k')$. Then, expanding the definition of $\diff'$, we have
$$\diff'(k) = V(b) - V(a) = \left(\prod_{i=1}^{f(k)} q_i\right)(V(k') - V(k-1)) = \left(\prod_{i=1}^{f(k)} q_i\right)\diff(k)$$
If $S_k = 0$, 
$V(a) = V(k-1)$, and
$V(b) = \left(\prod_{i=1}^{f(k)} r_i\right)V(k') + \left(1 - \prod_{i=1}^{f(k)} r_i\right)V(k-1)$. Then, again expanding the definition of $\diff'$, we have
$$\diff'(k) = V(b) - V(a) = \left(\prod_{i=1}^{f(k)} r_i\right)(V(k') - V(k-1)) = \left(\prod_{i=1}^{f(k)} r_i\right)\diff(k)$$
\end{proof}

\begin{theorem} \label{thm:difference}
Let $f(n) = 0$, and let $f(k) = \lceil \log_{(\frac{1}{2} + \frac{1}{n})}{((\frac{1}{2})^{f(k+1)} \cdot (\frac{1}{3}))} \rceil$ for $1 \leq k < n$. On the MDP obtained from the basic graph by adding gadgets $g_1({f(v)})$ between each min-vertex $v$ and its two children, the difference policy improvement algorithm requires $2^n$ iterations, as long as all the probabilities $q_j$ and $r_j$ for each gadget lie in the open interval $(\frac{1}{2}, \frac{1}{2} + \frac{1}{n})$, the probabilities $p_i$ lie in $(0, 1)$, and the cost of $1^*$ is positive. Furthermore, for all $k$ such that $1 \leq k \leq n$, $f(k) = O(\poly(n))$, and the MDP has size $O(\poly(n))$.
\end{theorem}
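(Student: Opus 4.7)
The plan is to show that on the modified MDP, Difference PI makes exactly the same sequence of policy switches as Simple PI does on the basic graph (\autoref{thm:simple}); the $2^n$ lower bound then transfers directly. Concretely, at every intermediate policy we verify that among all switchable min-vertices, the highest-numbered one uniquely maximizes $|\diff'(\cdot)|$, forcing Difference PI to pick the same vertex that Simple PI would, so that \autoref{lemma:switch1} and \autoref{lemma:switch2} apply verbatim to the $S_n \cdots S_1$ part of the policy (and the random vertices remain non-switchable throughout since only the min-vertices can be switched).

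To compare $|\diff'(k_1)|$ and $|\diff'(k_2)|$ for switchable $k_1 < k_2$, combine \autoref{lemma:diffbound} with the bounds $q_i, r_i \in (\tfrac{1}{2}, \tfrac{1}{2} + \tfrac{1}{n})$ to obtain the pointwise estimate $(\tfrac{1}{2})^{f(k)} |\diff(k)| \leq |\diff'(k)| \leq (\tfrac{1}{2} + \tfrac{1}{n})^{f(k)} |\diff(k)|$. Then telescope the recursive definition of $f$: since $(\tfrac{1}{2} + \tfrac{1}{n})^{f(j)} \leq (\tfrac{1}{2})^{f(j+1)}/3$ for each $k_1 \leq j < k_2$ and $(\tfrac{1}{2})^{f(j)} \leq (\tfrac{1}{2} + \tfrac{1}{n})^{f(j)}$, chaining the $k_2 - k_1$ inequalities gives
$$\frac{(\tfrac{1}{2} + \tfrac{1}{n})^{f(k_1)}}{(\tfrac{1}{2})^{f(k_2)}} \;\leq\; \Bigl(\tfrac{1}{3}\Bigr)^{k_2-k_1}.$$
Finally \autoref{lemma:diff} yields $|\diff(k_2)|/|\diff(k_1)| = \prod_{i=k_1+1}^{k_2} |p_i - S_{i-1}|$, whence
$$\frac{|\diff'(k_1)|}{|\diff'(k_2)|} \;\leq\; \frac{1}{3^{k_2-k_1} \prod_{i=k_1+1}^{k_2} |p_i - S_{i-1}|},$$
which is strictly less than $1$ provided each factor $|p_i - S_{i-1}|$ is at least $\tfrac{1}{3}$. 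This last condition is the main obstacle: making it rigorous requires the perturbed $p_i$ to stay in a closed subinterval of $(0,1)$ bounded away from the endpoints (so that both $p_i$ and $1 - p_i$ exceed $\tfrac{1}{3}$), which is the intended regime since $p_i$ starts at $\tfrac{1}{2}$ and is perturbed by at most $1/\poly(n)$. The factor of $\tfrac{1}{3}$ baked into the definition of $f$ is precisely the slack needed to absorb this loss.

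For the polynomial bounds on $f$ and MDP size, set $\beta = \tfrac{1}{2} + \tfrac{1}{n}$. A Taylor expansion yields $\log_\beta(\tfrac{1}{2}) = 1 + O(1/n)$ and $\log_\beta(\tfrac{1}{3}) = O(1)$, so the recurrence can be rewritten as $f(k) \leq f(k+1)\bigl(1 + O(1/n)\bigr) + O(1)$ with $f(n) = 0$; backwards induction on $k$ gives $f(k) \leq c \cdot (n-k) \cdot (1 + O(1/n))^{n-k} = O(n)$. The basic graph contributes $O(n)$ vertices, and each of the $2n$ inserted gadgets uses $O(f(k)) = O(n)$ intermediate nodes, for $O(n^2)$ vertices total, confirming the polynomial size claim.
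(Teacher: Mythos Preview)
Your approach is essentially identical to the paper's: both establish that $|\diff'(k)|$ is strictly increasing in $k$ (you compare arbitrary $k_1 < k_2$ via telescoping, the paper only the adjacent case $|\diff'(k+1)| > |\diff'(k)|$), so Difference PI coincides with Simple PI on this construction and the $2^n$ bound transfers. The gap you flag concerning the need for $|p_i - S_{i-1}| > \tfrac{1}{3}$ is real and is present in the paper's own proof as well, which simply asserts the inequality without deriving it from the stated hypothesis $p_i \in (0,1)$; as you correctly observe, it is harmless in the intended smoothed regime where $p_i$ starts at $\tfrac{1}{2}$ and is perturbed by an inverse polynomial.
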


\begin{proof} 
We start by showing that for all $j < k$, $|\diff'(k)| > |\diff'(j)|$. It suffices to show for all $k$ that $|\diff'(k+1)| > |\diff'(k)|$.
Recall that in the graph without the gadgets, $\diff(k+1) = (p_{k+1}-S_k)\diff(k)$. 
Thus in this new graph, $\diff'(k+1) = \left(\prod_{i=1}^{f(k)} q_i\right)(p_{k+1}-S_k)\diff(k)$ if $S_k = 1$, and $\diff'(k+1) = \left(\prod_{i=1}^{f(k)} r_i\right)(p_{k+1}-S_k)\diff(k)$ if $S_k = 0$.
$\diff'(k+1) \geq (\frac{1}{2})^{f(k+1)} \diff(k+1)$, and $\diff'(k) \leq (\frac{1}{2} + \frac{1}{n})^{f(k)}\diff(k)$. Therefore, 

$$\left|\frac{\diff'(k+1)}{\diff'(k)}\right| \geq \left| \frac{(\frac{1}{2})^{f(k+1)} \diff(k+1)}{(\frac{1}{2} + \frac{1}{n})^{f(k)} \diff(k)} \right| = \frac{(\frac{1}{2})^{f(k+1)}|p_{k+1} - S_k|}{(\frac{1}{2} + \frac{1}{n})^{f(k)}}$$
We can now plug in our formula for $f(\cdot)$ and show that this ratio is at least 1. 
\begin{align*}
    \frac{(\frac{1}{2})^{f(k+1)}}{(\frac{1}{2} + \frac{1}{n})^{f(k)}} &\geq \frac{(\frac{1}{2})^{f(k+1)}}{(\frac{1}{2} + \frac{1}{n})^{\log_{(\frac{1}{2} + \frac{1}{n})} ((\frac{1}{2})^{f(k+1)} \cdot (\frac{1}{3}))}}\\
    &= \frac{(\frac{1}{2})^{f(k+1)}}{(\frac{1}{2})^{f(k+1)} \cdot (\frac{1}{3})}
\end{align*}
Thus, since $\frac{1}{3} < |p_{k+1} - S_k|$, we have
\begin{align*}
    \left|\frac{\diff'(k+1)}{\diff'(k)}\right| &\geq \frac{(\frac{1}{2})^{f(k+1)}|p_{k+1}-S_k|}{(\frac{1}{2})^{f(k+1)} \cdot (\frac{1}{3})} \\
    &> \frac{(\frac{1}{2})^{f(k+1)}}{(\frac{1}{2})^{f(k+1)}}\\
    &= 1
\end{align*}
We have thus shown that $|\diff'(k)| > |\diff'(j)|$ for all $j < k$, as desired. To show that $f(1) = O(\poly(n))$, we observe that 
\begin{align*}
    f(k) &= \log_{\left(\frac{1}{2} + \frac{1}{n}\right)}{\left(\left(\frac{1}{2}\right)^{f(k+1)} \cdot \frac{1}{3}\right)}\\
    &= \frac{\log((\frac{1}{2})^{f(k+1)} \cdot \frac{1}{3})}{\log(\frac{1}{2} + \frac{1}{n})}\\
    &= \frac{f(k+1) \log(\frac{1}{2}) + \log(\frac{1}{3})}{\log(\frac{1}{2} + \frac{1}{n})}\\
    &= f(k+1) \frac{\log \frac{1}{2}}{\log(\frac{1}{2} + \frac{1}{n})} + O(1)
\end{align*}
We can analyze the coefficient $\frac{\log \frac{1}{2}}{\log(\frac{1}{2} + \frac{1}{n})}$.
$$\frac{\log \frac{1}{2}}{\log(\frac{1}{2} + \frac{1}{n})} = \frac{\log \frac{1}{2}}{\log \frac{1}{2} + \log(1 + \frac{2}{n})} = \left(1 - \log\left(1 + \frac{2}{n}\right) \right)^{-1}$$
Since $\frac{x}{1+x} \leq \log(1 + x) \leq x$ for $x > -1$, we have that $\frac{1}{n} \leq \log(1 + \frac{2}{n}) \leq \frac{2}{n}$ for $n > 1$. Plugging this in, we have
$f(k) = \frac{f(k+1)}{1 - \Theta(\frac{1}{n})} + O(1)$.
Thus, $f(1) = O(\poly(n))$.
Since $f(k)$ is monotonically non-increasing, $f(k) = O(\poly(n))$ for all $k$ such that $1 \leq k \leq n$.
\end{proof}

\section{Connections to the Simplex Algorithm} 
Finding an optimal policy in an MDP can be formulated as a linear program (LP). First, we present a matrix encoding of the MDP given by Hansen in \cite[Definition 2.1.2]{hansen2012worst}. Let the MDP have $N$ states and $m$ actions. Let $\bf{e} \in \mathbb{R}^N$ be an all-one vector. Let $\bf{J}, \bf{P} \in \mathbb{R}^{m \times N}$. $\bf{J}$ represents the adjacency matrix, where for each action $a$, $\bf{J}_{a,i} = 1$ if $a \in A_i$, and $\bf{J}_{a,i} = 0$ otherwise. $\bf{P}$ represents the probabilities associated with the various actions. $\bf{P}_{a,i}$ is the probability of ending up in state $i$ from action $a$. Let $\bf{c} \in \mathbb{R}^m$ represent the rewards of the actions. That is, $\bf{c}_a$ is the reward of taking action $a$.

We can solve the following linear program to obtain the optimal value $y_i$ of each state $i$:
\begin{align*}
    &\text{minimize } \bf{e}^T \bf{y}\\
    &\text{subject to } (\bf{J} - \bf{P})\bf{y} \geq \bf{c}
\end{align*}
This is equivalent to the dual LP in \cite{friedmann2011subexponential}.

Several switching rules for policy iteration have been shown to be equivalent to pivot rules for the simplex algorithm. We show that two of our lower bounds, for Simple Policy and Improvement and Difference Policy Improvement, imply equivalent smoothed lower bounds for the simplex algorithm using Bland's and Dantzig's pivot rules respectively.

We note that the perturbations in our MDP setting translate to non-standard perturbations in the simplex setting. In the celebrated result by Spielman and Teng that the simplex method has polynomial smoothed complexity \cite{spielman2004smoothed}, all entries in $\bf{A}$ and $\bf{y}$ are perturbed in a linear program formulated as follows:
\begin{align*}
    &\text{maximize } \bf{z}^T \bf{x}\\
    &\text{subject to } \bf{A}\bf{x} \leq \bf{y}
\end{align*}
In our MDP formulation, this would mean perturbing $(\bf{J} - \bf{P})$ and $\bf{c}$.
With such perturbations, deterministic actions could become probabilistic, and perturbations of zero entries in $\bf{J}$ could create new edges between states. 

Our smoothed MDP lower bounds translate to semi-smoothed simplex lower bounds, where weights can be perturbed but the general structure must be preserved (e.g., zero entries stay zero). 
More precisely, we do not perturb the adjacency matrix $\bf{J}$ at all.
We perturb only the nonzero and non-one entries of $\bf{P}$. For each row $a$ representing a random action, let $i$ be the first state with nonzero $\bf{P}_{a,i}$. We define $\bf{P}_{a,i} = 1 - \sum_{\substack{j \in A_i \\ j \neq i}} \bf{P}_{a,j}$. We perturb all $\bf{P}_{a,j}$ for $j \in A_i$ and $j \neq i$. This ensures that the probabilities associated with each random edge sum to 1.
We perturb all nonzero entries of $\bf{c}$.
We say that an LP parameterized by $\bf{e}, \bf{J}, \bf{P}, \bf{c}$ that is perturbed in this way is \textit{MDP-smoothed}.

Hansen shows that Bland's pivoting rule is equivalent to making the first improving switch according to some fixed permutation of the edges \cite[Section 5.8]{hansen2012worst}. This is exactly the simple policy improvement algorithm, where the edges are ordered according to the vertices' numbers.
Thus, our result in \autoref{subsec:simple} implies a $2^n$ lower bound on the number of iterations for the simplex algorithm using Bland's rule, even when the probabilities and rewards can be perturbed.

\begin{theorem}
The worst-case MDP-smoothed complexity of the simplex algorithm with Bland's pivoting rule for LPs with dimension $N$, number of constraints $O(N)$, and allowed perturbations of up to $\frac{1}{2}$, is $2^{\Omega(N)}$. 
\end{theorem}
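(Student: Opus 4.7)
The plan is to build the LP directly from the basic graph used in \autoref{thm:simple} and run Bland's rule on it, relying on Hansen's translation to interpret Bland's pivots as Simple PI switches. First I would encode the basic graph on $n$ states (the construction of \autoref{subsec:simple}) as an LP in the form $\min \mathbf{e}^T \mathbf{y}$ s.t.\ $(\mathbf{J}-\mathbf{P})\mathbf{y}\ge\mathbf{c}$, using the matrix encoding described earlier in the section. In this encoding the rows of $\mathbf{J}$ and $\mathbf{P}$ are indexed by actions; the sink cost becomes a nonzero entry of $\mathbf{c}$ and each probabilistic vertex $k'$ contributes a row of $\mathbf{P}$ with two nonzero entries that sum to $1$. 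The dimension $N$ of the LP equals the number of states, which is $O(n)$, and the number of constraints (rows of $\mathbf{J}-\mathbf{P}$) is the total number of actions, also $O(n)$.

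Next I would verify that the MDP-smoothed perturbation model, restricted to this LP, falls inside the perturbation envelope already handled by \autoref{thm:simple}. By definition, MDP-smoothing leaves $\mathbf{J}$ untouched (so no new edges appear and deterministic actions stay deterministic), perturbs the nonzero/non-one entries of $\mathbf{P}$ while maintaining that each probabilistic row sums to $1$ (using the ``$\mathbf{P}_{a,i}=1-\sum_{j\ne i}\mathbf{P}_{a,j}$'' convention), and perturbs the nonzero entries of $\mathbf{c}$. Translating back to the MDP, this is exactly the statement that the probabilities $p_1,\dots,p_n$ remain in $(0,1)$ and the cost $c$ of the sink $1^*$ remains positive. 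Any perturbation of magnitude up to $1/2$ keeps each $p_i\in(0,1)$ and keeps $c$ positive (since the unperturbed values are $1/2$ and $1$ respectively), so the hypotheses of \autoref{thm:simple} are satisfied for every admissible MDP-smoothed instance.

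Then I would invoke Hansen's equivalence (cited from \cite[Section 5.8]{hansen2012worst}): running the simplex method on this LP with Bland's pivoting rule corresponds to running Simple PI on the MDP, where the fixed edge ordering required by Bland's rule matches the fixed vertex ordering used by Simple PI (highest-numbered switchable vertex first, which is just a relabeling of the permutation). The starting basis corresponding to the all-zero policy $00\cdots 0$ serves as the initial vertex of the simplex run. Applying \autoref{thm:simple}, every MDP-smoothed perturbation of the instance forces Simple PI, and hence Bland-rule simplex, to traverse at least $2^n - 1 = 2^{\Omega(N)}$ basic feasible solutions before reaching the optimum. Since $N=O(n)$ and $n=\Omega(N)$, this gives the claimed $2^{\Omega(N)}$ lower bound on the worst-case MDP-smoothed complexity.

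The main obstacle is essentially a bookkeeping one: checking that the correspondence between pivots of Bland's rule on the LP and switches of Simple PI on the MDP is preserved under perturbation, i.e.\ that no degeneracy or tie-breaking issue created by the perturbation lets simplex deviate from the Simple-PI trajectory. This is handled by the fact that the proof of \autoref{thm:simple} already controls the signs of all the relevant $\diff(k)$ quantities for every admissible perturbation, so the sequence of switchable vertices (and hence of Bland-selected pivots) is uniquely determined on the entire perturbation set. Everything else is a direct translation from the MDP side to the LP side via Hansen's encoding.
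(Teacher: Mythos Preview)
Your proposal is correct and follows essentially the same approach as the paper: encode the basic graph as an LP, fix the probabilities at $1/2$ and the sink cost at $1$ so that MDP-smoothing with radius $1/2$ keeps all $p_i\in(0,1)$ and the cost positive, invoke Hansen's equivalence between Bland's rule and Simple PI, and apply \autoref{thm:simple}. The paper's own proof is slightly more explicit about the vertex count ($N=2n+3$), but otherwise your argument and the paper's coincide.
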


\begin{proof}
We consider the LP formulation of the basic graph with $n$ states from \autoref{fig:basicgraph} with a cost of 1 incurred upon reaching the sink $1^*$. We use its equivalent maximization MDP as formulated in this section, translating the cost $1$ of the sink $1^*$ to a negative reward $-1$. We fix $p_k = \frac{1}{2}$ for every probability. 
Structurally-preserving perturbations of up to $\frac{1}{2}$ yield probabilities in the open interval $(0,1)$ and a positive cost (or negative reward) of the sink $1^*$, while preserving all other aspects of the MDP. 

Thus, by \autoref{thm:simple}, simple policy iteration requires $2^n$ iterations in the worst case.
While the basic graph has $n$ states, it has $2n + 3$ vertices including the sinks and $0'$. Each state has at most 2 actions. Thus, letting $N = 2n + 3$, the equivalent LP has dimension $N$ and $O(N)$ constraints, and the running time is $2^{\Omega(N)}$.
\end{proof}

There is in fact an exponential smoothed lower bound for Bland's pivoting rule under all zero-preserving perturbations of $(\bf{J} - \bf{P})$ and $\bf{c}$, rather than our more structured MDP smoothing. 
Spielman notes in a lecture\footnote{Lecture notes: \url{http://www.cs.yale.edu/homes/spielman/BAP/lect14.pdf}} that the Klee-Minty cube is robust under zero-preserving perturbations, yielding an exponential lower bound.

Fearnley and Savani \cite{fearnley2015complexity} show that Dantzig's pivot rule corresponds to policy iteration where the action with the greatest appeal is switched. Their definition of appeal is exactly our $\diff$. Thus, the simplex algorithm with Dantzig's pivot rule is equivalent to the difference policy improvement algorithm.
Our result in \autoref{subsec:difference} implies an equivalent semi-smoothed result for the simplex algorithm with Dantzig's pivot rule.

\begin{theorem}
The worst case MDP-smoothed complexity of the simplex algorithm with Dantzig's pivot rule for LPs with dimension $N$, number of constraints $O(N)$, and allowed perturbations of up to $\frac{1}{\Omega(N)}$, is $2^{\Omega(\sqrt{N})}$.
\end{theorem}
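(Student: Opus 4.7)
The plan is to transfer the difference PI lower bound of \autoref{thm:difference} into an LP/simplex statement via the Fearnley--Savani correspondence, mirroring the way \autoref{thm:simple} was turned into the Bland-rule bound in the preceding theorem. Concretely, Fearnley and Savani \cite{fearnley2015complexity} showed that Dantzig's rule on the LP formulation of an MDP is exactly Difference Policy Improvement (their ``appeal'' equals our $\diff$). So any lower bound on the number of iterations of Difference PI translates directly into a lower bound on the number of Dantzig pivots, provided the construction still satisfies the required parameter ranges after an MDP-smoothed perturbation of the LP.

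I would start from the MDP of \autoref{thm:difference}: the basic graph on $n$ min-vertices with a gadget $g_1(f(k))$ inserted on each of the two edges out of every min-vertex $k$, and a reward of $-1$ on edges into $1^*$. Pick nominal values $p_k = 1/2$, $q_j = r_j = 1/2 + 1/(2n)$, and a small margin on the cost. Then MDP-smoothed perturbations of magnitude at most $1/(2n)$ keep every probability in the open interval $(0,1)$ (and specifically keep the gadget probabilities in $(1/2,1/2+1/n)$) and keep the cost of $1^*$ positive, so the hypotheses of \autoref{thm:difference} hold for every perturbed instance and Difference PI performs at least $2^n$ iterations. Writing the perturbed MDP as the LP $\min \mathbf{e}^T \mathbf{y}$ s.t.\ $(\mathbf{J}-\mathbf{P})\mathbf{y}\ge \mathbf{c}$, this is precisely an MDP-smoothed perturbation of the LP in the sense defined in this section: $\mathbf{J}$ is untouched, nonzero non-one entries of $\mathbf{P}$ and the nonzero reward are perturbed, and the rows of $\mathbf{P}$ are renormalized.

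The key counting step is to bound the total number of vertices $N$ in terms of $n$. From the explicit recursion in \autoref{thm:difference}, $f(k) = f(k+1)\cdot \frac{\log(1/2)}{\log(1/2+1/n)} + O(1)$, and the coefficient equals $(1-\log(1+2/n))^{-1} = 1 + \Theta(1/n)$. Iterating with $f(n)=0$ gives $f(1) \le C\cdot\frac{(1+c/n)^{n-1}-1}{c/n}=O(n)$, and $f(k)$ is monotone. Since there are $2n$ gadgets of size $O(n)$ plus $O(n)$ original vertices, $N = O(n^2)$, hence $n = \Omega(\sqrt{N})$. Each state carries at most two actions, so the resulting LP has dimension $N$ and $O(N)$ constraints. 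Therefore the number of Dantzig pivots is at least $2^n = 2^{\Omega(\sqrt{N})}$, and the perturbation magnitude tolerated, $1/(2n) = \Theta(1/\sqrt{N})$, is certainly at least $1/\Omega(N)$.

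The routine part is the LP encoding, which is identical to the one used for Bland's rule. The only real point of care is the counting argument for the gadget sizes: the paper only claims $f(k)=O(\mathrm{poly}(n))$ in \autoref{thm:difference}, but to land the bound $2^{\Omega(\sqrt{N})}$ we genuinely need the sharper $f(1)=O(n)$, which the recursion does give once one expands the logarithms. That is the one step I would write out carefully; everything else is either stated as a previous theorem or is a direct translation through the MDP-to-LP correspondence.
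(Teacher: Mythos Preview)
Your proposal is correct and follows essentially the same route as the paper: instantiate the gadgetized basic graph of \autoref{thm:difference} with centered parameters, note that MDP-smoothed perturbations of size $1/(2n)$ keep all hypotheses of \autoref{thm:difference} intact, invoke the Fearnley--Savani correspondence to equate Difference PI with Dantzig's rule, and count vertices to get $N=O(n^2)$. Your explicit derivation that $f(1)=O(n)$ is in fact exactly what is needed to conclude $N = 2f(1)n + O(n) = O(n^2)$; the paper's proof states ``$f(1)=O(n^2)$'' at that point, which appears to be a slip, so your care there is warranted and fills a small gap.
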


\begin{proof}
Consider the LP equivalent to the basic graph with $n$ states and gadgets $g(f(v))$ added between each min-vertex $v$ and its two children.
For each gadget, we set the probabilities $q_i$ and $r_i$ to $\frac{1}{2} + \frac{1}{2n}$. For each random vertex $k$, we set the probability $p_k$ to $\frac{1}{2} + \frac{1}{2n}$. We set the cost of $1^*$ to 1.
Perturbations of up to $\frac{1}{2n}$ ensure that the perturbed $q_i$, $r_i$, $p_k$, and cost of $1^*$ satisfy the conditions in \autoref{thm:difference}. 
Thus, difference policy iteration takes $2^n$ iterations.

The number of vertices (and thus dimension of the LP) is $N \leq 2f(1)n + n + 3$, since each min-vertex has two gadgets with at most $f(1)$ vertices, there are $n +1$ random vertices, and there are two sinks.
Since $f(1) = O(n^2)$, $N = O(n^2)$.
The number of actions is at most $2N$, since the construction is a 2-action MDP. 
Thus, the number of constraints is $O(N)$, and for any perturbations up to $\frac{1}{n} \geq \frac{1}{\Omega(N)}$, the number of iterations is $2^{\Omega(\sqrt{n})}$.
\end{proof}

\section{Conclusion}
Many greedy algorithms have fast runtime in practice, yet exponential lower bounds in the worst case.
In recent years, smoothed analysis has emerged as a popular way to reconcile this gap between practical efficiency and theoretical complexity -- a smoothed upper bound shows that the hard worst-case instances are sparse enough that small random perturbations yield polynomial runtime in expectation, suggesting that these hard instances rarely appear in practice.
Policy iteration, as one such algorithm with this gap between theory and practice, may seem like a fitting candidate for a smoothed upper bound. 
However, our results show the contrary: under a natural smoothed model, several common variants of policy iteration have subexponential or exponential lower bounds. 

Our main and most involved result is that Howard's PI (Greedy PI) requires at least subexponentially many iterations even in the smoothed model, even when the perturbations are chosen arbitrarily (rather than randomly) within a certain inverse polynomial range.
As a corollary, we obtain an exponential lower bound on the number of iterations required by Howard's PI under the reachability criterion in the worst case, without perturbations.
We also extend results from \cite{melekopoglou1994complexity} to show that Simple PI and Topological PI take at least exponential time even under very large perturbations and even for reachability MDPs ; we also show that Difference PI takes at least subexponential time under inverse polynomial perturbations.

Several interesting open questions are raised by these results.
One natural direction for future work is to investigate where such lower bounds are not possible. Which perturbations yield polynomial expected runtime -- in our model, do constant perturbations suffice? While we focused on the total reward, average reward, and reachability criteria, the discounted reward criterion is also popular. We suspect that similar results hold for discount rates $\gamma$ that are exponentially close to 1, since the behavior of such such discounted MDPs is similar to the total reward.
We have not examined if our robust construction
applies in this case. On the other hand, if $1-\gamma$ is at least inverse polynomial then we know that Greedy PI converges in polynomial time by the results of \cite{Ye,HansenMZ}.

For the reachability criterion, we showed a lower bound for Howard's PI only in the worst case. Can our result for the reachability criterion be extended to the smoothed and/or robust model? 
In the case of several PI variants that switch a single state in each iteration, Simple PI, Topological PI, and Difference PI, the bounds hold for reachability MDPs in the robust (and smoothed) model.  Are there similar smoothed/robust lower bounds for other single-switch policy iteration variants, such as the Random-Facet and Random-Edge switching rules?

\section{Acknowledgments}
This research was supported in part by NSF Grants CCF-2107187, CCF-1763970, and CCF-2212233, by JPMorgan Chase \& Co, by LexisNexis Risk Solutions, and by the Algorand Centres of Excellence programme managed by Algorand Foundation. Any opinions, findings, and conclusions or recommendations expressed in this material are solely those of the authors.

\newpage
\bibliographystyle{alpha}
\bibliography{sources}

\end{document}